\newtheorem{lemma}{Lemma}
\newtheorem{definition}{Definition}
\newtheorem{coro}{Corollary}
\DeclareMathAlphabet{\mathpzc}{OT1}{pzc}{m}{it}
\renewcommand{\baselinestretch}{2}
\def\calW{{\mathcal W}}
\def\ba{{\mathbf a}}
\def\bb{{\mathbf b}}
\def\be{{\mathbf e}}
\def\bA{{\mathbf A}}
\def\bB{{\mathbf B}}
\def\bI{{\mathbf I}}
\def\bK{{\mathbf K}}
\def\bT{{\mathbf T}}
\def\diag{{\rm diag}}
\def\tr{{\rm tr}}
\def\b0{{\mathbf 0}}
\begin{document}

{\title{\vspace{-1cm}On the Secrecy Rate Region of a Fading \\\vspace{-0.7cm}Multiple-Antenna Gaussian Broadcast Channel with\\\vspace{-0.7cm} Confidential Messages and Partial CSIT}
}

\author{Pin-Hun Lin\hspace{1.8cm} Chien-Li Su\hspace{1.8cm} Hsuan-Jung Su\\

\normalsize{\it pinhsunlin@gmail.com} \hspace{0.3cm} {\it theone21cent@gmail.com\mbox{    }     }
\vspace{0.5cm}{\it hjsu@cc.ee.ntu.edu.tw}\\
Graduate Institute of Communications Engineering, Department of Electrical Engineering, \\
National Taiwan University, Taipei, Taiwan 10617\\
}
\maketitle \thispagestyle{empty} \vspace{-15mm}
{\renewcommand{\baselinestretch}{2}

\begin{abstract}
In wiretap channels the eavesdropper's channel state information
(CSI) is commonly assumed to be known at transmitter, fully or
partially. However, under perfect secrecy constraint the
eavesdropper may not be motivated to feedback any correct CSI. In
this paper we consider a more feasible problem for the transmitter
to have eavesdropper's CSI. That is, the fast fading
multiple-antenna Gaussian broadcast channels (FMGBC-CM) with
confidential messages, where both receivers are legitimate users
such that they both are willing to feedback accurate CSI to
 maintain their secure transmission, and not to be eavesdropped
 by the other. We assume that only the statistics of the channel
 state information are known by the transmitter. We first show the
  necessary condition for the FMGBC-CM not to be degraded to the
  common wiretap channels.
 Then we derive the achievable rate region for the FMGBC-CM where the channel input covariance matrices and the
    inflation factor are left unknown and to be solved. After
 that we provide an analytical solution to the channel input covariance matrices. We also propose an iterative algorithm to
  solve the channel input covariance matrices and the inflation
  factor. Due to the
    complicated rate region formulae in normal SNR, we resort to low SNR analysis to investigate
    the characteristics of the channel.
   Finally, numerical examples show that under perfect secrecy constraint
   both users can achieve positive rates simultaneously, which verifies
   our derived necessary condition. Numerical results also elucidate the
   effectiveness of the analytical solution and proposed algorithm of the channel input covariance matrices and the inflation factor under different conditions.
\end{abstract}

In wiretap channels the eavesdropper's channel state information
(CSI) is commonly assumed to be known at transmitter, fully or
partially. However, under perfect secrecy constraint the
eavesdropper may not be motivated to feedback any correct CSI. In
this paper we consider a more feasible problem for the transmitter
to have eavesdropper's CSI. That is, the fast fading
multiple-antenna Gaussian broadcast channels (FMGBC-CM) with
confidential messages, where both receivers are legitimate users
such that they both are willing to feedback accurate CSI to
 maintain their secure transmission, and not to be eavesdropped
 by the other. We assume that only the statistics of the channel
 state information are known by the transmitter. We first show the
  necessary condition for the FMGBC-CM not to be degraded to the
  common wiretap channels.
 Then we derive the achievable rate region for the FMGBC-CM. After
 that we propose an iterative algorithm to
  solve the channel input covariance matrices and the inflation factor.
   Finally, numerical examples show that under perfect secrecy constraint
   both users can achieve positive rates simultaneously, which verifies
   our derived necessary condition. Numerical results also elucidate the
   effectiveness of the proposed algorithm in convergence speed.

\section{ Introduction}
Traditionally, the security of data transmission has been ensured
by the key-based enciphering. However, for secure communication in
large-scale wireless networks, the key distributions and
managements may be challenging tasks
\cite{Liang_same_marginal}\cite{Secureconnect}. The physical-layer
security introduced in
\cite{Wyner_wiretap}\cite{csiszar1978broadcast} is appealing due
to its keyless nature. One of the fundamental setting for the
physical-layer security is the wiretap channel. In this channel, the transmitter wishes to send messages securely to a legitimate
receiver and to keep the eavesdropper as ignorant of the message
as possible. Wyner first characterized the secrecy capacity of the
discrete memoryless wiretap channel \cite{Wyner_wiretap}. The
secrecy capacity is the largest rate communicated between the
source and legitimate receiver with the eavesdropper knowing no
information of the messages. Motivated by the demand of high data
rate transmission, the multiple antenna systems with security
concern were considered by several works. In \cite{Shafiee_secrecy_2_2_1_J}, Shafiee and Ulukus
first proved the secrecy capacity of a Gaussian channel with two-input, two-output,
single-antenna-eavesdropper. Then the authors of
\cite{Khisti_MIMOME,Oggier_MIMOME, Liu_MIMO_wiretap} extended the
secrecy capacity result to the Gaussian multiple-input
multiple-output, multiple-antenna-eavesdropper channel. On
the other hand, the impacts of fading channels on the secure transmission were
considered in \cite{Liang_fading_secrecy}. Note that \cite{Shafiee_secrecy_2_2_1_J,Khisti_MIMOME,Oggier_MIMOME,
Liu_MIMO_wiretap,Liang_fading_secrecy} require full channel state
information at the transmitter (CSIT). When there is only partial CSIT,
several works considered the secure transmission under this condition \cite{Khisti_MISOME,Goel_AN,Pulu_ergodic,Li_fading_secrecy_j,
gopala2008secrecy}. The artificial noise (AN) assisted secure
beamforming is a promising technique for the partial CSIT cases, where
in addition to the message-bearing signal, an AN is intentionally
transmitted to disrupt the eavesdropper's reception
\cite{Khisti_MISOME}\cite{Goel_AN}.  Indeed, adding AN in transmission is
crucial in increasing the secrecy rate in fading wiretap channels. However, the covariance
matrices of AN in \cite{Goel_AN}\cite{Khisti_MISOME} is
heuristically selected without optimization, and the resulting
secrecy rate is not optimal. In \cite{Pulu_ergodic}, the secure transmission under fast fading channels with only statistical CSIT and without AN is considered.
Although the secrecy capacity for single antenna system with
partial CSIT was found in \cite{gopala2008secrecy}, the decoding
latency of the transmission scheme proposed in
\cite{gopala2008secrecy} is much longer than the common fast fading channels, e.g., \cite{Khisti_MISOME,Goel_AN,Pulu_ergodic,Li_fading_secrecy_j}, and may be unacceptable in practice.

However, the assumptions of wiretap channels with full or partial CSIT may not be practical. That is, the eavesdroppers needs to feedback the perfect/statistical CSI to transmitter or the transmitter needs to know this CSI by some means. On the contrary, the eavesdroppers may not be motivated to feedback this information. Furthermore, the eavesdroppers may feedback the wrong CSI to destroy the secure transmission. Thus in this paper, we consider the multiple antenna Gaussian broadcast channel with confidential messages (MGBC-CM) \cite{Liu_MISO} under fast fading channels (abbreviated as \textit{FMGBC-CM}). In the FMGBC-CM, both receivers are legitimate users such that they both are willing to feedback accurate CSI to maintain their secure transmission, and not to be eavesdropped by the other user. In the considered FMGBC-CM, we
assume that the transmitter only has the statistics of the channels from both receivers. This is to taking the practical issues into account, such as the limited bandwidth of the feedback channels or the speed of the channel estimation at the receivers. And to the best knowledge of the authors, this problem has not been considered in the literature.

The main contribution of this paper is to provide an achievable rate region with explicit channel input covariance matrices of both users. An iterative algorithm is proposed to solve the inflation factor of the linear assignment Gel'fand-Pinsker coding (LA-GPC) \cite{Gelfand_Pinsker} used in the adopted transmission scheme. To accomplish these, we first classify the non-trivial cases such that the FMGBC-CM is not degraded as the conventional wiretap channel, i.e., both users have positive secure transmission rates, by the same marginal property. We then prove that the MISO GBC-CM with identical and independently distributed (i.i.d.) Rayleigh fading channels is degraded as the MISO Gaussian wiretap channel. Thus in this paper we consider the non-i.i.d. Rayleigh fading MISO and Rician fading MISO BC-CM, respectively.

The rest of the paper is organized as follows. In Section
\ref{Sec_system_model} we introduce the considered system model. We then provide the necessary conditions for the FMGBC-CM to be not degraded as a conventional wiretap channel in Section \ref{sec_necessary_condition}.
In Section \ref{Sec_rate_region_MGBC_CM}, we derive the achievable secrecy rate region of the FMGBC-CM. An achievable selection of the channel input covariance matrices and an iterative scheme for solving the inflation factor are also provided to calculate the explicit rate region. In Section \ref{Sec_low_SNR}, we demonstrate the secrecy rate region in low SNR regime and the optimal signaling. In Section \ref{Sec_simulation} we illustrate the
numerical results. Finally, Section \ref{Sec_conclusion}
concludes this paper.

\section{System model}\label{Sec_system_model}

In this paper we consider the FMGBC-CM system as shown in Fig. \ref{Fig_sys},
where the transmitter has $n_T$ antennas and the receiver 1 and 2 each has single antenna. The received
signals at the $j$th receivers, $j=$ 1 or 2, can be respectively represented as \footnote{In this paper, lower and upper case bold alphabets denote vectors
and matrices, respectively. Italic upper alphabets with and without boldface denote random vectors and variables, respectively. The
$i$th element of vector $\ba$ is denoted by $a_i$. The
superscript $(.)^H$ denotes the transpose complex conjugate. The
superscript $a^c$ denotes the complement of $a$.
$|\mathbf{A}|$ and $|a|$ represent the determinant of the square
matrix $\mathbf{A}$ and the absolute value of the scalar variable
$a$, respectively. A diagonal matrix whose diagonal entries are
$a_1 \ldots a_k$ is denoted by $diag(a_1 \ldots a_k)$. The trace
of $\mathbf{A}$ is denoted by $\tr(\mathbf{A})$. We define $C(x)
\triangleq \log(1+x)$ and $(x)^+\triangleq\max\{0,\,x\}$. The
mutual information between two random variables is denoted by
$I(;)$. $\mathbf{I}_n$ denotes the $n$ by $n$ identity matrix.
$\mathbf{A}\succ 0$ and $\mathbf{A}\succeq 0$ denote that
$\mathbf{A}$ is a positive definite and positive semi-definite
matrix, respectively.}
\begin{align}
Y_{j,k}={\bm{H}_{j,k}}^H\bm{X}_k+N_{j,k}, \label{EQ_Eve}
\end{align}
where $\bm{X}_k\in\mathds{C}^{n_T\times 1}$ is the transmit vector,
$k$ is the time index, respectively, $\bm{H}_{j,k}$ denotes the
fading vector channels from the transmitter to the $j$th receiver,
and $N_{j,k}$ is the circularly symmetric complex additive white
Gaussian noise with variances one at receiver $j$, respectively. In
this system, we assume that only the statistics of both channels are
known at transmitter, to take the practical system design issues
into account, such as the limited bandwidth of the feedback channels
or the speed of the channel estimation at the receivers. We also
assume that { both the receivers perfectly know all channel
vectors}. Without loss of generality, in the following we omit the
time index to simplify the notation. We consider the power
constraint as
\[
\mbox{tr}(E[\bm{X}\bm{X}^H])\leq P_T.
\]

The perfect secrecy and secrecy capacity are defined as follows.
Consider a $(2^{nR_1}, 2^{nR_2},n)$-code with an encoder that maps
the message $W_1\in \calW_1=\{1,2,\ldots, 2^{nR_1}\}$ and
$W_2\in\calW_2=\{1,2,\ldots, 2^{nR_2}\}$ into a length-$n$ codeword,
and receiver 1 and receiver 2 map the received sequence $Y_1^n$ and
$Y_2^n$ (the collections of $Y_1$ and $Y_2$, respectively, over code
length $n$) from the MISO channel to the estimated message $\hat
W_1\in\calW_1$ and $\hat W_2\in\calW_2$, respectively. Since
$\bm{H}_1$ and $\bm{H}_2$ are both known at receiver 1 and 2,
respectively, we can treat them as the channel outputs similar to
\cite{caire_channel_with_SI}. We then have the following definition
of secrecy capacity region.

\begin{definition}[Secrecy capacity region] \label{Def_Perfect}
 {\it Perfect secrecy with rate pair $(R_1,\,R_2)$
is achievable if, for any positive $\varepsilon$ and $\varepsilon'$, there
exists a sequence of $(2^{nR_1},2^{nR_2}, n)$-codes and an integer $n_0$ such
that for any $n>n_0$
\begin{align}\label{eq_equivocation_given_h}
&I(W_j;Y_{j^c}^n,\bm{H}_{j^c}^n)/n<\varepsilon, \mathrm{and
}\;\;{\rm Pr}(\hat{W}_j\neq W_j) \leq \varepsilon',
\end{align}
where $\bm{H}_1^n$ and $\bm{H}_2^n$ are the collections of $\bm{H}_1$ and
$\bm{H}_2$ over code length $n$, respectively. The {\rm secrecy capacity region} is the closure of the set of all achievable
rate pairs $(R_1,\,R_2)$.}
\end{definition}

Note that as shown in the footnote, italic upper alphabets with and without boldface denote random vectors and variables, respectively. By treating $\bm{H}_1$ and $\bm{H}_2$ as the channel outputs, we can extend the achievable rate region of the discrete memoryless MBC-CM from \cite{Liu_MISO} as
\[
(R_1,R_2)\in \mbox{co}\left\{\underset{\varpi\in\Omega}\bigcup \mathcal{R}_I(\varpi)\right\},
\]
where co$\{.\}$ denotes the convex closure;
$\mathcal{R}_{I}(\varpi)$ denotes the union of all $(R_1,R_2)$
satisfying { \begin{align} R_j &\leq
(I(\bm{V}_j;Y_j,\bm{H}_j,\bm{H}_j^c)-I(\bm{V}_j;Y_j^c,\bm{H}_j,\bm{H}_j^c,\bm{V}_j^c))^+,\label{EQ_R2_DMC}
\end{align}}
for any given joint probability density $\varpi$ belonging to the class of joint probability densities\\
 $p(\bm{v}_1,\bm{v}_2,\bm{x},y_1,y_2,\bm{h}_1,\bm{h}_2)$, denoted by  $\Omega$, that factor as $p(\bm{v}_1,\bm{v}_2)p(\bm{x}|\bm{v}_1,\bm{v}_2)p(y_1,y_2,\bm{h}_1,\bm{h}_2|\bm{x})$; $\bm{V}_1$ and $\bm{V}_2$ are the auxiliary random vectors for user 1 and 2, respectively.\\
Note that we can further rearrange the right hand side (RHS) of
\eqref{EQ_R2_DMC} with $j=1$ as {
\begin{align}
R_1 \overset{(a)}\leq &(I(\bm{V}_1;Y_1,\bm{H}_1,\bm{H}_2)-I(\bm{V}_1;\bm{V}_2,\bm{H}_1,\bm{H}_2)-I(\bm{V}_1;Y_2|\bm{V}_2,\bm{H}_1,\bm{H}_2))^+\notag\\
\overset{(b)}=&(I(\bm{V}_1;Y_1,\bm{H}_1,\bm{H}_2)-I(\bm{V}_1;\bm{V}_2)-I(\bm{V}_1;Y_2|\bm{V}_2,\bm{H}_1,\bm{H}_2))^+\notag\\
\overset{(c)}=&(I(\bm{V}_1;Y_1|\bm{H}_1,\bm{H}_2)+I(\bm{V}_1;\bm{H}_1,\bm{H}_2)-I(\bm{V}_1;Y_2|\bm{V}_2,\bm{H}_1,\bm{H}_2)-I(\bm{V}_1;\bm{V}_2))^+\notag\\
\overset{(d)}=&(I(\bm{V}_1;Y_1|\bm{H}_1,\bm{H}_2)-I(\bm{V}_1;Y_2|\bm{V}_2,\bm{H}_1,\bm{H}_2)-I(\bm{V}_1;\bm{V}_2))^+,\label{EQ_R1_DMC2}
\end{align}
} where (a) is by applying the chain rule of mutual information to
the second term on the RHS of \eqref{EQ_R2_DMC}; (b) is due to
$\bm{V}_1$ and $\bm{V}_2$ are independent of $\bm{H}_2$; (c) is
again applying the chain rule to the first term; (d) is due to the
fact that since there is only statistical CSIT, $\bm{V}_1$ is
independent of $\bm{H}_1$. Thus $I(\bm{V}_1;\bm{H}_1)=0$. Similarly,
we can rearrange $R_{2}$ as
\begin{equation}
R_2 \leq
(I(\bm{V}_2;Y_2|\bm{H}_1,\bm{H}_2)-I(\bm{V}_2;Y_1|\bm{V}_1,\bm{H}_1,\bm{H}_2)-I(\bm{V}_1;\bm{V}_2))^+.\label{EQ_R2_DMC2}
\end{equation}

\section{Conditions for non-degraded FMGBC-CM}\label{sec_necessary_condition}
{ In this section, we explain our first main result of this paper
which helps to characterize the secrecy rate region, i.e., the
necessary conditions to exclude the cases that one of the two
receivers of FMGBC-CM always has zero rate. Note that if such cases
happens, the FMGBC-CM degrades as a normal fast fading wiretap
channel. And the achievable rates or capacities of fast fading
wiretap channels with different degree of knowledge of CSIT are
widely discussed, e.g.,
\cite{Goel_AN}\cite{Li_fading_secrecy_j}\cite{Pulu_ergodic}\cite{gopala2008secrecy}\cite{SC_TIFS}.
Note that with only the statistical CSIT of both main and
eavesdropper's channel, the capacity and optimal signaling of the
fast fading wiretap channel were found in \cite{SC_TIFS}. Note also
that the wiretap channel is a special case of the GBC-CM which can
be easily derived by letting $W_2$ in the GBC-CM as null. The
phenomenon of channel degradation was also observed in
\cite{Liu_SISO} for the less noisy GBC-CM case with full CSIT, which
is a general case of the SISO GBC-CM. However, it is much more
involved to analyze the channel degradation phenomenon for partial
CSIT, i.e., the same marginal property needs to be re-examined. Thus
before illustrating our first main result, we need to introduce the
following same marginal lemma first, which is critical to the
result. Since Lemma \ref{Def_same_marginal} can be easily extended
from \cite[Lemma 4]{Liu_MISO} by treating the CSI as the channel
output \cite{caire_channel_with_SI}, we do not expose the proof
here.}
\begin{lemma}\label{Def_same_marginal}
Let $\mathcal{P}$ denote the set of channels $p(\tilde{y}_1,\,\tilde{y}_2,\,\tilde{\bm{h}}_1,\,\tilde{\bm{h}}_2|\bm{x})$ whose marginal distributions satisfy
\begin{align}
&p_{\tilde{Y}_1,\tilde{\bm{H}}_1|\bm{X}}(\tilde{y}_1,\tilde{\bm{h}}_1|\bm{x})=p_{Y_1,\bm{H}_1|\bm{X}}(y_1,\bm{h}_1|\bm{x}),\\
&p_{\tilde{Y}_2,\tilde{\bm{H}}_2|\bm{X}}(\tilde{y}_2,\tilde{\bm{h}}_2|\bm{x})=p_{Y_2,\bm{H}_2|\bm{X}}(y_2,\bm{h}_2|\bm{x}),
\end{align}
for all $y_1,\,y_2,\, \mbox{and } \bm{x}$. The secrecy capacity region is the same for all channels $p(\tilde{y}_1,\,\tilde{y}_2,\,\tilde{\bm{h}}_1,\,\tilde{\bm{h}}_2|\bm{x})\in\mathcal{P}$.
\end{lemma}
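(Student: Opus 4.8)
The plan is to run the classical same-marginal argument (as in \cite[Lemma~4]{Liu_MISO} and \cite{csiszar1978broadcast}), the only extra ingredient being the device of \cite{caire_channel_with_SI}: since receiver $j$ is assumed to know $\bm{H}_j$ perfectly, we regard the \emph{pair} $(Y_j,\bm{H}_j)$ as the channel output seen by receiver $j$, so that $\bm{X}\mapsto(Y_j,\bm{H}_j)$ is a memoryless channel with transition law $p_{Y_j,\bm{H}_j|\bm{X}}$. Fix an arbitrary $(2^{nR_1},2^{nR_2},n)$-code, i.e.\ a (possibly stochastic) encoder $p(\bm{x}^n|w_1,w_2)$ together with decoding maps $\hat{W}_j=g_j(Y_j^n,\bm{H}_j^n)$, $j=1,2$, and let $p,\tilde{p}\in\mathcal{P}$ be two channels with the same one-user marginals; write tildes for all quantities evaluated under $\tilde{p}$. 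It suffices to show that the four numbers $\Pr(\hat{W}_j\neq W_j)$ and $I(W_j;Y_{j^c}^n,\bm{H}_{j^c}^n)$, $j=1,2$, are unchanged when $p$ is replaced by $\tilde{p}$, since these are the only quantities that enter Definition~\ref{Def_Perfect}. Consequently, for a given code sequence the conditions in \eqref{eq_equivocation_given_h} hold under $p$ if and only if they hold under $\tilde{p}$ (with the same $\varepsilon,\varepsilon',n_0$), so the set of achievable rate pairs — and therefore its closure, the secrecy capacity region — is identical for all channels in $\mathcal{P}$.

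For the reliability terms: by the memoryless assumption $p(y_j^n,\bm{h}_j^n|\bm{x}^n)=\prod_{k=1}^n p_{Y_j,\bm{H}_j|\bm{X}}(y_{j,k},\bm{h}_{j,k}|x_k)$, so the joint distribution of $(W_1,W_2,\bm{X}^n,Y_j^n,\bm{H}_j^n)$ is completely determined by the uniform priors on $\calW_1,\calW_2$, the fixed encoder, and the marginal $p_{Y_j,\bm{H}_j|\bm{X}}$. Since $\hat{W}_j$ is a fixed function of $(Y_j^n,\bm{H}_j^n)$, $\Pr(\hat{W}_j\neq W_j)$ is a functional of this joint distribution alone, hence equal under $p$ and $\tilde{p}$. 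For the secrecy terms: the joint distribution of $(W_j,Y_{j^c}^n,\bm{H}_{j^c}^n)$ is obtained from the same ingredients (uniform message priors, encoder) composed with the marginal channel $p_{Y_{j^c},\bm{H}_{j^c}|\bm{X}}$ and then marginalizing out $W_{j^c}$ and $\bm{X}^n$; it therefore depends only on $p_{Y_{j^c},\bm{H}_{j^c}|\bm{X}}$, so $I(W_j;Y_{j^c}^n,\bm{H}_{j^c}^n)$ — a functional of this distribution — is also invariant across $\mathcal{P}$.

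The argument is essentially bookkeeping; the one place where care is needed — and the reason the statement is not completely obvious in the partial-CSIT setting, as noted before the lemma — is precisely the step of folding $\bm{H}_j$ into the output of receiver $j$. Once that is done, neither the reliability constraint nor the equivocation constraint ever refers to the joint law $p(\bm{h}_1,\bm{h}_2|\bm{x})$ (in particular, any statistical correlation between the two users' fading is immaterial), so the single-letter marginal structure exploited in \cite[Lemma~4]{Liu_MISO} is preserved verbatim and the proof goes through unchanged.
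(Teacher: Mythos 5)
Your proof is correct and is exactly the argument the paper has in mind: the paper omits the proof of Lemma~\ref{Def_same_marginal}, stating it extends \cite[Lemma 4]{Liu_MISO} by treating the CSI $\bm{H}_j$ as part of receiver $j$'s output \cite{caire_channel_with_SI}, and your write-up carries out precisely that extension, showing that reliability and equivocation for a fixed code depend only on the respective single-user marginals $p_{Y_j,\bm{H}_j|\bm{X}}$.
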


Note that $p(\tilde{y}_1,\,\tilde{y}_2,\,\tilde{\bm{h}}_1,\,\tilde{\bm{h}}_2|\bm{x})$ can be factorized as the statement below \eqref{EQ_R2_DMC}. Then our first result is as following.
\begin{lemma}\label{Lemma_necessary_conditon_FMGBCCM}
A necessary condition for both users in the fast Rayleigh FMGBC-CM
having positive rates is that the covariance matrices of $\bm{H}_1$
and $\bm{H}_2$ are not scaled of each other.
\end{lemma}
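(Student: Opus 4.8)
The plan is to argue by contradiction using the same marginal property of Lemma \ref{Def_same_marginal}. Suppose the covariance matrices of $\bm{H}_1$ and $\bm{H}_2$ are scaled versions of each other, i.e., $\mathrm{Cov}(\bm{H}_2)=c\,\mathrm{Cov}(\bm{H}_1)$ for some scalar $c>0$. Since both channels are zero-mean circularly symmetric complex Gaussian (the fast Rayleigh fading assumption), each marginal distribution $p_{\bm{H}_j}$ is completely determined by its covariance matrix. The first step is to construct an auxiliary channel in the class $\mathcal{P}$ of Lemma \ref{Def_same_marginal} in which the two receivers' channels are (stochastically) ordered: because $\bm{H}_2\overset{d}{=}\sqrt{c}\,\bm{H}_1$ in distribution, one can couple the two channel vectors so that, conditioned on $\bm{X}$, the output $(Y_2,\bm{H}_2)$ is a degraded (noisier) version of $(Y_1,\bm{H}_1)$ — e.g. when $c\le 1$, write $\bm{H}_2=\sqrt{c}\,\bm{H}_1$ and $N_2$ as $N_1$ plus independent noise appropriately, so that user 2's observation is a physically degraded version of user 1's. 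This degraded auxiliary channel has the same per-user marginals as the original, hence by Lemma \ref{Def_same_marginal} the same secrecy capacity region.

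The second step is to invoke the known structure of the degraded broadcast channel with confidential messages: when receiver 2 is a degraded version of receiver 1, the channel reduces to a wiretap channel with receiver 1 the legitimate user and receiver 2 the eavesdropper, so the secrecy rate of the weaker user, $R_2$, is forced to zero (the confidential-message region for a degraded BC-CM collapses to $R_2=0$, $R_1\le$ the wiretap secrecy capacity). The case $c\ge 1$ is symmetric and forces $R_1=0$ instead. Either way, one of the two users has identically zero rate, contradicting the hypothesis that both users have positive rates. Hence the covariance matrices cannot be scalar multiples of one another, which is the claimed necessary condition.

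A convenient alternative route for the second step, staying closer to the rate expressions \eqref{EQ_R1_DMC2}–\eqref{EQ_R2_DMC2} already derived, is to observe that on the degraded auxiliary channel, for every admissible $\varpi\in\Omega$ and every choice of auxiliary vectors $\bm{V}_1,\bm{V}_2$, the Markov chain induced by degradedness gives $I(\bm{V}_2;Y_2|\bm{H}_1,\bm{H}_2)\le I(\bm{V}_2;Y_1|\bm{H}_1,\bm{H}_2)$, and after accounting for the conditioning on $\bm{V}_1$ via the same-marginal freedom one gets $I(\bm{V}_2;Y_2|\bm{H}_1,\bm{H}_2)-I(\bm{V}_2;Y_1|\bm{V}_1,\bm{H}_1,\bm{H}_2)-I(\bm{V}_1;\bm{V}_2)\le 0$, so the $(\cdot)^+$ in \eqref{EQ_R2_DMC2} yields $R_2=0$. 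Taking the convex closure over $\varpi$ preserves $R_2=0$.

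The main obstacle I anticipate is making the degradedness coupling fully rigorous in the partial-CSIT setting: the channel ``output'' here is the pair $(Y_j,\bm{H}_j)$, so one must exhibit a single conditional kernel $p(\tilde y_2,\tilde{\bm h}_2\mid \tilde y_1,\tilde{\bm h}_1)$ that both preserves the marginal $p_{Y_2,\bm H_2|\bm X}$ and realizes the Markov chain $\bm{X}-(Y_1,\bm{H}_1)-(Y_2,\bm{H}_2)$. The scaling hypothesis on the covariance matrices is exactly what makes this possible — it lets $\bm H_2$ be generated from $\bm H_1$ by a deterministic scaling (plus, if the proportionality constant exceeds one, adding independent Gaussian noise), and the additive Gaussian noises $N_1,N_2$ can be similarly nested; verifying that this construction lands in $\mathcal{P}$ and yields the required Markov structure is the one step that needs care, after which the collapse to a wiretap channel and the conclusion $R_2=0$ (or $R_1=0$) are standard.
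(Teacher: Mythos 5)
Your proposal is correct and follows essentially the same route as the paper: invoke the same-marginal property (Lemma \ref{Def_same_marginal}) to replace one channel by a scaled copy of the other, observe that the resulting pair is degraded (the paper absorbs the scaling into the noise variance, you build the degrading kernel $\bm{H}_2=\sqrt{c}\,\bm{H}_1$ with nested Gaussian noise, which is the same coupling), and then conclude via the known collapse of the degraded/less-noisy BC-CM region that the weaker user's secrecy rate is zero. The only caveat is that your ``alternative second step'' using the achievable-rate expressions \eqref{EQ_R1_DMC2}--\eqref{EQ_R2_DMC2} would only show an inner bound collapses, not the capacity region, so the converse must rest on the outer-bound result (as in the paper's appeal to the extension of Liu's Theorem 3), which your main argument already does.
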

\begin{proof} We prove it by contradiction. Assume that the two channels are distributed as
$\bm{H}_1\sim CN(0,\sigma_1^2\bm\Sigma)$ and $\bm{H}_2\sim
CN(0,\sigma_2^2\bm\Sigma)$, i.e., the covariance matrices are scaled
of each other. Due to the same marginal property in Lemma
\ref{Def_same_marginal}, we can replace $\bm{H}_1$ in \eqref{EQ_Eve}
by $\tilde{\bm{H}}_1=(\sigma_1/\sigma_2)\bm{H}_2$ without affecting
the capacity. After that we have a  new pair of channels with the
same capacity as \eqref{EQ_Eve}
\begin{align}
Y_1'=(\sigma_1/\sigma_2)\bm{H}_2^H\bm{X}+N_1,\,\,\,Y_2=\bm{H}_2^H\bm{X}+N_2,\notag
\end{align}
which can be further represented as
\begin{align}
Y_1''=\bm{H}_2^H\bm{X}+(\sigma_2/\sigma_1)N_1,\,\,\,
Y_2=\bm{H}_2^H\bm{X}+N_2.\notag
\end{align}
Thus as long as $\sigma_1>\sigma_2$, we can have the Markov chain
$\bm{X}\rightarrow Y_1''\rightarrow Y_2$. On the other hand, by
extending the outer bound of \cite[Theorem 3]{Liu_SISO}, we know
that \textit{less noisy} \cite[Ch. 5]{Kim_lecture} makes
 one of the two-user FMGBC-CM have zero rate. Since $Y_2$ is degraded of $Y_1''$, and the degradedness property is stricter than the less noisy,
we can conclude that $R_2=0$. Similarly, when $\sigma_2>\sigma_1$,
$R_1=0$.
\end{proof}
\vspace{1cm}
An intuitive explanation is that, if a message can be successfully decoded by the inferior user, then the superior user is also ensured of decoding it. Thus the secrecy rate of the degraded user is zero. Therefore to avoid the investigation of such cases, in the following we assume $\bm{H}_1\sim CN(\bm\mu_1,\bK_{\bm{H}_1})$ and $\bm{H}_2\sim CN(\bm\mu_2,\bK_{\bm{H}_2})$, where $\bK_{\bm{H}_1}$ and $\bK_{\bm{H}_2}$ may not be scaled of each other. Two special cases with single input single output (SISO) antenna GBC-CM are also summarized as follows.

\begin{coro}
All SISO fast Rayleigh fading GBC-CMs with only statistical CSIT
degrade as wiretap channels.
\end{coro}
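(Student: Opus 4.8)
The plan is to derive the Corollary as an immediate specialization of Lemma~\ref{Lemma_necessary_conditon_FMGBCCM} to the case $n_T=1$. First I would observe that when the transmitter has a single antenna, the channel ``vectors'' $\bm{H}_1$ and $\bm{H}_2$ reduce to scalar circularly symmetric complex Gaussian random variables, so that their covariance matrices $\bK_{\bm{H}_1}$ and $\bK_{\bm{H}_2}$ are simply the nonnegative scalars $\sigma_1^2$ and $\sigma_2^2$. Any two positive scalars are trivially scaled versions of one another, since $\sigma_1^2=(\sigma_1^2/\sigma_2^2)\,\sigma_2^2$ whenever $\sigma_2^2>0$; the degenerate case $\sigma_j^2=0$ makes the $j$th link useless and can be discarded. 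Hence the hypothesis of Lemma~\ref{Lemma_necessary_conditon_FMGBCCM} under which both users could enjoy positive secrecy rates --- that the two covariance matrices are \emph{not} scaled of each other --- can never be met in the SISO setting.

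Next I would apply the contrapositive of Lemma~\ref{Lemma_necessary_conditon_FMGBCCM}: because the covariance ``matrices'' are necessarily proportional, at least one user must have zero secrecy rate. Spelling this out along the lines of that lemma's proof, one invokes the same-marginal property (Lemma~\ref{Def_same_marginal}) to replace the noisier user's output by a scaled copy of the other user's output without changing the secrecy capacity region; this yields an equivalent channel pair forming a physically degraded Markov chain $\bm{X}\to Y_i''\to Y_j$. Since degradedness implies the less-noisy ordering, the extended outer bound of \cite[Theorem~3]{Liu_SISO} forces the degraded user's secrecy rate to vanish. Whichever of $\sigma_1^2,\sigma_2^2$ is the larger determines the surviving user; if $\sigma_1^2=\sigma_2^2$ the ordering holds in both directions and the argument is symmetric.

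Finally I would conclude that, since only one user can ever be allocated a positive secure rate, the FMGBC-CM operates exactly as a fast-fading wiretap channel --- the one obtained by nulling the message of the zero-rate user, as already noted below Definition~\ref{Def_Perfect} --- which is the claim. I do not expect any genuine obstacle here: the entire substantive content is carried by Lemma~\ref{Lemma_necessary_conditon_FMGBCCM}, and the remaining work is the routine bookkeeping of reducing the $1\times 1$ covariance matrices to scalars and dispatching the trivial zero-variance cases.
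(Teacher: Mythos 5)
Your proposal is correct and is essentially the paper's own (implicit) argument: the corollary is stated as an immediate consequence of Lemma~\ref{Lemma_necessary_conditon_FMGBCCM}, since in the SISO case the $1\times 1$ covariance matrices $\sigma_1^2$ and $\sigma_2^2$ are always scaled versions of each other, and the same-marginal/degradedness reasoning in that lemma's proof then forces one user's secrecy rate to zero. Your extra bookkeeping (the $\sigma_1^2=\sigma_2^2$ and zero-variance cases) is consistent with the paper's intent and introduces no new ideas.
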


\begin{coro}
All SISO fast Rician fading GBC-CMs with only statistical CSIT
degrade as wiretap channels if the channels have the same
$K$-factor.
\end{coro}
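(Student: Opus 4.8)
The plan is to reduce the claim to the argument already carried out in the proof of Lemma~\ref{Lemma_necessary_conditon_FMGBCCM}. In the SISO case the channel vectors degenerate to scalars, which we still write $\bm{H}_1,\bm{H}_2$; let $\bm{H}_j\sim CN(\mu_j,\sigma_j^2)$ and let $K_j:=|\mu_j|^2/\sigma_j^2$ be the Rician $K$-factor, so the hypothesis is $K_1=K_2$ (the Rayleigh case $\mu_1=\mu_2=0$ being Corollary~1). Without loss of generality I take $\sigma_1\ge\sigma_2$.

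First I would make the key phase-alignment observation: the identity $K_1=K_2$ is equivalent to $|\mu_1|/|\mu_2|=\sigma_1/\sigma_2$, and therefore the single complex constant $c:=\mu_1/\mu_2$ satisfies \emph{both} $c\mu_2=\mu_1$ and $|c|^2\sigma_2^2=\sigma_1^2$; in other words $c\bm{H}_2$ has exactly the law $CN(\mu_1,\sigma_1^2)$ of $\bm{H}_1$ (note $\mu_2\neq0$ whenever the links are genuinely Rician). Invoking Lemma~\ref{Def_same_marginal}, I would then replace $\bm{H}_1$ in \eqref{EQ_Eve} by $\tilde{\bm H}_1:=c\bm{H}_2$ without changing the secrecy capacity region: the marginal $p(\tilde y_1,\tilde{\bm h}_1|\bm{X})$ is unchanged because $\tilde{\bm H}_1\overset{d}{=}\bm{H}_1$ and, conditioned on $\bm{X}$ and $\tilde{\bm H}_1$, the variable $\tilde Y_1=\tilde{\bm H}_1^H\bm{X}+N_1$ is $CN$ with the correct conditional mean and unit variance. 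Checking that this substitution is legitimate — that a single complex $c$ can simultaneously match the mean and scale the variance, which is precisely where the equal-$K$-factor hypothesis is consumed — is the only step I expect to require care; the rest is mechanical and parallels Lemma~\ref{Lemma_necessary_conditon_FMGBCCM}.

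With the substitution in place, receiver~1 observes $\tilde Y_1=\overline{c}\,\bm{H}_2^H\bm{X}+N_1$ and, knowing $\tilde{\bm H}_1$, also knows $\bm{H}_2$. Multiplying $\tilde Y_1$ by the unit-modulus constant $c/|c|$ (an invertible post-processing, so none of the mutual informations in \eqref{EQ_R1_DMC2}--\eqref{EQ_R2_DMC2} change) yields $|c|\,\bm{H}_2^H\bm{X}+N_1'$ with $N_1'\sim CN(0,1)$, and dividing by $|c|=\sigma_1/\sigma_2\ge1$ gives the equivalent output $Y_1'=\bm{H}_2^H\bm{X}+(\sigma_2/\sigma_1)N_1'$ of noise variance $(\sigma_2/\sigma_1)^2\le1$. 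Since $Y_2=\bm{H}_2^H\bm{X}+N_2$ has unit noise variance, $Y_2$ is a stochastically degraded version of $Y_1'$ (write $N_2\overset{d}{=}(\sigma_2/\sigma_1)N_1'+N''$ with $N''\sim CN(0,1-(\sigma_2/\sigma_1)^2)$ independent), so one more application of Lemma~\ref{Def_same_marginal} lets me assume the physical Markov chain $\bm{X}\rightarrow(Y_1',\bm{H}_2)\rightarrow(Y_2,\bm{H}_2)$. Finally I would invoke, exactly as in the proof of Lemma~\ref{Lemma_necessary_conditon_FMGBCCM}, the extended outer bound of \cite[Theorem 3]{Liu_SISO}: for this less noisy — indeed physically degraded — two-user FMGBC-CM it forces $R_2=0$, so the channel degrades to a wiretap channel. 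The case $\sigma_2>\sigma_1$ is symmetric and gives $R_1=0$, and the borderline $\sigma_1=\sigma_2$ (then $|\mu_1|=|\mu_2|$ and the two links are statistically identical up to a phase) makes both arguments apply.
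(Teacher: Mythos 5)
Your proof is correct and follows essentially the same route the paper intends: Corollary~2 is stated as a direct consequence of the same-marginal/degradation argument in the proof of Lemma~\ref{Lemma_necessary_conditon_FMGBCCM}, and you carry out exactly that argument, with the only genuinely new ingredient being the observation that the equal-$K$-factor hypothesis lets a single complex constant $c=\mu_1/\mu_2$ match both the mean and the variance so that $c\bm{H}_2\overset{d}{=}\bm{H}_1$. The subsequent rotation, rescaling, degradedness claim, and appeal to the extended outer bound of \cite[Theorem 3]{Liu_SISO} to force $R_2=0$ (resp.\ $R_1=0$) mirror the paper's Lemma~\ref{Lemma_necessary_conditon_FMGBCCM} proof step for step.
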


\textit{Remark:} To prove the degradedness of the FMGBC-CM, another
way is to directly verifying the \textit{less noisy}
\cite{Kim_lecture} property, i.e., $I(\bm{V}_k,Y_1)>
I(\bm{V}_k,Y_2)$ or $I(\bm{V}_k,Y_1)< I(\bm{V}_k,Y_2)$, where
$k=1,\,2$, from extending the upper bound derivation in
\cite[Theorem 3, Example 1]{Liu_SISO} to multiple antenna case.
However, the derivation is intractable.

\section{The achievable secrecy rate region of FMGBC-CM}\label{Sec_rate_region_MGBC_CM}

Due to the fact that there is only statistical CSIT, we can not use
the original minimum mean square error (MMSE) inflation factor as
Costa \cite{CostaDPC}, where the exact channel state information is
required. Thus we need to re-derive the achievable rate region of
the FMGBC-CM instead of directly using Liu's result in \cite[Lemma
3]{Liu_MISO}. To derive the new achievable rate region, we resort to
the linear assignment Gel'fand-Pinsker coding (LA-GPC)
\cite{Gelfand_Pinsker}\cite{pslin_CR} with Gaussian codebooks, which
is the generalized case of DPC, to deal with the fading channels.
First, separate the channel input $\bm{X}$ into two random vectors
$\bm{U}_1$ and $\bm{U}_2$ so that $\bm{X}=\bm{U}_1+\bm{U}_2$. Then
$\bm{U}_1$ and $\bm{U}_2$ are chosen as
\begin{align}
&\bm{U}_1\sim CN(\mathbf{0},\mathbf{K}_{\bm{U}_1}),\,\bm{U}_2\sim CN(\mathbf{0},\mathbf{K}_{\bm{U}_2}),
\end{align}
where $\bm{U}_2$ is independent of $\bm{U}_1$, $\mathbf{K}_{\bm{U}_1}\succeq 0$ and $\mathbf{K}_{\bm{U}_2}\succeq 0$ are the covariance matrices of $\bm{U}_1$ and $\bm{U}_2$, respectively. After that, we do the decomposition $\mathbf{K}_{\bm{U}_1}=\mathbf{T}_1 \mathbf{T}_1^H$, and define $\bm{U}_1' \sim CN(\mathbf{0},\mathbf{I}_N)$ so that $\bm{U}_1=\mathbf{T}_1\bm{U}_1'$, where $\bT_1\in\mathds{C}^{n_T\times N}$ and $N$ is the rank of $\mathbf{K}_{\bm{U}_1}$.
The auxiliary random variables of LA-GPC are then defined as
\begin{align}
&\bm{V}_1=\bm{U}_1'+\ba\bm{H}_1^H\bm{U}_2, \label{dpc}\\
&\bm{V}_2=\bm{U}_2,
\end{align}
where $\mathbf{a}$ is the inflation factor in LA-GPC. The reason of
choosing \eqref{dpc} is that, if we do LA-GPC for $\bm{U}_1$
directly, i.e., $\bm{V}_1=\bm{U}_1+\ba\bm{H}_1^H\bm{U}_2$, but not
\eqref{dpc}. After substituting it into the RHS of \eqref{EQ_R2_DMC}
and \eqref{EQ_R2_DMC2}, we can find that when calculating
$I(\bm{V}_1;\bm{V}_2)$, the term the term
$\log|\mathbf{K}_{\bm{U}_1}|$ requires $\mathbf{K}_{\bm{U}_1}\succ
0$ but not $\mathbf{K}_{\bm{U}_1}\succeq 0$. However, the expression
of \eqref{dpc} would avoid this constraint.
Note that in the rest of this paper, for convenience the of derivation, we combine $\ba\bm{H}_1^H$ as $\mathbf{b}$. To present the rate regions compactly, recall that the permutation $\pi$ specifies the encoding order, i.e., the message of user $\pi_2$ is encoded first while the message of user $\pi_1$ is encoded second. Then we can have the following secrecy rate region.\\

\begin{lemma} \label{Lemma_rate_region}
 Let $\mathcal{R}(\mathbf{K}_{\bm{U}_{\pi_1}},\mathbf{K}_{\bm{U}_{\pi_2}})$ denote the union of all rate pairs $(R_{\pi_1},R_{\pi_2})$ satisfying
\begin{align}
R_{\pi_1} &\leq \left( E_{\bm{H}_{\pi_1}}[\log(1+{\bm{H}_{\pi_1}}^H(\mathbf{K}_{\bm{U}_{\pi_1}}+\mathbf{K}_{\bm{U}_{\pi_2}}){\bm{H}_{\pi_1}})]-\bigtriangleup\right)^+,\label{EQ_R1_GBC}\\
R_{\pi_2} &\leq \left(  E_{\bm{H}_{\pi_2}}[\log(1+{\bm{H}_{\pi_2}}^H(\mathbf{K}_{\bm{U}_{\pi_1}}+\mathbf{K}_{\bm{U}_{\pi_2}}){\bm{H}_{\pi_2}})]-\bigtriangleup\right)^+\label{EQ_R2_GBC},
\end{align}
where
\begin{align}
\bigtriangleup&\triangleq
E_{\bm{H}_{\pi_2}}[\log(1+{\bm{H}_{\pi_2}}^H\mathbf{K}_{\bm{U}_{\pi_1}}{\bm{H}_{\pi_2}})]+E_{\bm{H}_{\pi_1}}\!\!\left[\log\left|\!\!
\begin{array}{cc}
                                                                                                   { \mathbf{I}+\mathbf{bK}_{\mathbf{U}_{\pi_2}}\mathbf{b}^H }&{ (\mathbf{T}_{\pi_1}^H+\mathbf{bK}_{\mathbf{U}_{\pi_2}}){\bm{H}_{\pi_1}} }  \\
                                                                                                   { {\bm{H}_{\pi_1}}^H(\mathbf{T}_{\pi_1}+\mathbf{K}_{\bm{U}_{\pi_2}}\mathbf{b}^H) }&{ 1+{\bm{H}_{\pi_1}}^H(\mathbf{K}_{\bm{U}_{\pi_1}}+\mathbf{K}_{\bm{U}_{\pi_2}}){\bm{H}_{\pi_1}} }
                                                                                                  \end{array}\label{EQ_Delta}
\!\!   \right|   \right]\\
&\triangleq
E_{\bm{H}_{\pi_2}}[\log(1+{\bm{H}_{\pi_2}}^H\mathbf{K}_{\bm{U}_{\pi_1}}{\bm{H}_{\pi_2}})]+E_{\bm{H}_{\pi_1}}\!\!\left[\log\left|\mathbf{M}\right|\right].\label{EQ_def_M}
\end{align}
Then any rate pair
\[
(R_1,R_2)\in \mbox{co}\left\{\underset{\mbox{tr}(\mathbf{K}_{\bm{U}_{\pi_1}}+\mathbf{K}_{\bm{U}_{\pi_2}})\leq P_T}\bigcup\mathcal{R}(\mathbf{K}_{\bm{U}_{\pi_1}},\mathbf{K}_{\bm{U}_{\pi_2}})\right\}
\]
is achievable for the FMGBC-CM.\\
\end{lemma}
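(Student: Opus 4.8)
The plan is to specialize the general achievable region stated above — the discrete-memoryless MBC--CM region, extended by treating $\bm{H}_1,\bm{H}_2$ as channel outputs, in the rearranged forms \eqref{EQ_R1_DMC2}--\eqref{EQ_R2_DMC2} and the companion forms obtained by interchanging the two users — to the Gaussian LA-GPC input just introduced, and then to evaluate every mutual-information term in closed form. Concretely, I would put $\bm{X}=\bm{U}_{\pi_1}+\bm{U}_{\pi_2}$, $\bm{U}_{\pi_1}=\mathbf{T}_{\pi_1}\bm{U}_{\pi_1}'$, $\bm{V}_{\pi_1}=\bm{U}_{\pi_1}'+\mathbf{b}\bm{U}_{\pi_2}$, $\bm{V}_{\pi_2}=\bm{U}_{\pi_2}$ with $\mathbf{b}=\ba\bm{H}_{\pi_1}^H$, and observe that, conditioned on a channel realization $(\bm{h}_{\pi_1},\bm{h}_{\pi_2})$, the tuple $(\bm{U}_{\pi_1}',\bm{U}_{\pi_2},Y_{\pi_1},Y_{\pi_2})$ is jointly circularly-symmetric complex Gaussian; hence each required mutual information is a log-ratio of determinants of (conditional) covariance matrices, and the rate follows by averaging over the channel statistics (below I abbreviate conditioning on $(\bm{H}_{\pi_1},\bm{H}_{\pi_2})$ by $\bm{H}$).

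For the message encoded second (user $\pi_1$), the permuted form of \eqref{EQ_R1_DMC2} reads $R_{\pi_1}\le\big(I(\bm{V}_{\pi_1};Y_{\pi_1}|\bm{H})-I(\bm{V}_{\pi_1};Y_{\pi_2}|\bm{V}_{\pi_2},\bm{H})-I(\bm{V}_{\pi_1};\bm{V}_{\pi_2})\big)^+$, and I would evaluate the three terms as follows. First, the joint covariance of $(\bm{V}_{\pi_1},Y_{\pi_1})$ given $\bm{h}_{\pi_1}$ is exactly the matrix $\mathbf{M}$ of \eqref{EQ_Delta}, with diagonal blocks $\mathrm{Cov}(\bm{V}_{\pi_1})=\mathbf{I}+\mathbf{b}\mathbf{K}_{\bm{U}_{\pi_2}}\mathbf{b}^H$ and $\mathrm{Var}(Y_{\pi_1})=1+\bm{h}_{\pi_1}^H(\mathbf{K}_{\bm{U}_{\pi_1}}+\mathbf{K}_{\bm{U}_{\pi_2}})\bm{h}_{\pi_1}$ and off-diagonal block $(\mathbf{T}_{\pi_1}^H+\mathbf{b}\mathbf{K}_{\bm{U}_{\pi_2}})\bm{h}_{\pi_1}$, so $I(\bm{V}_{\pi_1};Y_{\pi_1}|\bm{H})=E_{\bm{H}_{\pi_1}}[\log(|\mathbf{I}+\mathbf{b}\mathbf{K}_{\bm{U}_{\pi_2}}\mathbf{b}^H|\,(1+\bm{H}_{\pi_1}^H(\mathbf{K}_{\bm{U}_{\pi_1}}+\mathbf{K}_{\bm{U}_{\pi_2}})\bm{H}_{\pi_1})/|\mathbf{M}|)]$. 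Second, conditioning on $\bm{V}_{\pi_2}=\bm{U}_{\pi_2}$ leaves $\bm{V}_{\pi_1}$ distributed as $\bm{U}_{\pi_1}'$ up to a deterministic shift, so $I(\bm{V}_{\pi_1};\bm{V}_{\pi_2})=E_{\bm{H}_{\pi_1}}[\log|\mathbf{I}+\mathbf{b}\mathbf{K}_{\bm{U}_{\pi_2}}\mathbf{b}^H|]$. Third, conditioned on $\bm{U}_{\pi_2}$ the channel to user $\pi_2$ is $Y_{\pi_2}=\bm{h}_{\pi_2}^H\mathbf{T}_{\pi_1}\bm{U}_{\pi_1}'+(\text{known})+N_{\pi_2}$, so $I(\bm{V}_{\pi_1};Y_{\pi_2}|\bm{V}_{\pi_2},\bm{H})=E_{\bm{H}_{\pi_2}}[\log(1+\bm{H}_{\pi_2}^H\mathbf{K}_{\bm{U}_{\pi_1}}\bm{H}_{\pi_2})]$. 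The key simplification is that the $\log|\mathbf{I}+\mathbf{b}\mathbf{K}_{\bm{U}_{\pi_2}}\mathbf{b}^H|$ contributions of the first and second terms cancel, leaving exactly \eqref{EQ_R1_GBC} with $\bigtriangleup$ as in \eqref{EQ_Delta}--\eqref{EQ_def_M}.

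For the message encoded first (user $\pi_2$), the permuted form of \eqref{EQ_R2_DMC2} gives $R_{\pi_2}\le\big(I(\bm{V}_{\pi_2};Y_{\pi_2}|\bm{H})-I(\bm{V}_{\pi_2};Y_{\pi_1}|\bm{V}_{\pi_1},\bm{H})-I(\bm{V}_{\pi_1};\bm{V}_{\pi_2})\big)^+$. Here $I(\bm{V}_{\pi_2};Y_{\pi_2}|\bm{H})$ is the single-user rate of user $\pi_2$ treating $\bm{U}_{\pi_1}$ as Gaussian noise, namely $E_{\bm{H}_{\pi_2}}[\log(1+\bm{H}_{\pi_2}^H(\mathbf{K}_{\bm{U}_{\pi_1}}+\mathbf{K}_{\bm{U}_{\pi_2}})\bm{H}_{\pi_2})]-E_{\bm{H}_{\pi_2}}[\log(1+\bm{H}_{\pi_2}^H\mathbf{K}_{\bm{U}_{\pi_1}}\bm{H}_{\pi_2})]$, while $I(\bm{V}_{\pi_2};Y_{\pi_1}|\bm{V}_{\pi_1},\bm{H})=h(Y_{\pi_1}|\bm{V}_{\pi_1},\bm{H})-h(N_{\pi_1})=E_{\bm{H}_{\pi_1}}[\log(|\mathbf{M}|/|\mathbf{I}+\mathbf{b}\mathbf{K}_{\bm{U}_{\pi_2}}\mathbf{b}^H|)]$, because conditioning additionally on $\bm{U}_{\pi_2}$ makes $Y_{\pi_1}$ noise-only and $\mathrm{Var}(Y_{\pi_1}|\bm{V}_{\pi_1})$ is the Schur complement of the top-left block of $\mathbf{M}$. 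Subtracting and using $I(\bm{V}_{\pi_1};\bm{V}_{\pi_2})=E_{\bm{H}_{\pi_1}}[\log|\mathbf{I}+\mathbf{b}\mathbf{K}_{\bm{U}_{\pi_2}}\mathbf{b}^H|]$ from before, all the $\log|\mathbf{I}+\mathbf{b}\mathbf{K}_{\bm{U}_{\pi_2}}\mathbf{b}^H|$ terms cancel and the surviving $E_{\bm{H}_{\pi_2}}[\log(1+\bm{H}_{\pi_2}^H\mathbf{K}_{\bm{U}_{\pi_1}}\bm{H}_{\pi_2})]$ and $E_{\bm{H}_{\pi_1}}[\log|\mathbf{M}|]$ assemble into $\bigtriangleup$, giving \eqref{EQ_R2_GBC}. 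Since the rectangle $\mathcal{R}(\mathbf{K}_{\bm{U}_{\pi_1}},\mathbf{K}_{\bm{U}_{\pi_2}})$ is then achievable for every admissible covariance pair and each ordering $\pi$, taking the union over $\tr(\mathbf{K}_{\bm{U}_{\pi_1}}+\mathbf{K}_{\bm{U}_{\pi_2}})\le P_T$ and then the convex closure (standard time-sharing) yields the stated region.

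The part I expect to be most delicate is the conditioning bookkeeping rather than any single determinant computation: because $\mathbf{b}=\ba\bm{H}_{\pi_1}^H$ carries the randomness of $\bm{H}_{\pi_1}$, the unconditioned term $I(\bm{V}_1;\bm{V}_2)$ in \eqref{EQ_R1_DMC2}--\eqref{EQ_R2_DMC2} must be read and manipulated as an average over $\bm{H}_{\pi_1}$ of the conditional quantity (consistent with step (d) there, where $\bm{V}_1$ is treated as independent of its own channel since only channel statistics enter codebook design), and one must correctly identify the $(N{+}1)\times(N{+}1)$ conditional covariance with $\mathbf{M}$ and carry out the Schur-complement/determinant algebra so that the $|\mathbf{I}+\mathbf{b}\mathbf{K}_{\bm{U}_{\pi_2}}\mathbf{b}^H|$ factors cancel exactly, leaving the compact forms in \eqref{EQ_R1_GBC}--\eqref{EQ_R2_GBC}.
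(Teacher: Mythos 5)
Your proposal is correct and follows essentially the same route as the paper: specialize the (rearranged) DMC region \eqref{EQ_R1_DMC2}--\eqref{EQ_R2_DMC2} to the Gaussian LA-GPC input, evaluate $I(\bm{V}_{\pi_1};Y_{\pi_2}|\bm{V}_{\pi_2},\bm{H})$, $I(\bm{V}_{\pi_1};\bm{V}_{\pi_2})$ and $I(\bm{V}_{\pi_1};Y_{\pi_1}|\bm{H})$ as Gaussian log-determinant expressions, and assemble them into $E_{\bm{H}_{\pi_1}}[\log|\mathbf{M}|]$. The only cosmetic difference is that the paper obtains the conditional-entropy term via an explicit MMSE residual $\mathbf{s}$ followed by the block-determinant identity, whereas you identify the joint covariance of $(\bm{V}_{\pi_1},Y_{\pi_1})$ with $\mathbf{M}$ directly; both are the same Schur-complement computation and yield identical cancellations.
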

The proof is provided in Appendix \ref{Sec_App_Lemma3}. In the
following, we provide an achievable scheme to approximately achieve
the above two bounds in \eqref{EQ_R1_GBC} and \eqref{EQ_R2_GBC}.
Assume $\bm{H}_{\pi_k}\sim CN(\bm \mu_{\pi_k},\,\bm K_{\bm
H_{\pi_k}})$, $k=1,2$.

\begin{lemma}\label{lemma_optimal_input_cov_mat}
With the selection $\mathbf{K}_{\bm{U}_{\pi_1}}^*=\alpha P_T\mathbf{e}_{\pi_1}^*(\mathbf{e}_{\pi_1}^*)^H$ and $\mathbf{K}_{\bm{U}_{\pi_2}}^*=(1-\alpha) P_T\mathbf{e}_{\pi_2}^*(\mathbf{e}_{\pi_2}^*)^H$, where $||\mathbf{e}_{\pi_1}^*||^2=||\mathbf{e}_{\pi_2}^*||^2=1$, and
\begin{align}
\mathbf{e}_{\pi_1}^*&=\arg\max_{\mathbf{e}_{\pi_1}}\frac{\mathbf{e}_{\pi_1}^H\left(\mathbf{I}+\alpha P_T\left(\bK_{{\bm{H}}_{\pi_1}}+\bm\mu_{\pi_1}\bm\mu_{\pi_1}^H\right)\right)\mathbf{e}_{\pi_1}}{\mathbf{e}_{\pi_1}^H\left(\mathbf{I}+\alpha P_T\left(\bK_{{\bm{H}}_{\pi_2}}+\bm\mu_{\pi_2}\bm\mu_{\pi_2}^H\right)\right)\mathbf{e}_{\pi_1}},\label{EQ_e1}
\\
\mathbf{e}_{\pi_2}^*&=\arg\max_{\mathbf{e}_{\pi_2}}\frac{\mathbf{e}_{\pi_2}^H\left(\mathbf{I}+\frac{(1-\alpha)P_T\left(\bK_{{\bm{H}}_{\pi_2}}+\bm\mu_{\pi_2}\bm\mu_{\pi_2}^H\right)}{1+\alpha P_T(\mathbf{e}_{\pi_1}^*)^H\left(\bK_{{\bm{H}}_{\pi_2}}+\bm\mu_{\pi_2}\bm\mu_{\pi_2}^H\right)\mathbf{e}_{\pi_1}^*}\right)
\mathbf{e}_{\pi_2}}{\mathbf{e}_{\pi_2}^H\left(\mathbf{I}+\frac{(1-\alpha)P_T\left(\bK_{{\bm{H}}_{\pi_1}}+\bm\mu_{\pi_1}\bm\mu_{\pi_1}^H\right)}{1+\alpha P_T(\mathbf{e}_{\pi_1}^*)^H\left(\bK_{{\bm{H}}_{\pi_1}}+\bm\mu_{\pi_1}\bm\mu_{\pi_1}^H\right)\mathbf{e}_{\pi_1}^*}\right)\mathbf{e}_{\pi_2}},\label{EQ_e2}
\end{align}
where $\alpha$ is the ratio of power allocated to user $\pi_1$, we can get the non-trivial rate region for the FMGBC-CM as
\[
(R_1,R_2)\in \mbox{co}\left\{\underset{0\leq\alpha\leq 1}\bigcup\mathcal{R}(\mathbf{K}_{\bm{U}_{\pi_1}}^*,\mathbf{K}_{\bm{U}_{\pi_2}}^*)\right\}.\\
\]
\end{lemma}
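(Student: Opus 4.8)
The plan is to specialize the region of Lemma~\ref{Lemma_rate_region} to rank-one (beamforming) input covariance matrices, collapse the two rate bounds by a Schur-complement identity together with the usual statistical-CSIT (Jensen) surrogate, and then recognize the optimal beam directions as principal generalized eigenvectors. First I would set $\mathbf{K}_{\bm{U}_{\pi_1}}=\alpha P_T\mathbf{e}_{\pi_1}\mathbf{e}_{\pi_1}^H$ and $\mathbf{K}_{\bm{U}_{\pi_2}}=(1-\alpha)P_T\mathbf{e}_{\pi_2}\mathbf{e}_{\pi_2}^H$ with $\|\mathbf{e}_{\pi_1}\|=\|\mathbf{e}_{\pi_2}\|=1$ and $0\le\alpha\le1$; these are positive semi-definite and obey $\tr(\mathbf{K}_{\bm{U}_{\pi_1}}+\mathbf{K}_{\bm{U}_{\pi_2}})\le P_T$, hence are admissible in Lemma~\ref{Lemma_rate_region}, so every rate pair they produce is already achievable. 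With $N=\mathrm{rank}(\mathbf{K}_{\bm{U}_{\pi_1}})=1$ we get $\mathbf{T}_{\pi_1}=\sqrt{\alpha P_T}\,\mathbf{e}_{\pi_1}$, the inflation factor reduces to a scalar $a$, and $\mathbf{b}=a\bm{H}_{\pi_1}^H$; only $(\mathbf{e}_{\pi_1},\mathbf{e}_{\pi_2},\alpha)$ remain to be chosen.

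Next I would collapse the bounds. In \eqref{EQ_Delta} factor $|\mathbf{M}|$ by the Schur complement along its scalar lower-right $\bm{H}_{\pi_1}$-block: the factor $1+\bm{H}_{\pi_1}^H(\mathbf{K}_{\bm{U}_{\pi_1}}+\mathbf{K}_{\bm{U}_{\pi_2}})\bm{H}_{\pi_1}$ produced there cancels the leading positive term of $R_{\pi_1}$ in \eqref{EQ_R1_GBC}, so $R_{\pi_1}$ is bounded by minus the leakage term $E_{\bm{H}_{\pi_2}}[\log(1+\bm{H}_{\pi_2}^H\mathbf{K}_{\bm{U}_{\pi_1}}\bm{H}_{\pi_2})]$ minus the expected log of the residual Schur complement, whereas for $R_{\pi_2}$ no cancellation occurs and the whole $E_{\bm{H}_{\pi_1}}[\log|\mathbf{M}|]$ survives as the dirty-paper penalty (the leading terms of \eqref{EQ_R2_GBC} forming receiver $\pi_2$'s rate with $\bm{U}_{\pi_1}$ treated as noise). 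Then I would replace every $E_{\bm{H}}[\log(\cdot)]$ by the logarithm of its averaged argument --- the standard statistical-CSIT surrogate, used only to choose the beam directions, not to certify achievability --- via $E[\bm{H}_{\pi_k}\bm{H}_{\pi_k}^H]=\bK_{\bm{H}_{\pi_k}}+\bm\mu_{\pi_k}\bm\mu_{\pi_k}^H$, taking $a$ to be the MMSE inflation factor for this averaged channel (its exact value being what the iterative algorithm refines afterwards). With that $a$, the averaged residual Schur complement should reduce to minus the interference-free rate $\log(1+\alpha P_T\mathbf{e}_{\pi_1}^H(\bK_{\bm{H}_{\pi_1}}+\bm\mu_{\pi_1}\bm\mu_{\pi_1}^H)\mathbf{e}_{\pi_1})$ of $\bm{U}_{\pi_1}$ at receiver $\pi_1$ (the hallmark of dirty-paper coding), and the averaged $E_{\bm{H}_{\pi_1}}[\log|\mathbf{M}|]$ to the rate at which receiver $\pi_1$ could decode $\bm{U}_{\pi_2}$'s codeword with $\bm{U}_{\pi_1}$ as noise, normalized by $1+\alpha P_T(\mathbf{e}_{\pi_1}^*)^H(\bK_{\bm{H}_{\pi_1}}+\bm\mu_{\pi_1}\bm\mu_{\pi_1}^H)\mathbf{e}_{\pi_1}^*$.

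Collecting the $\mathbf{e}_{\pi_1}$-dependent factors of the resulting $R_{\pi_1}$ expression then gives $\mathbf{e}_{\pi_1}^H(\mathbf{I}+\alpha P_T(\bK_{\bm{H}_{\pi_1}}+\bm\mu_{\pi_1}\bm\mu_{\pi_1}^H))\mathbf{e}_{\pi_1}$ divided by $\mathbf{e}_{\pi_1}^H(\mathbf{I}+\alpha P_T(\bK_{\bm{H}_{\pi_2}}+\bm\mu_{\pi_2}\bm\mu_{\pi_2}^H))\mathbf{e}_{\pi_1}$; since the denominator matrix is $\mathbf{I}$ plus a positive semi-definite matrix, hence positive definite, this generalized Rayleigh quotient is maximized by the principal generalized eigenvector of the pencil, which is $\mathbf{e}_{\pi_1}^*$ of \eqref{EQ_e1}. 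Substituting $\mathbf{e}_{\pi_1}^*$ and repeating with the $\mathbf{e}_{\pi_2}$-dependent factors of the $R_{\pi_2}$ expression --- whose numerator and denominator matrices should be precisely those in \eqref{EQ_e2}, the normalizations $1+\alpha P_T(\mathbf{e}_{\pi_1}^*)^H(\cdot)\mathbf{e}_{\pi_1}^*$ coming from the $\bm{U}_{\pi_1}$-as-noise term at receiver $\pi_2$ and from the penalty toward receiver $\pi_1$ --- yields $\mathbf{e}_{\pi_2}^*$; the dependence of $\mathbf{e}_{\pi_2}^*$ on $\mathbf{e}_{\pi_1}^*$ merely reflects that $\bm{U}_{\pi_2}$ is encoded first. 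Finally, feeding $\mathbf{K}_{\bm{U}_{\pi_1}}^*$ and $\mathbf{K}_{\bm{U}_{\pi_2}}^*$ back into the \emph{exact} bounds \eqref{EQ_R1_GBC}--\eqref{EQ_R2_GBC} and taking the convex closure over $0\le\alpha\le1$ gives a region achievable by Lemma~\ref{Lemma_rate_region}; it is the non-trivial region because, under the condition of Lemma~\ref{Lemma_necessary_conditon_FMGBCCM} (\eg when $\bK_{\bm{H}_1}$ and $\bK_{\bm{H}_2}$ are not ordered), both quotients in \eqref{EQ_e1}--\eqref{EQ_e2} can exceed one, so both users obtain strictly positive rates.

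The step I expect to be the main obstacle is the collapse of $E_{\bm{H}_{\pi_1}}[\log|\mathbf{M}|]$: it is the expectation of a log-determinant of a matrix that is \emph{quadratic} in $\bm{H}_{\pi_1}$ through $\mathbf{b}=a\bm{H}_{\pi_1}^H$, so Jensen's inequality cannot be applied to $\mathbf{M}$ directly. One must first carry out the Schur-complement reduction so that only a benign scalar is left to average, then justify that replacing it by its statistical mean is harmless for the purpose of \emph{designing} the beam directions (rather than bounding the rate), and pin down $a$ so that the penalty assumes its clean dirty-paper form. Managing the coupling among $a$, the averaging step, and the two beam directions without destroying the decoupling into two separate generalized Rayleigh quotients is the delicate part; the exact optimization of $a$ is left to the iterative algorithm in the sequel.
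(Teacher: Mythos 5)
Your overall plan (restrict Lemma \ref{Lemma_rate_region} to unit-rank covariances, pass to a deterministic surrogate via averaging, and read off $\mathbf{e}_{\pi_1}^*,\mathbf{e}_{\pi_2}^*$ as principal generalized eigenvectors of Rayleigh quotients) matches the paper's endgame, and your admissibility argument for why the resulting region is achievable is the same as the paper's. The gap is exactly at the step you flag and do not resolve: the collapse of $E_{\bm{H}_{\pi_1}}[\log|\mathbf{M}|]$. With a channel-\emph{independent} inflation factor (your ``MMSE for the averaged channel'' $a$), the dirty-paper cancellation is not exact for any realization, and neither the Schur-complement residual nor its expectation reduces to the clean interference-free form $\log\bigl(1+\alpha P_T\mathbf{e}_{\pi_1}^H(\bK_{\bm{H}_{\pi_1}}+\bm\mu_{\pi_1}\bm\mu_{\pi_1}^H)\mathbf{e}_{\pi_1}\bigr)$ that your derivation needs; the matrix $\mathbf{M}$ remains quadratic in $\bm{H}_{\pi_1}$ through $\bb=a\bm{H}_{\pi_1}^H$, and ``replacing each $E[\log(\cdot)]$ by $\log E[\cdot]$'' is not a licensed operation on those terms. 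So the two claims introduced with ``should reduce to'' are assertions, not steps.

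The paper's proof avoids this entirely by a different ordering of the relaxations: it first upper-bounds the region of Lemma \ref{Lemma_rate_region} by letting the transmitter use the \emph{per-realization} MMSE inflation factor \eqref{EQ_MMSE_b} (i.e., full CSIT for $\bb$ only). For that genie-aided $\bb$ the LA-GPC penalty vanishes exactly, and the bounds collapse to the full-CSIT-style ratio formulas \eqref{EQ_full_CSIT_R1}--\eqref{EQ_full_CSIT_R2} with statistics-only covariances; only then is Jensen applied (to numerator and denominator expectations separately, an explicit approximation), followed by Sylvester's identity, the unit-rank restriction, and the two Rayleigh-quotient maximizations giving \eqref{EQ_e1}--\eqref{EQ_e2}. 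The statistical inflation factor is \emph{not} obtained from the averaged channel at this stage; it is solved afterwards by the fixed-point iteration \eqref{EQ_iterative_b}. If you want to salvage your route, you should either adopt the paper's genie-aided upper-bound step (which makes your ``main obstacle'' disappear), or restate your averaging step honestly as a design heuristic and verify separately that the resulting directions coincide with \eqref{EQ_e1}--\eqref{EQ_e2}; as written, the derivation of the beam directions does not go through.
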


\vspace{1cm} The proof is provided in Appendix \ref{Sec_App_Lemma4}.
Note that with \cite[Property 2 and 3]{Petropulu_MIMOME} it can be
proved that when the number of transmit antenna is 2 with
$\bK_{\bm{H}_{\pi_1}}-\bK_{\bm{H}_{\pi_2}}$ being non-positive
semi-definite, unit rank $\bK_{\bm{U}_{\pi_1}}$ and
$\bK_{\bm{U}_{\pi_2}}$ are optimal for the considered upper bounds
of $R_1$ and $R_2$ in the proof of Lemma \ref{lemma_optimal_input_cov_mat}.\\

 After deriving the covariance matrices, we then need to solve the inflation factor due
 to the fact that there is no full CSIT. Here we resort to the following fixed point iteration to solve $\bb$
\begin{align}\label{EQ_iterative_b}
& \mathbf{b}=-(E_{\bm{H}_{\pi_1}}[\mathbf{A}_1^H])^{-1}E_{\bm{H}_{\pi_1}}[\mathbf{A}_2^H\bm{H}_1^H]\triangleq f(\bb),
 \left[ \begin{array}{c}
                                                                                                              \mathbf{A}_1 \\
                                                                                                              \mathbf{A}_2 \\
                                                                                                            \end{array}
                                                                                                          \right]\triangleq\mathbf{M}^{-1}\left[
                                                                                                                    \begin{array}{c}
                                                                                                                     \mathbf{I} \\
                                                                                                                      \mathbf{0} \\
                                                                                                                    \end{array}\right],
\end{align}
where $\mathbf{ M}$ is defined as the block matrix inside the
determinant of the second term in \eqref{EQ_def_M}. Note that
\eqref{EQ_iterative_b} is derived by $\partial R_1/\partial \bb=0$.
Note also that the iteration stops when the maximum of the absolute
values of the relative errors of $R_1$ and $R_2$ in the successive
iterations is less than a predefined value. The iteration steps are
summarized in Table \ref{TA iterative steps}.

\begin{table} [ht]
\begin{center}
\caption{The iterative steps for solving $\bb$.}
\begin{tabular}{l l} \label{TA iterative steps}
Step 1 & Set $i=0$ and initialize $\mathbf{b}^{(i)}=\mathbf{0}$. Also initialize $\be_{\pi_1}$ and $\be_{\pi_2}$ as \notag\\
&\eqref{EQ_e1} and \eqref{EQ_e2}, respectively.\\
Step 2 & Evaluate $\mathbf{b}^{(i+1)}=f(\mathbf{b}^{(i)})$ as \eqref{EQ_iterative_b}.\\
Step 3 & Let $i=i+1$ and repeat Step 2 until\notag\\
& $\max\left\{\Big|R_1^{(i)}-R_1^{(i-1)}\Big|,\,\Big|R_2^{(i)}-R_2^{(i-1)}\Big|\right\}<\delta_1$.
\end{tabular}
\end{center}
\end{table}
\section{The Iterative Algorithm}
From Lemma \ref{Lemma_rate_region} we can observe that the achievable rate region requires the joint
optimization of the inflation factor and input covariance matrices, i.e., $\mathbf{b}$,
 $\mathbf{K}_{\bm{U}_{\pi_1}}$, and $\mathbf{K}_{\bm{U}_{\pi_2}}$, or equivalently, $\mathbf{b}$,
  $\mathbf{T}_{{\pi_1}}$, and $\mathbf{T}_{{\pi_2}}$. However, the largest rate region is in
  general difficult to find unless we prove it reaches the upper bound of the capacity region.
  Instead, here we develop an algorithm to provide a sub-optimal solution for this optimization
   problem. To be more specific, given the ratio of total power allocated to user $\pi_1$, i.e.,
    $\alpha=\tr{(\mathbf{T}_{{\pi_1}}\mathbf{T}_{{\pi_1}}^H)}/P_T$ and the achievable rate region
    as the union of \eqref{EQ_R1_GBC} and \eqref{EQ_R2_GBC} among all $\alpha \in [0,1]$, we
     instead resort to solving the following two optimization problems $\bm P_1$
     and $\bm P_2$, iteratively
\begin{align}
&\bm P_1:\,\,\max_{\mathbf{b},\mathbf{T}_{{\pi_1}}}{R_{\pi_1}^{UB}},\,\,
\mbox{s.t.}~
\tr{(\mathbf{T}_{{\pi_1}}\mathbf{T}_{{\pi_1}}^H)}\leq \alpha P_T,\label{EQ_opt1}\\
&\bm P_2:\,\,\max_{\mathbf{T}_{{\pi_2}}}{R_{\pi_2}^{UB}},\,\,
\mbox{s.t.}~
\tr{(\mathbf{T}_{{\pi_2}}\mathbf{T}_{{\pi_2}}^H)}\leq (1-\alpha)P_T,\label{EQ_opt2}
\end{align}
where $R_{\pi_1}^{UB}$ and $R_{\pi_2}^{UB}$ are the RHS of the rate formulae in \eqref{EQ_R1_GBC} and \eqref{EQ_R2_GBC}, respectively, and in problems $\bm P_1$ and $\bm P_2$, $\mathbf{T}_{{\pi_2}}$ and $(\mathbf{T}_{{\pi_1}},\,\mathbf{b})$ are given, respectively.

In the following, we show the functions in terms of $\mathbf{b}$,
$\mathbf{K}_{\bm{U}_{\pi_1}}$, and $\mathbf{K}_{\bm{U}_{\pi_2}}$,
which will be used to develop the proposed fixed point iterative
algorithm in solving $\bm P_1$ and $\bm P_2$ later, derived from the Karush-Kuhn-Tucker (K.K.T.) conditions. \\

\begin{lemma}\label{lemma_KKT}
The K.K.T. necessary conditions of the optimization problems $\bm
P_1$ and $\bm P_2$ can be formed as
$f_1(\mathbf{b},\mathbf{T}_{\pi_1},\mathbf{T}_{\pi_2})=\mathbf{b},\,\,
g_1(\mathbf{b},\mathbf{T}_{\pi_1},\mathbf{T}_{\pi_2})=\lambda_1\mathbf{T}_{\pi_1},\,\,g_2(\mathbf{b},\mathbf{T}_{\pi_1},\mathbf{T}_{\pi_2})=\lambda_2\mathbf{T}_{\pi_2},$
where $\lambda_1$ and $\lambda_2$ are Lagrange multipliers of the
Lagrangians corresponding to \eqref{EQ_opt1} and \eqref{EQ_opt2},
respectively, and $f_1$, $g_1$, and $g_2$ are defined as
\begin{align}
f_1(\mathbf{b},\mathbf{T}_{\pi_1},\mathbf{T}_{\pi_2})\overset{\Delta}=&-(E_{\bm{H}_{\pi_1}}[\mathbf{A}_1^H])^{-1}E_{\bm{H}_{\pi_1}}[\mathbf{A}_2^H\bm{H}_{\pi_1}^H], \notag\\
g_1(\mathbf{b},\mathbf{T}_{\pi_1},\mathbf{T}_{\pi_2})\overset{\Delta}=&E_{\bm{H}_{\pi_1}}\!\!\!\!\left[\bm{H}_{\pi_1}\!\!\left(\!\!1\!\!+\!\!\bm{H}_{\pi_1}^H(\mathbf{K}_{\bm{U}_{\pi_1}}\!\!\!\!+\!\!\mathbf{K}_{\bm{U}_{\pi_2}}\!\!)\bm{H}_{\pi_1}\right)^{-1}\!\!\bm{H}_{\pi_1}^H\!\!\mathbf{T}_{\pi_1}\!\!\right] -E_{\bm{H}_{\pi_2}}\left[\bm{H}_{\pi_2}\left(1+\bm{H}_{\pi_2}^H\mathbf{K}_{\bm{U}_{\pi_1}}\bm{H}_{\pi_2}\right)^{-1}\bm{H}_{\pi_2}^H\mathbf{T}_{\pi_1}\right] \notag\\
&-E_{\bm{H}_{\pi_1}}\left[[\mathbf{0}~~ \bm{H}_{\pi_1}]\mathbf{M}^{-H}\left[
                            \begin{array}{c}
                              \mathbf{I} \\
                              \bm{H}_{\pi_1}^H\mathbf{T}_{\pi_1} \\
                            \end{array}
                          \right]
\right], \notag\\
g_2(\mathbf{b},\mathbf{T}_{\pi_1},\mathbf{T}_{\pi_2})\overset{\Delta}=&E_{\bm{H}_{\pi_2}}\!\!\!\!\left[ \mathbf{h}_{\pi_2}\!\!\left(\!\!1\!\!+\!\!\bm{H}_{\pi_2}^H\!\!(\mathbf{K}_{\bm{U}_{\pi_1}}\!\!+\!\!\mathbf{K}_{\bm{U}_{\pi_2}})\bm{H}_{\pi_2}\right)^{-1}\!\!\bm{H}_{\pi_2}^H\!\!\mathbf{T}_{\pi_2} \!\!\right] -E_{\bm{H}_{\pi_1}}\left[\left[\begin{array}{cc}\mathbf{b}^H &\bm{H}_{\pi_1} \\ \end{array}\right] \mathbf{M}^{-H}\left[
                                                     \begin{array}{c}
                                                       \mathbf{b} \\
                                                       \bm{H}_{\pi_1}^H \\
                                                     \end{array}
                                                   \right]\mathbf{T}_{\pi_2}
 \right], \notag
\end{align}
where $\bm A_1$, $\bm A_2$, and $\mathbf{M}$ are defined in \eqref{EQ_iterative_b}.
\end{lemma}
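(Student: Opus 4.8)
The plan is to write down the Lagrangians of $\bm P_1$ and $\bm P_2$, impose first-order stationarity via complex (Wirtinger) matrix calculus, and then repackage the resulting gradient equations into the three fixed-point relations in the statement. For $\bm P_1$ I would form $L_1 = R_{\pi_1}^{UB} - \lambda_1(\tr(\mathbf{T}_{\pi_1}\mathbf{T}_{\pi_1}^H) - \alpha P_T)$ and for $\bm P_2$, $L_2 = R_{\pi_2}^{UB} - \lambda_2(\tr(\mathbf{T}_{\pi_2}\mathbf{T}_{\pi_2}^H) - (1-\alpha)P_T)$, where $R_{\pi_1}^{UB}$, $R_{\pi_2}^{UB}$ are the right-hand sides of \eqref{EQ_R1_GBC} and \eqref{EQ_R2_GBC} after substituting $\mathbf{K}_{\bm{U}_{\pi_j}} = \mathbf{T}_{\pi_j}\mathbf{T}_{\pi_j}^H$ and $\bigtriangleup$ from \eqref{EQ_def_M}. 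The K.K.T. necessary conditions are then $\partial L_1/\partial \mathbf{b} = \mathbf{0}$, $\partial L_1/\partial \mathbf{T}_{\pi_1} = \mathbf{0}$, $\partial L_2/\partial \mathbf{T}_{\pi_2} = \mathbf{0}$, together with primal/dual feasibility and complementary slackness (the latter forcing the power constraints to be active at the optimum, so $\lambda_1,\lambda_2 > 0$).

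Next I would differentiate with respect to $\mathbf{b}$. Since $\mathbf{b}$ enters $R_{\pi_1}^{UB}$ only through $-E_{\bm{H}_{\pi_1}}[\log|\mathbf{M}|]$, I use $\partial \log|\mathbf{M}| = \tr(\mathbf{M}^{-1}\,\partial\mathbf{M})$, differentiate the three blocks of $\mathbf{M}$ in \eqref{EQ_Delta} that contain $\mathbf{b}$ (the $(1,1)$, $(1,2)$ and $(2,1)$ blocks), and collect terms. Writing $\mathbf{A}_1^H$, $\mathbf{A}_2^H$ for the first block-column of $\mathbf{M}^{-1}$ as in \eqref{EQ_iterative_b}, the stationarity condition collapses, after taking the expectation, to $E_{\bm{H}_{\pi_1}}[\mathbf{A}_1^H]\mathbf{b} + E_{\bm{H}_{\pi_1}}[\mathbf{A}_2^H\bm{H}_{\pi_1}^H] = \mathbf{0}$, i.e. $\mathbf{b} = f_1(\mathbf{b},\mathbf{T}_{\pi_1},\mathbf{T}_{\pi_2})$. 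The interchange of $E_{\bm{H}_{\pi_1}}[\cdot]$ with $\partial/\partial\mathbf{b}$ is justified by dominated convergence, the integrand being smooth in $\mathbf{b}$ with an integrable dominating bound; I would relegate this to a one-line remark.

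Then I would differentiate $L_1$ with respect to $\mathbf{T}_{\pi_1}$. The variable $\mathbf{T}_{\pi_1}$ enters (i) the first term of \eqref{EQ_R1_GBC} through $\mathbf{K}_{\bm{U}_{\pi_1}} = \mathbf{T}_{\pi_1}\mathbf{T}_{\pi_1}^H$, which yields the first summand of $g_1$; (ii) the term $-E_{\bm{H}_{\pi_2}}[\log(1+\bm{H}_{\pi_2}^H\mathbf{K}_{\bm{U}_{\pi_1}}\bm{H}_{\pi_2})]$ inside $\bigtriangleup$, which yields the second summand; (iii) the $(1,2)$, $(2,1)$ and $(2,2)$ blocks of $\mathbf{M}$ inside $-E_{\bm{H}_{\pi_1}}[\log|\mathbf{M}|]$, which yields the third (quadratic-in-$\mathbf{M}^{-H}$) summand after $\tr(\mathbf{M}^{-1}\partial\mathbf{M})$ is reassembled with the shorthand used in the statement; and (iv) the power constraint, which yields $-\lambda_1\mathbf{T}_{\pi_1}$. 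Summing to zero gives $g_1 = \lambda_1\mathbf{T}_{\pi_1}$. An entirely analogous computation for $\partial L_2/\partial\mathbf{T}_{\pi_2}$ — where $\mathbf{T}_{\pi_2}$ enters the first term of \eqref{EQ_R2_GBC} through $\mathbf{K}_{\bm{U}_{\pi_2}} = \mathbf{T}_{\pi_2}\mathbf{T}_{\pi_2}^H$ and all four blocks of $\mathbf{M}$, but not the $E_{\bm{H}_{\pi_2}}[\log(1+\bm{H}_{\pi_2}^H\mathbf{K}_{\bm{U}_{\pi_1}}\bm{H}_{\pi_2})]$ part of $\bigtriangleup$ — gives $g_2 = \lambda_2\mathbf{T}_{\pi_2}$.

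The hard part will be the bookkeeping in these last two differentiations: $\mathbf{T}_{\pi_1}$ (resp. $\mathbf{T}_{\pi_2}$) and $\mathbf{K}_{\bm{U}_{\pi_2}}$ appear simultaneously in several correlated blocks of $\mathbf{M}$, so the chain rule must be applied block-by-block and the several $\tr(\mathbf{M}^{-1}\partial\mathbf{M})$ pieces carefully recombined — using the Schur-complement / partitioned-inverse identities implicit in the definition of $\mathbf{A}_1$, $\mathbf{A}_2$ — into the compact forms displayed in the statement. The remaining steps (the $\mathbf{b}$-derivative, the constraint derivatives, and passing $\partial$ under $E[\cdot]$) are routine.
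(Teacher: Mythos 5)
Your proposal follows essentially the same route as the paper's own proof: parametrize $\mathbf{K}_{\bm{U}_{\pi_k}}=\mathbf{T}_{\pi_k}\mathbf{T}_{\pi_k}^H$ so the semidefiniteness constraints disappear, form the Lagrangians of $\bm P_1$ and $\bm P_2$, and obtain the three relations from stationarity using $d\log|\mathbf{M}|=\tr(\mathbf{M}^{-1}d\mathbf{M})$ block by block, exactly as in the appendix. The only detail you gloss over (as does the paper, which merely asserts it ``without loss of generality'') is that the raw $\mathbf{b}$-stationarity condition carries a right factor $\mathbf{K}_{\bm{U}_{\pi_2}}$ that must be argued away when $\mathbf{K}_{\bm{U}_{\pi_2}}$ is rank-deficient; also, complementary slackness alone does not force the power constraints to be active, though nothing in the lemma depends on that remark.
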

\vspace{1cm} The proof is given in Sec \ref{Sec_App_b}. Obviously,
the analytical solutions for the equations $f_1$, $g_1$, and $g_2$
are intractable. We then rather propose an iterative algorithm to
solve this joint optimization problem, which is summarized in the
following.
\begin{center}
\line(2,0){510}\\\vspace{-0.8cm}\line(2,0){510}
\end{center}
\begin{description}
\item[Step 1. ]  \hspace{0.2cm}Initialize $\mathbf{b}^{(0)}$, $\mathbf{T}_{\pi_1}^{(0)}$ and $\mathbf{T}_{\pi_2}^{(0)}$ randomly, such that tr$\left(\mathbf{T}_{\pi_1}^{(0)}\left(\mathbf{T}_{\pi_1}^{(0)}\right)^H\right)$+tr$\left(\mathbf{T}_{\pi_2}^{(0)}\left(\mathbf{T}_{\pi_2}^{(0)}\right)^H\right)\leq P_T$.
\item[Step 2. ]\hspace{0.2cm}For the ($i$+1)-th iteration,
\hspace{0.5cm}\item[2.A.] Initialize $\mathbf{b}_{in}^{(0)}=\mathbf{b}^{(i)}$ and run iterations
                  $\mathbf{b}_{in}^{(j+1)}=f_1\left(\mathbf{b}_{in}^{(j)},\mathbf{T}_{\pi_1}^{(i)},\mathbf{T}_{\pi_2}^{(i)}\right)$ till $|\mathbf{b}_{in}^{(j+1)}-\mathbf{b}_{in}^{(j)}|\leq\epsilon_1$. Then set $\mathbf{b}^{(i+1)}$ as the fixed point solution of $\mathbf{b}_{in}$.
\item[2.B.] Initialize $\mathbf{T}_{\pi_{1,\,in}}^{(0)}=\mathbf{T}_{\pi_1}^{(i)}$. Run iterations till $|\mathbf{T}_{\pi_{1,\,in}}^{(j+1)}-\mathbf{T}_{\pi_{1,\,in}}^{(j)}|\leq\epsilon_2$:
                  $\mathbf{T}_{\pi_{1,\,in}}^{(j+1)}=\frac{1}{\lambda_1^{(j+1)}}g_1\left(\mathbf{b}^{(i+1)},\mathbf{T}_{\pi_{1,\,in}}^{(j)},\mathbf{T}_{\pi_2}^{(i)}\right)$ and set $\mathbf{T}_{\pi_{1,\,in}}^{(j+1)}=0.5\mathbf{T}_{\pi_{1,\,in}}^{(j+1)}+0.5\mathbf{T}_{\pi_{1,\,in}}^{(j)}$.
                   The Lagrange multiplier is calculated by \\ $\lambda_1^{(j+1)}=\sqrt{\tr{\left(g_1\left(\mathbf{b}^{(i+1)},\mathbf{T}_{\pi_{1,\,in}}^{(j)},\mathbf{T}_{\pi_2}^{(i)}\right)\left(g_1\left(\mathbf{b}^{(i+1)},\mathbf{T}_{\pi_{1,\,in}}^{(j)},\mathbf{T}_{\pi_2}^{(i)}\right)\right)^H\right)}\Bigg/\tr{\left(\mathbf{T}_{\pi_{1,\,in}}^{(j)}\left(\mathbf{T}_{\pi_{1,\,in}}^{(j)}\right)^H\right)}}$. Then set $\mathbf{T}_{\pi_1}^{(i+1)}$as the fixed point solution of $\mathbf{T}_{\pi_{1,\,in}}$.
\item[2.C.] Initialize $\mathbf{T}_{\pi_{2,\,in}}^{(0)}=\mathbf{T}_{\pi_2}^{(i)}$. Run iterations till $|\mathbf{T}_{\pi_{2,\,in}}^{(j+1)}-\mathbf{T}_{\pi_{2,\,in}}^{(j)}|\leq\epsilon_3$:
                  $\mathbf{T}_{\pi_{2,\,in}}^{(j+1)}=\frac{1}{\lambda_2^{(j+1)}}g_2\left(\mathbf{b}^{(i+1)},\mathbf{T}_{\pi_1}^{(i+1)},\mathbf{T}_{\pi_{2,\,in}}^{(j)}\right)$ and set $\mathbf{T}_{\pi_{2,\,in}}^{(j+1)}=0.5\mathbf{T}_{\pi_{2,\,in}}^{(j+1)}+0.5\mathbf{T}_{\pi_{2,\,in}}^{(j)}$. The Lagrange multiplier is calculated by \\ $\lambda_2^{(j+1)}=\sqrt{\tr{\left(g_2\left(\mathbf{b}^{(i+1)},\mathbf{T}_{\pi_1}^{(i+1)},\mathbf{T}_{\pi_{2,\,in}}^{(j)}\right)\left(g_2\left(\mathbf{b}^{(i+1)},\mathbf{T}_{\pi_1}^{(i+1)},\mathbf{T}_{\pi_{2,\,in}}^{(j)}\right)\right)^H\right)}\Bigg/\tr{\left(\mathbf{T}_{\pi_{2,\,in}}^{(j)}\left(\mathbf{T}_{\pi_{2,\,in}}^{(j)}\right)^H\right)}}$. Then set $\mathbf{T}_{\pi_2}^{(i+1)}$ as the fixed point  solution of $\mathbf{T}_{\pi_{2,\,in}}$.

\item[Step 3.  ]  $\,\,\,\,$Repeat Step 2. until
\[
\max\left\{\Big|R_{\pi_1}^{UB,(i+1)}-R_{\pi_1}^{UB,(i)}\Big|,\,\Big|R_{\pi_2}^{UB,(i+1)}-R_{\pi_2}^{UB,(i)}\Big|\right\}\leq \delta.
\]
\end{description}
\begin{center}
\line(2,0){510}\\\vspace{-0.8cm}\line(2,0){510}
\end{center}
The above steps are also summarized in Fig.\ref{Fig_flow_chart_alg}. Note that each step among Step 2.A to Step 2.C
has a local iterative process with intermediate variables $\mathbf{b}_{in}$, $\mathbf{T}_{\pi_{1,\,in}}$, and
$\mathbf{T}_{\pi_{2,\,in}}$, respectively. Note also that in Step 2.B and 2.C, we update $\mathbf{T}_{\pi_{1,\,in}}$
and $\mathbf{T}_{\pi_{2,\,in}}$ with moving average. The reason is that without the moving average the original algorithm
 is sensitive to initial values and may be easily trapped in a bad solution. However, this condition can be improved with
  such modification.

\section{ Low SNR Analysis}\label{Sec_low_SNR}
In this section we study the achievable secrecy rate region in the low-SNR regime. The motivations of this study are
as following. First, note that operation at low SNRs is beneficial from a security perspective since it is generally
 difficult for an eavesdropper to detect the signal. Second, it is well-known that for fading Gaussian channels
 subject to average input power constraints, energy efficiency improves as one operates at low SNR level, and the
 minimum bit energy is achieved as SNR vanished \cite{Gursoy_security_low_SNR}. Therefore, with the aid low SNR analysis we can determine the
 best energy efficiency of our model, which can be measured by the minimum energy required to send one information bit reliably. Finally, due to the rate region in Lemma
 \ref{Lemma_rate_region} is too complicated to analyze and by asymptotic analysis the rate formulae of most
 channels can be highly simplified, we resort to the low SNR regime to get some insights { of the optimality of choosing the input covariance matrices as unit rank
 }. In the following,
 we first characterize the secrecy rate region of FMGBC-CM under low SNR regime. Then we analyze the minimum
 bit energy. For the convenience of discussion, in the following we assume that the channel is fast Rayleigh
 faded, which can be easily extended to the Rician channels.\\
\begin{lemma} \label{lemma_low_snr}
In the low SNR regime, the optimal input covariance matrices $\mathbf{K}_{\bm{U}_1}$ and $\mathbf{K}_{\bm{U}_2}$ are both unit rank, with the directions aligned to the eigenvector corresponding to the maximum eigenvalues of $\mathbf{K}_{\bm{H}_1}-\mathbf{K}_{\bm{H}_2}$ and $\mathbf{K}_{\bm{H}_2}-\mathbf{K}_{\bm{H}_1}$, respectively. And the asymptote of the secrecy rate region is
\begin{align}
R_1 &\leq \left(\frac{\alpha P}{\ln 2}\lambda_{max}(\mathbf{K}_{\bm{H}_1}-\mathbf{K}_{\bm{H}_2})\right)^+, \label{R1_low_SNR}\\
R_2 &\leq \left(\frac{(1-\alpha) P}{\ln 2}\lambda_{max}(\mathbf{K}_{\bm{H}_2}-\mathbf{K}_{\bm{H}_1})\right)^+.\label{R2_low_SNR}
\end{align}
\end{lemma}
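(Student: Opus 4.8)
\noindent\emph{Proof plan.} The plan is to expand the bounds of Lemma~\ref{Lemma_rate_region} to first order in the total power $P$ and then optimize the leading coefficients. Write $\mathbf{K}_{\bm{U}_{\pi_1}}=\alpha P\,\mathbf{Q}_1$ and $\mathbf{K}_{\bm{U}_{\pi_2}}=(1-\alpha)P\,\mathbf{Q}_2$ with $\mathbf{Q}_i\succeq\mathbf{0}$, $\tr(\mathbf{Q}_i)=1$, and let $P\to 0$. Using $\log_2(1+x)=x/\ln 2+O(x^2)$ together with the fact that, for the zero-mean Rayleigh channels, $E_{\bm{H}_j}[\bm{H}_j^H\mathbf{A}\bm{H}_j]=\tr(\mathbf{A}\mathbf{K}_{\bm{H}_j})$, every ``forward'' term of the form $E_{\bm{H}_j}[\log(1+\bm{H}_j^H\mathbf{A}\bm{H}_j)]$ with $\mathbf{A}=O(P)$ becomes $\tr(\mathbf{A}\mathbf{K}_{\bm{H}_j})/\ln 2+o(P)$. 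This disposes of the first terms of \eqref{EQ_R1_GBC} and \eqref{EQ_R2_GBC} and of the first term of $\bigtriangleup$ in \eqref{EQ_Delta}.

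The only term still carrying the inflation factor is $E_{\bm{H}_{\pi_1}}[\log|\mathbf{M}|]$. Applying the block-determinant (Schur-complement) identity to $\mathbf{M}$ in \eqref{EQ_Delta} with $\mathbf{K}_{\bm{U}_{\pi_1}}=\mathbf{T}_{\pi_1}\mathbf{T}_{\pi_1}^H$, the $\mathbf{T}_{\pi_1}\mathbf{T}_{\pi_1}^H=\mathbf{K}_{\bm{U}_{\pi_1}}$ part of the lower-right entry is cancelled, to leading order, by the Schur subtraction $\mathbf{q}^H\mathbf{P}^{-1}\mathbf{q}$, leaving $|\mathbf{M}|=1+\bm{H}_{\pi_1}^H\mathbf{K}_{\bm{U}_{\pi_2}}\bm{H}_{\pi_1}+\tr(\mathbf{b}\mathbf{K}_{\bm{U}_{\pi_2}}\mathbf{b}^H)+o(P)$, hence $E_{\bm{H}_{\pi_1}}[\log|\mathbf{M}|]=(\tr(\mathbf{K}_{\bm{U}_{\pi_2}}\mathbf{K}_{\bm{H}_{\pi_1}})+E_{\bm{H}_{\pi_1}}[\tr(\mathbf{b}\mathbf{K}_{\bm{U}_{\pi_2}}\mathbf{b}^H)])/\ln 2+o(P)$. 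The second term is nonnegative and vanishes iff $\mathbf{b}=\mathbf{0}$; moreover any $\mathbf{b}$ that does not stay bounded as $P\to 0$ makes $\log|\mathbf{M}|$ bounded away from $0$ and is grossly suboptimal. Thus, to first order the inflation factor can only shrink $R_{\pi_1}$ and $R_{\pi_2}$, and the asymptote is attained by setting $\mathbf{b}=\mathbf{0}$ (consistent with the intuition that Gel'fand--Pinsker gains vanish at low SNR).

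With $\mathbf{b}=\mathbf{0}$ the $\mathbf{K}_{\bm{U}_{\pi_2}}$-terms in the bound on $R_{\pi_1}$ cancel, and symmetrically for $R_{\pi_2}$, so that $R_{\pi_1}\le\frac{\alpha P}{\ln 2}\,\tr(\mathbf{Q}_1(\mathbf{K}_{\bm{H}_{\pi_1}}-\mathbf{K}_{\bm{H}_{\pi_2}}))+o(P)$ and $R_{\pi_2}\le\frac{(1-\alpha)P}{\ln 2}\,\tr(\mathbf{Q}_2(\mathbf{K}_{\bm{H}_{\pi_2}}-\mathbf{K}_{\bm{H}_{\pi_1}}))+o(P)$, the two right-hand sides depending on $\mathbf{Q}_1$ and on $\mathbf{Q}_2$ separately. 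By the Rayleigh--Ritz characterization, $\max\{\tr(\mathbf{Q}\mathbf{D}):\mathbf{Q}\succeq\mathbf{0},\,\tr(\mathbf{Q})=1\}=\lambda_{max}(\mathbf{D})$, attained by the rank-one $\mathbf{Q}=\bm{v}\bm{v}^H$ with $\bm{v}$ a unit eigenvector for $\lambda_{max}(\mathbf{D})$; taking $\mathbf{D}=\mathbf{K}_{\bm{H}_{\pi_1}}-\mathbf{K}_{\bm{H}_{\pi_2}}$ and $\mathbf{D}=\mathbf{K}_{\bm{H}_{\pi_2}}-\mathbf{K}_{\bm{H}_{\pi_1}}$, respectively, and observing that the two unit-rank optimizers are not in tension (each rate coefficient involves only one of the $\mathbf{Q}_i$), both bounds are met simultaneously. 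Inserting the positive-part operator of \eqref{EQ_R1_GBC}, \eqref{EQ_R2_GBC} yields \eqref{R1_low_SNR}, \eqref{R2_low_SNR}. Since the union over $\alpha\in[0,1]$ of the rectangles $[0,c_1\alpha]\times[0,c_2(1-\alpha)]$, with $c_1,c_2$ the two $(\cdot)^+$ constants, is already the convex triangle $\{(r_1,r_2):r_1,r_2\ge 0,\ r_1/c_1+r_2/c_2\le 1\}$, the convex-hull operation in Lemma~\ref{Lemma_rate_region} adds nothing and this triangle is exactly the low-SNR asymptote; the Rician case follows verbatim upon replacing $\mathbf{K}_{\bm{H}_j}$ with $\mathbf{K}_{\bm{H}_j}+\bm{\mu}_j\bm{\mu}_j^H$.

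The main obstacle is rigour and uniformity of the expansions: one must justify interchanging $E_{\bm{H}_{\pi_1}}[\cdot]$ with the Taylor expansions of $\log|\mathbf{M}|$ and of $\log(1+\cdot)$ --- permissible because the Gaussian $\bm{H}_j$ possess all moments and $0\le x-\ln(1+x)\le x^2/2$ controls the remainder by an $O(P^2)$ quantity --- and one must check that the $o(P)$ errors are uniform over the rescaled covariances $\mathbf{Q}_i$ (a compact set) and over $\mathbf{b}$ (handled by treating the case of unbounded $\mathbf{b}$ separately). Once uniformity is secured, the remaining work is the first-order algebra of the Schur complement and the Rayleigh--Ritz maximization, both routine.
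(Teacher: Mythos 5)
Your proposal is correct and follows the same overall route as the paper's proof: set the inflation factor to zero, expand the rate bounds of Lemma \ref{Lemma_rate_region} to first order in $P$ (where the $\mathbf{K}_{\bm{U}_{\pi_2}}$ contributions cancel in the $R_{\pi_1}$ bound and vice versa), reduce each coefficient to a trace of the form $\tr\left(\mathbf{Q}_i(\mathbf{K}_{\bm{H}_j}-\mathbf{K}_{\bm{H}_{j^c}})\right)$, and maximize it by a unit-rank $\mathbf{Q}_i$ aligned with the top eigenvector. The one genuinely different element is how $\mathbf{b}=\mathbf{0}$ is justified: the paper simply observes that the term $E_{\bm{H}_{\pi_1}}[\log|\mathbf{M}|]$ in \eqref{EQ_Delta} has the same form as in \cite{Vaze_09} and imports the low-SNR optimality of zero inflation from there, whereas you derive it self-containedly via the Schur-complement expansion $|\mathbf{M}|=1+\bm{H}_{\pi_1}^H\mathbf{K}_{\bm{U}_{\pi_2}}\bm{H}_{\pi_1}+\tr(\mathbf{b}\mathbf{K}_{\bm{U}_{\pi_2}}\mathbf{b}^H)+o(P)$, so that any nonzero bounded $\mathbf{b}$ only adds a nonnegative first-order penalty (with unbounded $\mathbf{b}$ handled separately); this makes the step independent of the external reference at the cost of the uniformity bookkeeping you flag. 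A second, minor difference is that you invoke the standard fact $\max\{\tr(\mathbf{Q}\mathbf{D}):\mathbf{Q}\succeq 0,\ \tr(\mathbf{Q})=1\}=\lambda_{max}(\mathbf{D})$ directly, while the paper states and proves this as its own Lemma \ref{math_low} via the Schur--Horn theorem (and reaches the trace form through the whitening $\bm{H}_j=\mathbf{K}_{\bm{H}_j}^{1/2}\bm{\psi}_j$ rather than the quadratic-form identity $E[\bm{H}_j^H\mathbf{A}\bm{H}_j]=\tr(\mathbf{A}\mathbf{K}_{\bm{H}_j})$); these are equivalent in substance.
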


Before proving Lemma \ref{lemma_low_snr}, we need the following lemma which provides some useful mathematical properties for the proof.

\begin{lemma} \label{math_low}
Given a symmetric matrix $\mathbf{A}_{n\times n}$ and assume
$\mathbf{M}_{n\times n}\succeq 0$. The optimal $\mathbf{M}$ which
maximizes $\tr(\mathbf{AM})$ under the constraint
$\tr(\mathbf{M})\leq 1$ is unit rank. Furthermore, the eigenvector
of the optimal $\mathbf{M}$ is the one corresponding to the maximum
eigenvalue of ${\mathbf{A}}$.
\end{lemma}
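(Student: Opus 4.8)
The plan is to show the optimal $\mathbf{M}$ is unit rank by a spectral decomposition argument together with the linearity of the trace functional. First I would diagonalize the symmetric matrix $\mathbf{A}$ as $\mathbf{A}=\mathbf{U}\bm{\Lambda}\mathbf{U}^H$, where $\mathbf{U}$ is unitary and $\bm{\Lambda}=\diag(\lambda_1,\ldots,\lambda_n)$ with $\lambda_1\geq\lambda_2\geq\cdots\geq\lambda_n$ the ordered eigenvalues. Substituting and using the cyclic property of the trace, $\tr(\mathbf{A}\mathbf{M})=\tr(\bm{\Lambda}\tilde{\mathbf{M}})=\sum_{i=1}^n \lambda_i [\tilde{\mathbf{M}}]_{ii}$, where $\tilde{\mathbf{M}}=\mathbf{U}^H\mathbf{M}\mathbf{U}\succeq 0$ and $\tr(\tilde{\mathbf{M}})=\tr(\mathbf{M})\leq 1$. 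Since $\tilde{\mathbf{M}}\succeq 0$, its diagonal entries satisfy $[\tilde{\mathbf{M}}]_{ii}\geq 0$ and $\sum_i [\tilde{\mathbf{M}}]_{ii}\leq 1$.

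The key step is then the elementary observation that maximizing $\sum_i \lambda_i p_i$ over the simplex $\{p_i\geq 0,\ \sum_i p_i\leq 1\}$ puts all the weight on the largest $\lambda_i$; i.e., the maximum equals $\max\{\lambda_1,0\}$ and, when $\lambda_1>0$, is attained only by $p_1=1$, $p_i=0$ for $i\neq 1$ (if $\lambda_1 \le 0$ the optimum is $\tilde{\mathbf{M}}=\mathbf{0}$, which we may regard as the degenerate rank-zero/unit-rank case). Going back through $\tilde{\mathbf{M}}\succeq 0$: having $[\tilde{\mathbf{M}}]_{ii}=0$ for all $i\neq 1$ forces the corresponding rows and columns of $\tilde{\mathbf{M}}$ to vanish (a PSD matrix with a zero diagonal entry has the whole associated row and column zero), so $\tilde{\mathbf{M}}=\be_1\be_1^H$, hence $\mathbf{M}=\mathbf{U}\be_1\be_1^H\mathbf{U}^H=\bu_1\bu_1^H$ is unit rank with $\bu_1$ the first column of $\mathbf{U}$, i.e., the eigenvector of $\mathbf{A}$ for its maximum eigenvalue. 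This proves both claims.

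I do not expect a genuine obstacle here; the only point requiring a little care is the uniqueness claim and the PSD zero-diagonal argument (and handling the tie case $\lambda_1=\lambda_2$, where any unit vector in the top eigenspace works, and the nonpositive case $\lambda_1\le 0$, where $\mathbf{M}=\mathbf{0}$). If one prefers to avoid the zero-diagonal lemma, an alternative is to note directly that for any feasible $\mathbf{M}$ one has $\tr(\mathbf{A}\mathbf{M})\leq\lambda_{\max}(\mathbf{A})\tr(\mathbf{M})\leq\lambda_{\max}(\mathbf{A})$ (using $\mathbf{A}\preceq\lambda_{\max}\mathbf{I}$ and $\mathbf{M}\succeq 0$), with equality for $\mathbf{M}=\bu_1\bu_1^H$, which suffices for the "maximizes'' statement; the unit-rank and eigenvector characterization of the maximizer then follows from tracing the equality conditions.
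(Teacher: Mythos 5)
Your proof is correct, and it takes a noticeably different (and more elementary) route than the paper's. The paper eigen-decomposes \emph{both} $\mathbf{A}=\mathbf{U}_{\mathbf{A}}\bm\Lambda_{\mathbf{A}}\mathbf{U}_{\mathbf{A}}^H$ and $\mathbf{M}=\mathbf{U}_{\mathbf{M}}\bm\Lambda_{\mathbf{M}}\mathbf{U}_{\mathbf{M}}^H$, writes $\tr(\mathbf{A}\mathbf{M})=\tr(\mathbf{S}\bm\Lambda_{\mathbf{M}})=\sum_i s_i m_i$ with $\mathbf{S}=\mathbf{U}\bm\Lambda_{\mathbf{A}}\mathbf{U}^H$, bounds this by $\max_i s_i$ (so the optimal $\mathbf{M}$ concentrates on one eigenvalue, giving unit rank), and then invokes the Schur--Horn theorem to get $\max_i s_i\leq \max_i a_i$, identifying the optimal eigenvector via $\mathbf{U}_{\mathbf{M}}=\mathbf{U}_{\mathbf{A}}$. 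You instead rotate only $\mathbf{M}$ into $\mathbf{A}$'s eigenbasis and work with the \emph{diagonal entries} of $\tilde{\mathbf{M}}=\mathbf{U}^H\mathbf{M}\mathbf{U}$ rather than the eigenvalues of $\mathbf{M}$; the optimization then reduces to a linear program over the simplex, and the structure of the maximizer is recovered from the fact that a PSD matrix with a vanishing diagonal entry has the whole corresponding row and column equal to zero. This avoids Schur--Horn entirely (which is arguably overkill in the paper, since only the bound $\max_i s_i\le a_1$ is needed), and it also pins down the maximizer $\mathbf{M}=\bu_1\bu_1^H$ directly rather than by an achievability argument; in addition you explicitly handle the edge cases ($\lambda_1\le 0$ and repeated top eigenvalue) that the paper glosses over. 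Your second, one-line alternative $\tr(\mathbf{A}\mathbf{M})\le\lambda_{\max}(\mathbf{A})\tr(\mathbf{M})$ with equality at $\bu_1\bu_1^H$ is the most economical argument of all for what the lemma is actually used for in the low-SNR analysis, at the cost of making the rank-one characterization a consequence of equality tracing rather than of the main chain of inequalities.
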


The proofs of Lemma \ref{lemma_low_snr} and Lemma \ref{math_low} are given in Sec. \ref{sec_app_lemma6} and Sec. \ref{sec_app_lemma7}, respectively. Note that the unit rank result in Lemma \ref{lemma_low_snr} is consistent to that of MGBC-CM with perfect CSIT, also our selection of $\bK_{\bm{U}_{\pi_1}}^*$ and $\bK_{\bm{U}_{\pi_2}}^*$ in Lemma \ref{lemma_optimal_input_cov_mat}. Besides, the rate region described in (\ref{R1_low_SNR}) and (\ref{R2_low_SNR}) also implies that
\begin{coro}\label{coro_low_snr_positive_rates_condition}
In the low SNR regime, both users can have positive rates simultaneously if and only if $\mathbf{K}_{\bm{H}_1}-\mathbf{K}_{\bm{H}_2}$ is indefinite.
\end{coro}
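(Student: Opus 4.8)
The plan is to read the result off directly from the asymptotic rate region in Lemma~\ref{lemma_low_snr}, so that the proof reduces to an elementary statement about Hermitian matrices. Write $\mathbf{D}\triangleq\mathbf{K}_{\bm{H}_1}-\mathbf{K}_{\bm{H}_2}$, which is Hermitian since it is a difference of covariance matrices. By Lemma~\ref{lemma_low_snr}, in the low SNR regime an achievable pair is any $(R_1,R_2)$ with $R_1\le(\frac{\alpha P}{\ln 2}\lambda_{max}(\mathbf{D}))^+$ and $R_2\le(\frac{(1-\alpha)P}{\ln 2}\lambda_{max}(-\mathbf{D}))^+$ for some $\alpha\in[0,1]$, together with their convex hull. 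I will use the spectral identity $\lambda_{max}(-\mathbf{D})=-\lambda_{min}(\mathbf{D})$ throughout.

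First I would prove the ``if'' direction. Suppose $\mathbf{D}$ is indefinite, i.e.\ it has at least one strictly positive and at least one strictly negative eigenvalue; then $\lambda_{max}(\mathbf{D})>0$ and $\lambda_{max}(-\mathbf{D})=-\lambda_{min}(\mathbf{D})>0$. Picking any $\alpha$ strictly inside $(0,1)$, say $\alpha=1/2$, the pair $(\frac{P}{2\ln 2}\lambda_{max}(\mathbf{D}),\frac{P}{2\ln 2}\lambda_{max}(-\mathbf{D}))$ lies in the region described by \eqref{R1_low_SNR}--\eqref{R2_low_SNR}, has both coordinates strictly positive, and is attained by the unit-rank covariance matrices identified in Lemma~\ref{lemma_low_snr}. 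Hence both users have positive rates simultaneously.

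Next I would handle the ``only if'' direction by contraposition. If $\mathbf{D}$ is not indefinite, it is either positive semidefinite or negative semidefinite. If $\mathbf{D}\succeq 0$ then $\lambda_{max}(-\mathbf{D})=-\lambda_{min}(\mathbf{D})\le 0$, so \eqref{R2_low_SNR} forces $R_2=0$ for every $\alpha\in[0,1]$; taking the union over $\alpha$ and then the convex hull preserves $R_2=0$, so no achievable pair has $R_2>0$. Symmetrically, if $\mathbf{D}\preceq 0$ then $\lambda_{max}(\mathbf{D})\le 0$ and \eqref{R1_low_SNR} gives $R_1=0$ for every achievable pair. In either case the two rates cannot be made positive at once, which completes the equivalence.

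Since Lemma~\ref{lemma_low_snr} already carries the analytical burden, there is essentially no obstacle; the only points requiring mild care are the boundary cases in which $\mathbf{D}$ is semidefinite but singular (so that one of $\lambda_{max}(\mathbf{D})$, $\lambda_{max}(-\mathbf{D})$ is exactly zero, yielding a zero rate rather than a negative one), and the observation that $\alpha$ may be chosen strictly inside $(0,1)$ so that neither power-splitting factor annihilates a rate. Both follow immediately from the eigendecomposition of the Hermitian matrix $\mathbf{D}$.
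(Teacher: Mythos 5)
Your proposal is correct and follows exactly the route the paper intends: the corollary is stated as an immediate consequence of the low-SNR region \eqref{R1_low_SNR}--\eqref{R2_low_SNR} of Lemma \ref{lemma_low_snr}, and your argument simply makes explicit that indefiniteness of $\mathbf{K}_{\bm{H}_1}-\mathbf{K}_{\bm{H}_2}$ is equivalent to both $\lambda_{max}(\mathbf{K}_{\bm{H}_1}-\mathbf{K}_{\bm{H}_2})>0$ and $\lambda_{max}(\mathbf{K}_{\bm{H}_2}-\mathbf{K}_{\bm{H}_1})>0$, with an interior choice of $\alpha$ handling sufficiency and the semidefinite cases forcing one rate to zero for necessity. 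No gap; your write-up is a faithful (and slightly more careful) elaboration of the paper's one-line justification.
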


Note that in the low SNR regime we can have a stronger result as
Corollary \ref{coro_low_snr_positive_rates_condition} than Lemma
\ref{Lemma_necessary_conditon_FMGBCCM} in the normal SNR regime, in
the sense that Corollary \ref{coro_low_snr_positive_rates_condition}
is a necessary and sufficient condition. In the last part of this
section, we would like to measure the best energy efficiency of our
model, i.e. the minimum bit energy. As mentioned previously, this
happens when SNR approaches to zero (or equivalently, signal power
$P$ approximates to zero with fixed noise power $N_0$)
\cite{Verdu_low_SNR}, which is defined as
\begin{align}
\left(\frac{E_b}{N_0}\right)_{min}\triangleq\lim_{P\rightarrow 0}{\frac{P}{R(P)}}=\frac{1}{\dot{R}(0)}. \label{efficiency}
\end{align}
The above equality comes from the fact that $\lim_{P\rightarrow 0}{R(P)}=\dot{R}(0)P+O(P^2)$, where $\dot{R}(0)$ denotes the first derivative of $R$ at $P=0$. Hence, after substituting the results of (\ref{R1_low_SNR}) and (\ref{R2_low_SNR}) into (\ref{efficiency}), we have
\begin{align}
\left(\frac{E_b^{s,\,SCSIT}}{N_0}\right)_{R_1,\,\,min}=\frac{1}{\dot{R_1}(0)}=\frac{\ln 2}{\lambda_{max}},\,\,\,\left(\frac{E_b^{s,\,SCSIT}}{N_0}\right)_{R_2,\,\,min}=\frac{1}{\dot{R_2}(0)}=\frac{\ln 2}{\lambda_{min}},
\end{align}
where $\lambda_{max}$ and $\lambda_{min}$ denote the maximum and minimum eigenvalues of $\mathbf{K}_{\bm{H}_1}-\mathbf{K}_{\bm{H}_2}$, respectively; the superscript \textit{s, SCSIT} denotes the secure communications with statistical CSIT (SCSIT).\\

To elucidate how the perfect secrecy constraint affects the best
energy efficiency, we first compare the minimum bit energy of
communications with and without secrecy constraints. Note that for
convenience, in the following we only consider the best energy
efficiency of User 1, and the results of User 2 can be directly
extended. Following the similar derivation of Lemma
\ref{lemma_low_snr}, the rates for communications in low SNR without
secrecy constraint can be easily obtained as $\frac{\ln
2}{\lambda_{max}(\mathbf{K}_{\bm{H}_1})}$. Thus we have
\begin{align}
\left(\frac{E_b^{s,\,SCSIT}}{N_{0}}\right)_{R_1,\,\,min}=\frac{\ln 2}{\lambda_{max}(\mathbf{K}_{\bm{H}_1}-\mathbf{K}_{\bm{H}_2})}\geq \frac{\ln 2}{\lambda_{max}(\mathbf{K}_{\bm{H}_1})}=\left(\frac{E_b^{SCSIT}}{N_{0}}\right)_{R_1,\,\,min},
\end{align}
where $\left(\frac{E_b^{SCSIT}}{N_0}\right)_{R_1,\,\,min}$ denote the optimal energy efficiency under SCSIT for communication \textit{without} secrecy constraints. And the above inequality comes from the fact that for matrices $\mathbf{A}\succcurlyeq 0$ and $\mathbf{B}\succcurlyeq 0$, $\lambda_{max}(\mathbf{A}-\mathbf{B})\leq\lambda_{max}(\mathbf{A})-\lambda_{min}(\mathbf{B})\leq\lambda_{max}(\mathbf{A})$. From the results, as expected, we can conclude that secrecy constraints increase the bit-energy requirements. And the increment is
\[
\left(\frac{E_b^{s,\,SCSIT}}{N_{0}}\right)_{R_1,\,\,min}-\left(\frac{E_b^{SCSIT}}{N_{0}}\right)_{R_1,\,\,min}=\ln 2\left(\frac{1}{\lambda_{max}(\mathbf{K}_{\bm{H}_1}-\mathbf{K}_{\bm{H}_2})}-\frac{1}{\lambda_{max}(\mathbf{K}_{\bm{H}_1})}\right)
\]

Finally, we demonstrate the impact of the knowledge of CSIT to best energy efficiency. The minimum bit energy for full CSIT case is $\ln2\left(E_{{\bm{H}_1},{\bm{H}_2}}\left[\left(\lambda_{max}(\bm{H}_1\bm{H}_1^H-\bm{H}_2\bm{H}_2^H)\right)^+\right]\right)^{-1}$, which can be directly extended from \cite{Gursoy_security_low_SNR}. Thus, we have the following relation
\begin{align}\label{EQ_energy_efficiency_FCSIT_SCSIT}
\left(\frac{E_b^{s,\,SCSIT}}{N_{0}}\right)_{R_1,\,\,min}&=\frac{\ln 2}{\lambda_{max}(\mathbf{K}_{\bm{H}_1}-\mathbf{K}_{\bm{H}_2})}\geq \frac{\ln 2}{E_{{\bm{H}_1},{\bm{H}_2}}\left[\left(\lambda_{max}(\bm{H}_1\bm{H}_1^H-\bm{H}_2\bm{H}_2^H)\right)^+\right]}=\left(\frac{E_b^{s,\,FCSIT}}{N_{0}}\right)_{R_1,\,\,min},
\end{align}
where $\left(\frac{E_b^{s,\,FCSIT}}{N_0}\right)_{R_1,\,\,min}$ denotes the optimal energy efficiency for secrecy communication with full CSIT (FCSIT). The inequality in \eqref{EQ_energy_efficiency_FCSIT_SCSIT} comes from the fact that the maximum eigenvalue of a symmetric matrix is convex \cite[Ch. 3]{Boyd_convex_optimization} and followed by applying the Jensen's inequality. The above inequality shows that the lack of CSIT indeed increases the energy expenditure.

Note that the discussions here not only indicate the relations of different communication systems in terms of the best energy efficiency but also provides an quantitative description for these systems. Therefore we can observe the impacts much more clearly.

\section{Numerical Results}\label{Sec_simulation}

In this section, we compare the rate regions solving from our
proposed iterative algorithm and the achievable transmission scheme
in Lemma \ref{lemma_optimal_input_cov_mat} under both Rayleigh and
Rician fading channels to that of  the MGBC-CM with full CSIT,
respectively. We set $n_T=2$ and the power constraint $P_T=10$,
respectively. The variances of all noises are set as 1. We also set
the stopping criteria of the iterative algorithm as
$\delta_1=\delta_2=\delta_3=\delta_4=\delta_5=10^{-3}$. In the
numerical simulation of fast Rayleigh fading case, we set the
covariance matrices of the two channels as
\begin{equation}\label{EQ_Kh_example}
\bK_{\bm{H}_1}=\left[\begin{array}{cc}
            0.2   &  0\\
             0  &  0.04
            \end{array}\right],\,\,\bK_{\bm{H}_2}=\left[\begin{array}{cc}
            0.1   &  0.08\\
             0.08  &  0.1
            \end{array}\right],
\end{equation}
which satisfy Lemma \ref{Lemma_necessary_conditon_FMGBCCM}. For the full CSIT case, we consider the rate region which is the convex closure of the following rate pair
\begin{align}
&R_{\pi_1}\leq E\left[\left(\log_{2}\frac{1+\bm{H}_{\pi_1}^H\bK_{\bm{U}_{\pi_1}}\bm{H}_{\pi_1}}{1+\bm{H}_{\pi_2}^H\bK_{\bm{U}_{\pi_1}}\bm{H}_{\pi_2}}\right)^+\right],\label{EQ_full_CSIT_R1}\\
&R_{\pi_2}\leq E\left[\left(\log_{2}\frac{[1+\bm{H}_{\pi_2}^H(\bK_{\bm{U}_{\pi_1}}+\bK_{\bm{U}_{\pi_2}})\bm{H}_{\pi_2}](1+\bm{H}_{\pi_1}^H\bK_{\bm{U}_{\pi_1}}\bm{H}_{\pi_1})}{[1+\bm{H}_{\pi_1}^H(\bK_{\bm{U}_{\pi_1}}+\bK_{\bm{U}_{\pi_2}})\bm{H}_{\pi_1}](1+\bm{H}_{\pi_2}^H\bK_{\bm{U}_{\pi_1}}\bm{H}_{\pi_2})}\right)^+\right],\label{EQ_full_CSIT_R2}
\end{align}
with the power constraint $\mbox{tr}(\bK_{\bm{U}_{\pi_1}}+\bK_{\bm{U}_{\pi_2}})\leq 10$, where the optimal $\bK_{\bm{U}_{\pi_1}}$ and $\bK_{\bm{U}_{\pi_2}}$ are described in \cite[(16)]{Liu_MISO} and the optimal $\bb$ is as
\begin{equation}\label{EQ_MMSE_b}
\bb=\bK_{\bm{U}_{\pi_1}}\bm{H}_{\pi_1}\bm{H}_{\pi_1}^H/(1+\bm{H}_{\pi_1}^H\bK_{\bm{U}_{\pi_1}}\bm{H}_{\pi_1}).
 \end{equation}
Note that \eqref{EQ_full_CSIT_R1} and \eqref{EQ_full_CSIT_R2} are the straightforward extension from \cite{Liu_MISO} to the fast fading
channels with full CSIT. From Fig. \ref{Fig_Rayleigh} we can easily see that both the proposed iterative algorithm and the transmission
 scheme in Lemma \ref{lemma_optimal_input_cov_mat} for the fast FMGBC-CM with partial CSIT apparently outperform the time sharing scheme.
  Time sharing means that the transmitter sends the two messages with different powers during a fraction of time where these powers
   satisfy the average power constraint. And in each fraction of time, the fast FMGBC-CM reduces to a fast fading Gaussian MISO wiretap
    channel. From this figure we can also find that the proposed transmission scheme proposed in Lemma \ref{lemma_optimal_input_cov_mat}
     performs better than the iterative algorithm when the ratio of the two users' secrecy rates is large enough. This is because the iterative algorithm may result in
     local optimal, which may be worse than the proposed transmission scheme. On the other hand,
      by comparing the regions of full and partial CSIT cases, we can easily find the impact of the knowledge of CSIT to the rate performance.
      That is,
without full knowledge of CSI, the transmitter is not able to design
signal directions and do power allocation efficiently. More
specifically, by comparing \eqref{EQ_R1_GBC}\eqref{EQ_R2_GBC} and
\eqref{EQ_full_CSIT_R1}\eqref{EQ_full_CSIT_R2}, respectively, we can
easily see that the operation $(.)^+$ in the former case is outside
the expectation but the later is inside. This is because in the
former case power and direction of channel input signals are fixed
and used for all channel realizations. Thus the rate loss is
inevitable. Similar phenomenon also emerges in
\cite{Khisti_MISOME}\cite{sclin_AN_security}.

For the Rician fading case, in addition to \eqref{EQ_Kh_example}, we let the mean vectors of $\bm{H}_1$ and $\bm{H}_2$ as
\[
\bm\mu_1=[   0.7,\,\,   0.1]^T,\,\,\,\bm\mu_2=[     0.1 ,\,\,    0.6]^T,
\]
respectively. From Fig. \ref{Fig_Rician},  we can easily see that
the CSIT plays an important role in improving the rate region in
this case. And time sharing is still the worst. Note that in this
case, the proposed transmission scheme in Lemma
\ref{lemma_optimal_input_cov_mat} also overwhelms the iterative
algorithm much more apparently than that in Rayleigh fading case.
With the aid of line of sight, the performances of all schemes under
Rician fading are much better than those under Rayleigh fading. To
illustrate the performance of $\mathbf{b}$ solved from Table \ref{TA
iterative steps}, in Fig. \ref{Fig_Rician_b} we compare the case
where $\bb$ is derived from substituting $\bm{H}_1=\bm\mu_1$ into
$\bb=\bT_{\pi_1}^H\bm{H}_{\pi_1}\bm{H}_{\pi_1}^H/(1+\bm{H}_{\pi_1}^H\bK_{\bm{U}_{\pi_1}}\bm{H}_{\pi_1}),$
which is worse than with $\bb$ solved from Table I. On the other
hand, due to the performance gap is small, when the complexity is
concerned in practical design, the transmitter may choose this $\bb$
which is not from iterative solving to implement the secure LA-GPC.
Furthermore, we also show the rate region derived by $\bb=[0\,\,0]$,
which is the same as treating interference as noise. This method is
still worse than the proposed method, but slightly better than the
time sharing.

In Fig. \ref{Fig_SNR_Rayleigh} and Fig. \ref{Fig_SNR_Rician} we compare the rate
regions with different transmit SNRs under both Rayleigh and Rician fading channels.
 It can be seen that the rate regions of both methods enlarge with increasing transmit
  SNR. Thus we conjecture that the rate region enlarges with increasing transmit power.
   { Note that this phenomenon is not trivial for the wiretap channel,
   not to mention the more complicated FM-GBCCM. This is because that when the transmit power increases, both the SNRs at the
legitimate receiver and the eavesdropper increase. Then the secrecy
rate may not always increase with increasing transmit power. Counter
examples are given \cite{Secrecy_Finite}\cite{Li_fading_secrecy_j}.
Indeed, the monotonically increasing property for secrecy rate (with
respect to the transmit power) was also examined in \cite[Sec
V]{Pulu_ergodic}. As described in the Remark of
\cite[P.1181]{Pulu_ergodic}, whether the monotonically increasing
properties are valid or not in some general cases are still
unknown.}

   Also note that compared to the beamformer selection \eqref{EQ_e1} and \eqref{EQ_e2}, the iterative algorithm seems to have worse performance with
    increasing transmit SNR. However, the relation reverses when the transmit SNR is small enough.
     This may imply the advantage of iterative algorithm used in low transmit SNR. In addition,
      to see the convergence speed of the proposed iterative algorithm, in Fig. \ref{Fig_iteration}
       we show the rates of $R_1$ and $R_2$ versus the number of iteration times. In this case
        we set the transmit power $P_T=10$ and the power ratio of user 1 as $\alpha=0.5$,
         and the channel statistics for Rayleigh and Rician cases are the same as those in
         Fig. \ref{Fig_Rayleigh} and Fig. \ref{Fig_Rician}, respectively. Numerical results
         show that the iterative algorithm converges within 5 steps.
         Note that although the proposed algorithm is in the form of
         fixed point iteration, the complicated formula hinders us
         to verify the convergence property by
         \cite{Ulukus_fixed_point}.

Finally, we compare the performances of our proposed iterative algorithm with whom combines the common way of finding the boundary of rate regions \cite{Tse_fadingBC} with the modified version of our iterative algorithm. That is, given $\mu\in[0,\infty)$, we solve $(\alpha^*,\mathbf{b}^*,\mathbf{K}^*_{\bm{U}_{\pi_1}},\mathbf{K}^*_{\bm{U}_{\pi_2}})$ from the optimization problem
\begin{align}\label{EQ_boundary_of_rate_region}
&\max\,\,{R_{{\pi_1}_{UB}}+\mu R_{{\pi_2}_{UB}}}\notag\\
\mbox{s.t.}~
&\mathbf{K}_{\bm{U}_{\pi_1}}\succeq 0,~\mathbf{K}_{\bm{U}_{\pi_2}}\succeq 0~\mbox{and}~ \tr{(\mathbf{K}_{\bm{U}_{\pi_1}})}\leq \alpha P_T,~\bK_{\bm{U}_{\pi_2}}\leq (1-\alpha)P_T. \end{align}
The rate region is the union of $\left(R_{{\pi_1}_{UB}}\left(\alpha^*,\mathbf{b}^*,\mathbf{K}^*_{\bm{U}_{\pi_1}},\mathbf{K}^*_{\bm{U}_{\pi_2}}\right),R_{{\pi_2}_{UB}}\left(\alpha^*,\mathbf{b}^*,\mathbf{K}^*_{\bm{U}_{\pi_1}},\mathbf{K}^*_{\bm{U}_{\pi_2}}\right)\right)$ for all $\mu\in[0,\infty)$.\\
Note that via K.K.T. we use an algorithm similar to that depicted in Fig. \ref{Fig_flow_chart_alg} to solve \eqref{EQ_boundary_of_rate_region}.
In Fig. \ref{Rayliegh_compare}, we compare both methods under fast Rayleigh fading channels with $P_T=10$. From the figure, we can find that our proposed iterative algorithm outperforms solving \eqref{EQ_boundary_of_rate_region}.

\section{Conclusion}\label{Sec_conclusion}
In this paper we considered the secure transmission over the fast fading multiple antenna Gaussian broadcast channels with confidential messages
 (FMGBC-CM), where a multiple-antenna transmitter sends independent confidential messages to two users with information theoretic secrecy
 and only the statistics of the receivers' channel state information are known at the transmitter. We first used the same marginal property of the
 FMGBC-CM to derive the conditions that the channels not degraded to the common wiretap ones. We then derived the achievable rate region for
 the FMGBC-CM by solving the channel input covariance matrices and the inflation factor. We also proposed an iterative algorithm to solve the channel input covariance matrices and the inflation factor. Due to the complicated rate region formulae in normal SNR, we provided a low SNR analysis for finding the
 asymptotic property of the channel. Numerical examples demonstrated that both users
 can achieve positive rates simultaneously under the information-theoretic secrecy requirement. Numerical examples also show the proposed beamformer selection and iterative algorithm may
 outperform each other under different conditions. 

\section{Appendix: Proof of Lemma \ref{Lemma_rate_region}}\label{Sec_App_Lemma3}
\begin{proof}
To avoid the abuse of notations, in the proof, we assume the message of User 1 is encoded first without loss of generality.
We first derive each part of (\ref{EQ_R1_GBC}) from (\ref{EQ_R1_DMC2}) and (\ref{EQ_R2_DMC2}) in the following
\begin{align}
I(\bm{V}_1;Y_2|\bm{V}_2,{\bm{H}_2})=&h(Y_2|\bm{V}_2,\!{\bm{H}_2})\!-h(Y_2|\bm{V}_1,\!\bm{V}_2,\!{\bm{H}_2})=h({\bm{H}_2}^H\bm{U}_1\!+\!Z_2)\!-h(Z_2)\!=\!E_{\bm{H}_2}\!\!\left[\log(1\!+\!{\bm{H}_2}^H\mathbf{K}_{\bm{U}_1}\!{\bm{H}_2})\right],\notag\\
I(\bm{V}_1;\bm{V}_2)=&h(\bm{V}_1)-H(\bm{V}_1|\bm{V}_2)=h(\bm{U}_1'+\mathbf{b}\bm{U}_2)-h(\bm{U}_1')=\log|\mathbf{I}+\mathbf{bK}_{\bm{U}_2}\mathbf{b}^H|,\notag\\
I(\bm{V}_1;Y_1|{\bm{H}_1})=&h({\bm{H}_1}^H(\bm{U}_1+\bm{U}_2)+Z_1)-h({\bm{H}_1}^H(\bm{U}_1+\bm{U}_2)+Z_1|\bm{U}_1'+\mathbf{b}\bm{U}_2,{\bm{H}_1})\notag\\
=&h({\bm{H}_1}^H(\bm{U}_1\!\!+\!\!\bm{U}_2)\!\!+\!\!Z_1)\!-\!h({\bm{H}_1}^H(\bm{U}_1+\bm{U}_2)-\mathbf{s}^H(\bm{U}_1'+\mathbf{b}\bm{U}_2)+Z_1|\bm{U}_1'+\mathbf{b}\bm{U}_2,{\bm{H}_1})\label{r_dpc}\\
=&h({\bm{H}_1}^H(\bm{U}_1+\bm{U}_2)+Z_1)-h({\bm{H}_1}^H(\bm{U}_1+\bm{U}_2)-\mathbf{s}^H(\bm{U}_1'+\mathbf{b}\bm{U}_2)+Z_1|{\bm{H}_1}) \label{r_dpc2}\\
=&E_{\bm{H}_1}\left[\log(1+{\bm{H}_1}^H(\mathbf{K}_{\bm{U}_1}+\mathbf{K}_{\bm{U}_2}){\bm{H}_1}) \right. \notag\\
&\left.-\log(1+{\bm{H}_1}^H(\mathbf{K}_{\bm{U}_1}+\mathbf{K}_{\bm{U}_2}){\bm{H}_1} -{\bm{H}_1}^H(\mathbf{T}_1+\mathbf{K}_{\bm{U}_2}\mathbf{b}^H)(\mathbf{I}+\mathbf{bK}_{\bm{U}_2}\mathbf{b}^H)^{-1}(\mathbf{T}_1^H+\mathbf{bK}_{\bm{U}_2}){\bm{H}_1})\right], \label{r_dpc3}
\end{align}
where in (\ref{r_dpc}), $\mathbf{s}$ is chosen so that
${\bm{H}_1}^H(\bm{U}_1+\bm{U}_2)+Z_1-\mathbf{s}^H(\bm{U}_1'+\mathbf{b}\bm{U}_2)$
is independent of $\bm{U}_1'+\mathbf{b}\bm{U}_2$. To achieve this,
the selected $\mathbf{s}$ is the MMSE estimator of
$x={\bm{H}_1}^H(\bm{U}_1+\bm{U}_2)+Z_1$ given the observation
$y=\bm{U}_1'+\mathbf{b}\bm{U}_2$, i.e.,
$\mathbf{s}^H={\bm{H}_1}^H(\mathbf{T}_1+\mathbf{K}_{\bm{U}_2}\mathbf{b}^H)(\mathbf{I}+\mathbf{bK}_{\bm{U}_2}\mathbf{b}^H)^{-1}$.
Substitute this value of $\mathbf{s}^H$ into (\ref{r_dpc2}), and
after some arrangement we can get (\ref{r_dpc3}). Combining the
above and using the fact that

$\det\left(
                                                                                     \begin{array}{cc}
                                                                                       A & B \\
                                                                                  \vspace{-0cm}     C & D \\
                                                                                     \end{array}
                                                                                   \right)
=\det(A)\det(D-CA^{-1}B)$ for nonsingular $A$, we can obtain that
\begin{align}
R_1 \leq& \left( I(\bm{V}_1;Y_1|{\bm{H}_1})-I(\bm{V}_1;Y_2|\bm{V}_2,{\bm{H}_2})-I(\bm{V}_1;\bm{V}_2) \right)^+    \notag\\
=&\left( E_{\bm{H}_1}\left[\log(1+{\bm{H}_1}^H(\mathbf{K}_{\bm{U}_1}+\mathbf{K}_{\bm{U}_2}){\bm{H}_1})\right.\right.\\
&\left.-\log(1+{\bm{H}_1}^H(\mathbf{K}_{\bm{U}_1}+\mathbf{K}_{\bm{U}_2}){\bm{H}_1}
-{\bm{H}_1}^H(\mathbf{T}_1+\mathbf{K}_{\bm{U}_2}\mathbf{b}^H)(\mathbf{I}+\mathbf{b}K_{\bm{U}_2}\mathbf{b}^H)^{-1}(\mathbf{T}_1^H+\mathbf{bK}_{\bm{U}_2}){\bm{H}_1})\right]\notag\\
&\left.-E_{\bm{H}_2}\left[\log(1+{\bm{H}_2}^H\mathbf{K}_{\bm{U}_1}{\bm{H}_2})\right]-\log|\mathbf{I}+\mathbf{bK}_{\bm{U}_2}\mathbf{b}^H| \right)^+\notag\\
=&\Bigg( E_{\bm{H}_1}[\log(1+{\bm{H}_1}^H(\mathbf{K}_{\bm{U}_1}+\mathbf{K}_{\bm{U}_2}){\bm{H}_1})]-E_{\bm{H}_2}[\log(1+{\bm{H}_2}^H\mathbf{K}_{\bm{U}_1}{\bm{H}_2})]-\notag\\
&E_{\bm{H}_1}\left[\log \left| \begin{array}{cc}{ \mathbf{I}+\mathbf{bK}_{\bm{U}_2}\mathbf{b}^H }&{ (\mathbf{T}_1^H+\mathbf{bK}_{\bm{U}_2}){\bm{H}_1} }  \\                                                                                                 { {\bm{H}_1}^H(\mathbf{T}_1+\mathbf{K}_{\bm{U}_2}\mathbf{b}^H) }&{ 1+{\bm{H}_1}^H(\mathbf{K}_{\bm{U}_1}+\mathbf{K}_{\bm{U}_2}){\bm{H}_1} }
                                                                                                  \end{array}
    \right|   \right] \Bigg)^+.   \label{r1_ub}
\end{align}

Similarly, we can get (\ref{EQ_R2_GBC}) which completes the proof of Lemma \ref{Lemma_rate_region}.
\end{proof}

\section{Appendix: Proof of Lemma \ref{lemma_optimal_input_cov_mat}}\label{Sec_App_Lemma4}
\begin{proof}
Instead of solving $\bK_{\bm{U}_{\pi_1}}$ and $\bK_{\bm{U}_{\pi_2}}$ from \eqref{EQ_R1_GBC} and \eqref{EQ_R2_GBC} directly, which may be intractable, we resort to solving the upper bound (UB) of the rate region described by Lemma \ref{Lemma_rate_region}. That is, the transmitter can use full CSIT to design the inflation factor. Then it is clear that the solution of $\bK_{\bm{U}_{\pi_1}}$ and $\bK_{\bm{U}_{\pi_2}}$ is suboptimal for \eqref{EQ_R1_GBC} and \eqref{EQ_R2_GBC}. Note also that the optimal $\bb$ for the UB is the MMSE estimator shown in \eqref{EQ_MMSE_b}.
Note that (\ref{EQ_MMSE_b}) comes from that
$\mathbf{b}=\mathbf{a}\bm{H}_{\pi_1}^H$ and $\mathbf{a}$ is the MMSE estimator of $\bm{U}_{\pi_1}'$ given the observation $y_{\pi_1}=\bm{H}_{\pi_1}^H\mathbf{T}_{\pi_1}\bm{U}_{\pi_1}'+z_{\pi_1}$.

With this choice of $\mathbf{b}$, we can eliminate the interference
perfectly by DPC. Thus we have the following rate \\region UB
\begin{align}
&(R_{\pi_1},R_{\pi_2})_{UB}\in \mbox{co}\left\{\underset{0\leq\alpha\leq 1}\bigcup\mathcal{R}(\mathbf{K}_{\bm{U}_{\pi_1}},\mathbf{K}_{\bm{U}_{\pi_2}})\right\},\notag\\
\end{align}
where $R_{\pi_1}$ and $R_{\pi_2}$ have the same forms as the RHS of \eqref{EQ_full_CSIT_R1} and \eqref{EQ_full_CSIT_R2}, respectively, except $\mathbf{K}_{\bm{U}_{\pi_1}}$ and $\mathbf{K}_{\bm{U}_{\pi_2}}$. This is because in \eqref{EQ_full_CSIT_R1} and \eqref{EQ_full_CSIT_R2}, $\mathbf{K}_{\bm{U}_{\pi_1}}$ and $\mathbf{K}_{\bm{U}_{\pi_2}}$ depend on $\bm{H}_{\pi_1}$ and $\bm{H}_{\pi_2}$, respectively, due to full CSIT. But in our model, $\mathbf{K}_{\bm{U}_{\pi_1}}$ and $\mathbf{K}_{\bm{U}_{\pi_2}}$ only depends on the statistics of the channels.

After applying Jensen's inequality to (\ref{EQ_full_CSIT_R1}), we can derive the UB of $R_1$ as
\begin{align}
R_1&\leq \left( E\left[ \log \frac{1+{\bm{H}_{\pi_1}}^H\mathbf{K}_{\bm{U}_{\pi_1}}{\bm{H}_{\pi_1}}}{1+{\bm{H}_{\pi_2}}^H\mathbf{K}_{\bm{U}_{\pi_1}}{\bm{H}_{\pi_2}}}   \right] \right)^+\overset{(a)}= \left( E\left[ \log \frac{|\mathbf{I}+{\bm{H}_{\pi_1}}{\bm{H}_{\pi_1}}^H\mathbf{K}_{\bm{U}_{\pi_1}}|}{|\mathbf{I}+{\bm{H}_{\pi_2}\bm{H}_{\pi_2}}^H\mathbf{K}_{\bm{U}_{\pi_1}}|}   \right] \right)^+ \notag\\
&\overset{(b)}\cong \left( \log \frac{|E[\mathbf{I}+{\bm{H}_{\pi_1}}{\bm{H}_{\pi_1}}^H\mathbf{K}_{\bm{U}_{\pi_1}}]|}{|E[\mathbf{I}+{\bm{H}_{\pi_2}\bm{H}_{\pi_2}}^H\mathbf{K}_{\bm{U}_{\pi_1}}]|} \right)^+ \overset{(c)}= \left(\log \frac{|\mathbf{I}+(\bK_{{\bm{H}}_{\pi_1}}+\bm{\mu}_{\pi_1}\bm{\mu}_{\pi_1}^H)\mathbf{K}_{\bm{U}_{\pi_1}}|}{|\mathbf{I}+(\bK_{{\bm{H}}_{\pi_2}}+\bm{\mu}_{\pi_2}\bm{\mu}_{\pi_2}^H)\mathbf{K}_{\bm{U}_{\pi_1}}|} \right)^+, \label{r1_liu_3}
\end{align}
where (a) follows from the Sylvester's determinant theorem
$|\mathbf{I}_m+\mathbf{A}_{m\times n}\mathbf{B}_{n\times
m}|=|\mathbf{I}_n+\mathbf{B}_{n\times m}\mathbf{A}_{m\times n}|$,
and (b) is due to the fact that the function $\log |\mathbf{X}|$ is
concave of $\mathbf{X}$. Finally, (c) is due to
$E[{\bm{H}_{\pi_1}\bm{H}_{\pi_1}}^H]=\bK_{{\bm{H}}_{\pi_1}}+\bm{\mu}_{\pi_1}\bm{\mu}_{\pi_1}^H$.
Note that the RHS of the equality (c) is the same as the secrecy
capacity of a multiple-input multiple-output multiple-antenna
eavesdropper's channel matrices with equivalent main and
eavesdropper channels as
$(\bK_{{\bm{H}}_{\pi_1}}+\bm{\mu}_{\pi_1}\bm{\mu}_{\pi_1}^H)^{1/2}$
and
$(\bK_{{\bm{H}}_{\pi_2}}+\bm{\mu}_{\pi_2}\bm{\mu}_{\pi_2}^H)^{1/2}$,
respectively. And for that channel the general optimal input
covariance matrix is unknown. Partial results can be referred to
\cite{Fakoorian_MIMO_wiretap}. Here we adopt the beamformer (rank-1)
as the input covariance matrix, i.e.,
$\mathbf{K}_{\bm{U}_{\pi_1}}=\alpha
P\mathbf{e}_{\pi_1}\mathbf{e}_{\pi_1}^H$, where
$||\mathbf{e}_{\pi_1}||^2=1$. Then  we can further rearrange
\eqref{r1_liu_3} as
\begin{align}
R_1\leq & \left( \log \frac{|\mathbf{I}+(\bK_{{\bm{H}}_{\pi_1}}+\bm{\mu}_{\pi_1}\bm{\mu}_{\pi_1}^H)\mathbf{K}_{\bm{U}_{\pi_1}}|}{|\mathbf{I}+(\bK_{{\bm{H}}_{\pi_2}}+\bm{\mu}_{\pi_2}\bm{\mu}_{\pi_2}^H)\mathbf{K}_{\bm{U}_{\pi_1}}|} \right)^+
= \left( \log \frac{|\mathbf{I}+\alpha P(\bK_{{\bm{H}}_{\pi_1}}+\bm{\mu}_{\pi_1}\bm{\mu}_{\pi_1}^H)\mathbf{e}_{\pi_1}\mathbf{e}_{\pi_1}^H|}{|\mathbf{I}+\alpha P(\bK_{{\bm{H}}_{\pi_2}}+\bm{\mu}_{\pi_2}\bm{\mu}_{\pi_2}^H)\mathbf{e}_{\pi_1}\mathbf{e}_{\pi_1}^H|} \right)^+ \notag\\
=& \left(\log \frac{\mathbf{e}_{\pi_1}^H(\mathbf{I}+\alpha P(\bK_{{\bm{H}}_{\pi_1}}+\bm{\mu}_{\pi_1}\bm{\mu}_{\pi_1}^H))\mathbf{e}_{\pi_1}}{\mathbf{e}_{\pi_1}^H(\mathbf{I}+\alpha P(\bK_{{\bm{H}}_{\pi_2}}+\bm{\mu}_{\pi_2}\bm{\mu}_{\pi_2}^H))\mathbf{e}_{\pi_1}} \right)^+. \label{r1_liu_4}
\end{align}
Since \eqref{r1_liu_4} is the Rayleigh quotient, it is known that
the optimal $\mathbf{e}_{\pi_1}$ is the eigenvector corresponding to
the maximum generalized eigenvalue of $\big(~ \mathbf{I}+\alpha
P(\mathbf{K}_{\bm{H}_{\pi_1}}+\bm{\mu}_{\pi_1}\bm{\mu}_{\pi_1}^H)
~,~ \mathbf{I}+\alpha
P(\mathbf{K}_{\bm{H}_{\pi_2}}+\bm{\mu}_{\pi_2}\bm{\mu}_{\pi_2}^H)~\big)$.

Following the similar logic, we can derive the UB of $R_2$ by
\begin{align}
R_2&\leq \left( E\left[ \log \frac{[1+{\bm{H}_{\pi_2}}^H(\mathbf{K}_{\bm{U}_{\pi_1}}+\mathbf{K}_{\bm{U}_{\pi_2}}){\bm{H}_{\pi_2}}][1+{\bm{H}_{\pi_1}}^H\mathbf{K}_{\bm{U}_{\pi_1}}{\bm{H}_{\pi_1}}]}{[1+{\bm{H}_{\pi_1}}^H(\mathbf{K}_{\bm{U}_{\pi_1}}+\mathbf{K}_{\bm{U}_{\pi_2}}){\bm{H}_{\pi_1}}][1+{\bm{H}_{\pi_2}}^H\mathbf{K}_{\bm{U}_{\pi_1}}{\bm{H}_{\pi_2}}]}                                 \right] \right)^+\notag\\
&\overset{(a)}\cong \left( \log \frac{E[1+{\bm{H}_{\pi_2}}^H(\mathbf{K}_{\bm{U}_{\pi_1}}+\mathbf{K}_{\bm{U}_{\pi_2}}){\bm{H}_{\pi_2}}] E[1+{\bm{H}_{\pi_1}}^H\mathbf{K}_{\bm{U}_{\pi_1}}{\bm{H}_{\pi_1}}]}{E[1+{\bm{H}_{\pi_1}}^H(\mathbf{K}_{\bm{U}_{\pi_1}}+\mathbf{K}_{\bm{U}_{\pi_2}}){\bm{H}_{\pi_1}}] E[1+{\bm{H}_{\pi_2}}^H\mathbf{K}_{\bm{U}_{\pi_1}}{\bm{H}_{\pi_2}}]} \right)^+\notag\\
&= \left(\log \left( \left[1+\frac{E[{\bm{H}_{\pi_2}}^H\mathbf{K}_{\bm{U}_{\pi_2}}{\bm{H}_{\pi_2}}]}{1+E[{\bm{H}_{\pi_2}}^H\mathbf{K}_{\bm{U}_{\pi_1}}{\bm{H}_{\pi_2}}]}\right]\Bigg/\left[1+\frac{E[{\bm{H}_{\pi_1}}^H\mathbf{K}_{\bm{U}_{\pi_2}}{\bm{H}_{\pi_1}}]}{1+E[{\bm{H}_{\pi_1}}^H\mathbf{K}_{\bm{U}_{\pi_1}}{\bm{H}_{\pi_1}}]}\right]\right) \right)^+ \notag\\
&\overset{(b)} = \left(\log \left(\left[1+\frac{E[{\bm{H}_{\pi_2}}^H\mathbf{K}_{\bm{U}_{\pi_2}}{\bm{H}_{\pi_2}}]}{1+\alpha P\mathbf{e}_{\pi_1}^H(\bK_{{\bm{H}}_{\pi_2}}+\bm{\mu}_{\pi_2}\bm{\mu}_{\pi_2}^H)\mathbf{e}_{\pi_1}}\right]\Bigg/\left[1+\frac{E[{\bm{H}_{\pi_1}}^H\mathbf{K}_{\bm{U}_{\pi_2}}{\bm{H}_{\pi_1}}]}{1+\alpha P\mathbf{e}_{\pi_1}^H(\bK_{{\bm{H}}_{\pi_1}}+\bm{\mu}_{\pi_1}\bm{\mu}_{\pi_1}^H)\mathbf{e}_{\pi_1}}\right]\right) \right)^+\notag \\
&= \left(\log \left(E \left[1+\frac{{\bm{H}_{\pi_2}}^H\mathbf{K}_{\bm{U}_{\pi_2}}{\bm{H}_{\pi_2}}}{1+\alpha P\mathbf{e}_{\pi_1}^H(\bK_{{\bm{H}}_{\pi_2}}+\bm{\mu}_{\pi_2}\bm{\mu}_{\pi_2}^H)\mathbf{e}_{\pi_1}}\right]\Bigg/E \left[1+\frac{{\bm{H}_{\pi_1}}^H\mathbf{K}_{\bm{U}_{\pi_2}}{\bm{H}_{\pi_1}}}{1+\alpha P\mathbf{e}_{\pi_1}^H(\bK_{{\bm{H}}_{\pi_1}}+\bm{\mu}_{\pi_1}\bm{\mu}_{\pi_1}^H)\mathbf{e}_{\pi_1}}\right]\right) \right)^+\notag\\
&\overset{(c)}= \left(\log \left(\Bigg|\mathbf{I}+\frac{(\bK_{{\bm{H}}_{\pi_2}}+\bm{\mu}_{\pi_2}\bm{\mu}_{\pi_2}^H)\mathbf{K}_{\bm{U}_{\pi_2}}}{1+\alpha P\mathbf{e}_{\pi_1}^H(\bK_{{\bm{H}}_{\pi_2}}+\bm{\mu}_{\pi_2}\bm{\mu}_{\pi_2}^H)\mathbf{e}_{\pi_1}}\Bigg|\Bigg/\Bigg|\mathbf{I}+\frac{(\bK_{{\bm{H}}_{\pi_1}}+\bm{\mu}_{\pi_1}\bm{\mu}_{\pi_1}^H)\mathbf{K}_{\bm{U}_{\pi_2}}}{1+\alpha P\mathbf{e}_{\pi_1}^H(\bK_{{\bm{H}}_{\pi_1}}+\bm{\mu}_{\pi_1}\bm{\mu}_{\pi_1}^H)\mathbf{e}_{\pi_1}}\Bigg| \right)\right)^+. \label{r2_liu_1}
\end{align}
Note that (a) is by applying the Jensen's inequality to both the numerator and denominator inside the logarithm and (b) comes from substituting  $\mathbf{K}_{\bm{U}_{\pi_1}}=\alpha P\mathbf{e}_{\pi_1}\mathbf{e}_{\pi_1}^H$. In (c) we use again the Sylvester's determinant theorem.

Again, let $\mathbf{K}_{\bm{U}_{\pi_2}}=(1-\alpha)
P\mathbf{e}_{\pi_2}\mathbf{e}_{\pi_2}^H$, with
$||\mathbf{e}_{\pi_2}||^2=1$, we have
\begin{align}
R_2\leq & \left(\log \frac{\Big|\mathbf{I}+\frac{(\bK_{{\bm{H}}_{\pi_2}}+\bm{\mu}_{\pi_2}\bm{\mu}_{\pi_2}^H)\mathbf{K}_{\bm{U}_{\pi_2}}}{1+\alpha P\mathbf{e}_{\pi_1}^H(\bK_{{\bm{H}}_{\pi_2}}+\bm{\mu}_{\pi_2}\bm{\mu}_{\pi_2}^H)\mathbf{e}_{\pi_1}}\Big|}{\Big|\mathbf{I}+\frac{(\bK_{{\bm{H}}_{\pi_1}}+\bm{\mu}_{\pi_1}\bm{\mu}_{\pi_1}^H)\mathbf{K}_{\bm{U}_{\pi_2}}}{1+\alpha P\mathbf{e}_{\pi_1}^H(\bK_{{\bm{H}}_{\pi_1}}+\bm{\mu}_{\pi_1}\bm{\mu}_{\pi_1}^H)\mathbf{e}_{\pi_1}}\Big|} \right)^+ = \left(\log \frac{\mathbf{e}_{\pi_2}^H\Big(\mathbf{I}+\frac{(1-\alpha)P(\bK_{{\bm{H}}_{\pi_2}}+\bm{\mu}_{\pi_2}\bm{\mu}_{\pi_2}^H)}{1+\alpha P\mathbf{e}_{\pi_1}^H(\bK_{{\bm{H}}_{\pi_2}}+\bm{\mu}_{\pi_2}\bm{\mu}_{\pi_2}^H)\mathbf{e}_{\pi_1}}\Big)\mathbf{e}_{\pi_2}}{\mathbf{e}_{\pi_2}^H\Big(\mathbf{I}+\frac{(1-\alpha)P(\bK_{{\bm{H}}_{\pi_1}}+\bm{\mu}_{\pi_1}\bm{\mu}_{\pi_1}^H)}{1+\alpha P\mathbf{e}_{\pi_1}^H(\bK_{{\bm{H}}_{\pi_1}}+\bm{\mu}_{\pi_1}\bm{\mu}_{\pi_1}^H)\mathbf{e}_{\pi_1}}\Big)\mathbf{e}_{\pi_2}} \right)^+
\end{align}
And the optimal $\mathbf{e}_{\pi_2}$ is the eigenvector corresponding to the maximum generalized eigenvalue of
\[
\left(~\mathbf{I}+\frac{(1-\alpha)P(\bK_{{\bm{H}}_{\pi_2}}+\bm{\mu}_{\pi_2}\bm{\mu}_{\pi_2}^H)}{1+\alpha P\mathbf{e}_{\pi_1}^H(\bK_{{\bm{H}}_{\pi_2}}+\bm{\mu}_{\pi_2}\bm{\mu}_{\pi_2}^H)\mathbf{e}_{\pi_1}} ~,~\mathbf{I}+\frac{(1-\alpha)P(\bK_{{\bm{H}}_{\pi_1}}+\bm{\mu}_{\pi_1}\bm{\mu}_{\pi_1}^H)}{1+\alpha P\mathbf{e}_{\pi_1}^H(\bK_{{\bm{H}}_{\pi_1}}+\bm{\mu}_{\pi_1}\bm{\mu}_{\pi_1}^H)\mathbf{e}_{\pi_1}}~\right).
\]
\vspace{-0.5cm}
\end{proof}

\section{Appendix: Proof of Lemma \ref{lemma_KKT}}\label{Sec_App_b}
In the proof, we first transform the unknown variables to be solved from $\bK_{\bm{U}_{\pi_k}}$ to $\bT_{\pi_k}$ by the decomposition $\mathbf{K}_{\bm{U}_{\pi_1}}=\mathbf{T}_{\pi_1}\mathbf{T}_{\pi_1}^H$ and $\mathbf{K}_{\bm{U}_{\pi_2}}=\mathbf{T}_{\pi_2}\mathbf{T}_{\pi_2}^H$, such that the Lagrangians can be simplified. And the resulting Lagrangians are
\[
L_1(\mathbf{b},\mathbf{T}_{\pi_1})\triangleq R_{\pi_1}^{UB}+\lambda_1 \left(\tr{(\mathbf{K}_{\bm{U}_{\pi_1}})}-\alpha P_T \right),\,\,L_2(\mathbf{T}_{\pi_2})\triangleq R_{\pi_2}^{UB}+\lambda_2 \left(\tr{(\mathbf{K}_{\bm{U}_{\pi_2}})}-(1-\alpha)P_T \right),\]
where $\lambda_1$ and $\lambda_2$ are the Lagrangian multipliers. Note that with the above decompositions, the conditions that $\mathbf{K}_{\bm{U}_{\pi_1}}\succeq 0,\,\mathbf{K}_{\bm{U}_{\pi_2}}\succeq 0$ are automatically satisfied.

Next, we calculate some derivatives that would be used for the optimization problem. For $\partial L_1/\partial \mathbf{b}=0$, we get $
\partial E_{\bm{H}_{\pi_1}}[\log|\mathbf{M}|]/\partial \mathbf{b}=0, $
where $\mathbf{M}$ is defined in \eqref{EQ_def_M}, and
\begin{align}
d~E_{\bm{H}_{\pi_1}}[\log|\mathbf{M}|]
\overset{(a)}=& E_{\bm{H}_{\pi_1}}[\tr{(\mathbf{M}^{-1}d\mathbf{M})}]\overset{(b)}=E_{\bm{H}_{\pi_1}}\!\!\!\left[\tr{\left(\!\!\!\mathbf{M}^{-1}\!\!\!\left[
                                              \begin{array}{cc}
                                                (d\mathbf{b})\mathbf{K}_{\bm{U}_{\pi_2}}\mathbf{b}^H\!\!+\!\!\mathbf{b}\mathbf{K}_{\bm{U}_{\pi_2}}\!\!(d\mathbf{b})^H \!\!&\!\! (d\mathbf{b})\mathbf{K}_{\bm{U}_{\pi_2}}\!\!\bm{H}_{\pi_1} \\
                                                \bm{H}_{\pi_1}^H\mathbf{K}_{\bm{U}_{\pi_2}}(d\mathbf{b})^H \!\!&\!\! \mathbf{0} \\
                                              \end{array}\!\!\!\right]\!\!
\right)}\!\!\right] \notag\\
\overset{(c)}=&E_{\bm{H}_{\pi_1}}\left[\tr\left(\mathbf{M}^{-1}\left(\left[
                                              \begin{array}{cc}
                                                (d\mathbf{b})\mathbf{K}_{\bm{U}_{\pi_2}}\mathbf{b}^H & (d\mathbf{b})\mathbf{K}_{\bm{U}_{\pi_2}}\bm{H}_{\pi_1} \\
                                                \mathbf{0} & \mathbf{0} \\
                                              \end{array}\right]+\left[
                                              \begin{array}{cc}
                                                \mathbf{b}\mathbf{K}_{\bm{U}_{\pi_2}}(d\mathbf{b})^H& \mathbf{0} \\
                                                \bm{H}_{\pi_1}^H\mathbf{K}_{\bm{U}_{\pi_2}}(d\mathbf{b})^H  & \mathbf{0} \\
                                              \end{array}\right]\right)
\right)\right], \notag
\end{align}
where (a) is from $d~\log|\mathbf{X}|=\tr{(\mathbf{X}^{-1}d\mathbf{X})}$, and $\partial\log|\mathbf{X}|/\partial\mathbf{X}=(\mathbf{X}^T)^{-1}$, (b) comes from some derivatives calculation \cite{Vaze_09}\cite{Vaze_cite} and (c) is from the facts that $\tr{(\bA+\bB)}=\tr{(\bA)}+\tr{(\bB)}$.

Therefore, we have
\[\frac{\partial E_{\bm{H}_{\pi_1}}[\log|\bm{M}|]}{\partial \mathbf{b}}=E_{\bm{H}_{\pi_1}}\left[\left[
                                                                                    \begin{array}{cc}
                                                                                      \mathbf{I} & \mathbf{0} \\
                                                                                    \end{array}
                                                                                  \right]
\mathbf{M}^{-H}\left[
                                                                                                                           \begin{array}{c}
                                                                                                                             \mathbf{b} \\
                                                                                                                             \bm{H}_{\pi_1}^H \\
                                                                                                                           \end{array}
                                                                                                                         \right]\mathbf{K}_{\bm{U}_{\pi_2}} \right]=0.
                                                                                                                         \]
After some manipulations, it can be proved that without loss of generality, the solution of $\mathbf b$ satisfies the form described in Lemma \ref{lemma_KKT}, even when $|\mathbf{K}_{\bm{U}_{\pi_2}}|=0$.\\
To compute $\partial L_1/\partial \mathbf{T}_{\pi_1}=0$ and $\partial L_2/\partial \mathbf{T}_{\pi_2}=0$, with similar steps described above, we can have the final results.

\section{Appendix: Proof of Lemma \ref{lemma_low_snr}}\label{sec_app_lemma6}
\begin{proof}
We can easily observe that the function including $\mathbf{b}$ in
our rate formulae, i.e., the second term on the RHS of
\eqref{EQ_Delta}, has the same form as in \cite[(1)]{Vaze_09}. Then
from \cite[Sec. IV-B]{Vaze_09}, we know that when $ P\rightarrow 0
$, the optimal inflation factor in our problem is
$\mathbf{b}_{optimal}=\mathbf{0}$. That is, treating interference as
noise directly. Substituting $\mathbf{b}=\mathbf{0}$ into Lemma
\ref{Lemma_rate_region}, the achievable rate-region becomes

\begin{align}
R_j\!\! \leq\!\! \!\!\left( \!\!\! E_{\bm{H}_j}[\log(\!\!1\!\!+\!\!{\bm{H}^H_j}\!\!(\!\mathbf{K}_{\bm{U}_j}\!\!+\!\!\mathbf{K}_{\bm{U}_{j^{c}}}){\bm{H}_j})]\!\!-\!\!E_{\bm{H}_{j^{c}}}[\log(\!\!1\!\!+\!\!{\bm{H}^H_{j^{c}}}\mathbf{K}_{\bm{U}_j}{\bm{H}_{j^{c}}}\!)] \!\!\!-E_{\bm{H}_j}\!\!\!\!\left[\!\log \!\left| \!\!\!\!\begin{array}{cc}
                                                                                                   { \mathbf{I} }&{ \mathbf{T}_j^H\bm{H}_j }  \\
                                                                                                   { \bm{H}_j^H\mathbf{T}_j }&{ \!\!\!\!1\!\!+\!\!{\bm{H}^H_j}\!\!(\mathbf{K}_{\bm{U}_j}\!\!\!+\!\!\mathbf{K}_{\bm{U}_{j^{c}}}){\bm{H}_j} }
                                                                                                  \end{array}
   \!\!\!\! \right|   \right]        \!\!\! \right)^+\!\!\!. \label{rate_low_snr}
\end{align}
Besides, from block matrix determinant, we can get
\begin{align}
\left| \begin{array}{cc}
                                                                                                   { \mathbf{I} }&{ \mathbf{T}_1^H\bm{H}_1 }  \\
                                                                                                   { \bm{H}_1^H\mathbf{T}_1 }&{ 1+{\bm{H}_1}^H(\mathbf{K}_{\bm{U}_1}+\mathbf{K}_{\bm{U}_2}){\bm{H}_1} }
                                                                                                  \end{array}
    \right|
=1+\bm{H}_1^H\mathbf{K}_{\bm{U}_2}\bm{H}_1.
\end{align}

Let $\mathbf{K}_{\bm{U}_1}=\alpha P_T \mathbf{K}_1$ and $\mathbf{K}_{\bm{U}_2}=(1-\alpha) P_T \mathbf{K}_2$, where we define $\tr{(\mathbf{K}_1)}=\tr{(\mathbf{K}_2)}=1$. Then with $j=1$, (\ref{rate_low_snr}) becomes
\begin{align}
R_1 \leq & \left(  E_{\bm{H}_1}[\log(1+P_T {\bm{H}_1}^H(\alpha \mathbf{K}_1+(1-\alpha)\mathbf{K}_2){\bm{H}_1})]-E_{\bm{H}_2}[\log(1+P_T{\bm{H}_2}^H(\alpha \mathbf{K}_1)\bm{H}_2)]\right.\notag \\
&-E_{\bm{H}_1}[\log(1+P_T {\bm{H}_1}^H(1-\alpha) \mathbf{K}_2\bm{H}_1) ] \left. \right)^+\label{low_1}\\
\cong & \Big( \frac{P_T}{\ln 2}\left( \right. E_{\bm{H}_1}[\bm{H}_1^H(\alpha \mathbf{K}_1+(1-\alpha)\mathbf{K}_2)\bm{H}_1]-E_{\bm{H}_2}[{\bm{H}_2}^H(\alpha \mathbf{K}_1)\bm{H}_2]  -E_{\bm{H}_1}[{\bm{H}_1}^H(1-\alpha) \mathbf{K}_2\bm{H}_1]  \left. \right) \Big)^+ \label{low_2}\\
=& \left( \frac{P_T}{\ln 2}\left( E_{\bm{H}_1}[\bm{H}_1^H(\alpha \mathbf{K}_1)\bm{H}_1]-E_{\bm{H}_2}[\bm{H}_2^H(\alpha \mathbf{K}_1)\bm{H}_2]   \right) \right)^+,\label{low_3}
\end{align}

where (\ref{low_2}) utilizes $\lim_{x\rightarrow 0} \ln
(1+ax)=ax+O(x)$.

Similarly, we can derive $R_2$ as
\begin{align}
R_2 \cong \left( \frac{P_T}{\ln 2}\left( E_{\bm{H}_2}[\bm{H}_2^H((1-\alpha) \mathbf{K}_2)\bm{H}_2]-E_{\bm{H}_1}[\bm{H}_1^H((1-\alpha) \mathbf{K}_2)\bm{H}_1]   \right) \right)^+. \label{low_4}
\end{align}

Since $R_1$ and $R_2$ have similar structures, in the following we only derive $R_1$, and the results can be easily extended to $R_2$. Let $\bm{H}_1=\mathbf{K}_{\bm{H}_1}^{1/2}\bm{\psi}_1$ and $\bm{H}_2=\mathbf{K}_{\bm{H}_2}^{1/2}\bm{\psi}_2$, where $\bm{\psi}_1\sim CN(\mathbf{0},\mathbf{I}),\bm{\psi}_2 \sim CN(\mathbf{0},\mathbf{I})$. Then (\ref{low_3}) can be written as
\begin{align}
R_1 \cong & \left(\frac{\alpha P_T}{\ln 2} \left( E_{\bm{\psi}_1,\bm{\psi}_2}\left[\bm{\psi}_1^H\mathbf{K}_{\bm{H}_1}^{1/2}\mathbf{K}_1\mathbf{K}_{\bm{H}_1}^{1/2} \bm{\psi}_1-\bm{\psi}_2^H\mathbf{K}_{\bm{H}_2}^{1/2}\mathbf{K}_1\mathbf{K}_{\bm{H}_2}^{1/2} \bm{\psi}_2\right]  \right) \right)^+\notag\\
=& \left(\frac{\alpha P_T}{\ln 2}\left(\tr\left(\mathbf{K}_{\bm{H}_1}^{1/2}\mathbf{K}_1\mathbf{K}_{\bm{H}_1}^{1/2}\right)-\tr\left(\mathbf{K}_{\bm{H}_2}^{1/2}\mathbf{K}_1\mathbf{K}_{\bm{H}_2}^{1/2}\right)\right) \right)^+\label{low_r1_3}\\
=& \left(\frac{\alpha P_T}{\ln 2}\tr((\mathbf{K}_{\bm{H}_1}-\mathbf{K}_{\bm{H}_2})\mathbf{K}_1) \right)^+\label{low_r1_4}\\
\leq & \left( \frac{\alpha P_T}{\ln 2}\lambda_{max}(\mathbf{K}_{\bm{H}_1}-\mathbf{K}_{\bm{H}_2}) \right)^+\label{low_r1_5},
\end{align}
where (\ref{low_r1_3}) is uses the properties $\tr{(\mathbf{A}\mathbf{B})}=\tr{(\mathbf{B}\mathbf{A})}$ and $\bm{\psi}_1$ and
 $\bm{\psi}_2$ are i.i.d. with covariance matrix $\bI$; (\ref{low_r1_4}) uses the properties $\tr{(\mathbf{A}\mathbf{B})}=\tr{(\mathbf{B}\mathbf{A})}$
  again. Finally, (\ref{low_r1_5})  comes from the result of Lemma \ref{math_low} which completes the proof.
\end{proof}

\section{Appendix: Proof of Lemma \ref{math_low}}\label{sec_app_lemma7}
\begin{proof}
Let the eigen-decompositions
$\mathbf{A}=\mathbf{U}_{\mathbf{A}}\Lambda_{\mathbf{A}}\mathbf{U}_{\mathbf{A}}^H$
and
$\mathbf{M}=\mathbf{U}_{\mathbf{M}}\Lambda_{\mathbf{M}}\mathbf{U}_{\mathbf{M}}^H$,
where $\mathbf{U}_{\mathbf{A}}$ and $\mathbf{U}_{\mathbf{M}}$ are
unitary matrices and $\Lambda_{\mathbf{A}}=\diag(a_1,a_2,..,a_n)$
and $\Lambda_{\mathbf{M}}=\diag(m_1,m_2,...,m_n)$ with $a_1\geq
a_2\geq ...\geq a_n$, $m_1\geq m_2\geq ...\geq m_n\geq 0$ and
$\sum_i{m_i \leq 1}$. Then
\[
\tr(\mathbf{AM})=\tr(\mathbf{U}_{\mathbf{A}}\bm\Lambda_{\mathbf{A}}\mathbf{U}_{\mathbf{A}}^H\mathbf{U}_{\mathbf{M}}\bm\Lambda_{\mathbf{M}}\mathbf{U}_{\mathbf{M}}^H)=\tr(\mathbf{U}_{\mathbf{M}}^H\mathbf{U}_{\mathbf{A}}\bm\Lambda_{\mathbf{A}}\mathbf{U}_{\mathbf{A}}^H\mathbf{U}_{\mathbf{M}}\bm\Lambda_{\mathbf{M}})=\tr(\mathbf{U}\bm\Lambda_{\mathbf{A}}\mathbf{U}^H\bm\Lambda_{\mathbf{M}})=\tr(\mathbf{S}\bm\Lambda_{\mathbf{M}}),\]
where
$\mathbf{U}\triangleq\mathbf{U}_{\mathbf{A}}^H\mathbf{U}_{\mathbf{M}}$
is an unitary matrix, and thus
$\mathbf{S}=\mathbf{U}\bm\Lambda_{\mathbf{A}}\mathbf{U}^H$ is
Hermitian. Let $s_i$ be the $i$-th diagonal element. Hence,
$\tr(\mathbf{AM})=\tr(\mathbf{S}\bm\Lambda_{\mathbf{M}})=\sum_is_im_i\leq
\max_i(s_i)$. The equality holds if $m_k=1$, where $k=\arg
\max_{i}{s_i}$. Thus the optimal $\mathbf{M}$ is unit rank.

In the following, we solve the eigenvector of the optimal
$\mathbf{M}$. By the Schur-Horn Theorem \cite[Chapter
9]{Horn_matrix_analysis} and recall that
$\mathbf{S}=\mathbf{U}\bm\Lambda_{\mathbf{A}}\mathbf{U}^H$, we know
$\max_i~s_i\leq \max_i~a_i$. As a result, we get that
$\mathbf{S}=\mathbf{U}\bm\Lambda_{\mathbf{A}}\mathbf{U}^H=\bm\Lambda_{\mathbf{A}}$
is optimal. This happens when
$\mathbf{U}=\mathbf{U}_{\mathbf{A}}^H\mathbf{U}_{\mathbf{M}}=\mathbf{I}$,
that is, $\mathbf{U}_{\mathbf{M}}=\mathbf{U}_{\mathbf{A}}$.

From the above, we know that the eigenvector of the optimal
$\mathbf{M}$ is the one corresponding to the maximum eigenvalue of
${\mathbf{A}}$.
\end{proof}
\bibliographystyle{IEEEtran}

\renewcommand{\baselinestretch}{1}
\bibliography{IEEEabrv,SecrecyPs2}

\begin{thebibliography}{10}
\providecommand{\url}[1]{#1}
\csname url@rmstyle\endcsname
\providecommand{\newblock}{\relax}
\providecommand{\bibinfo}[2]{#2}
\providecommand\BIBentrySTDinterwordspacing{\spaceskip=0pt\relax}
\providecommand\BIBentryALTinterwordstretchfactor{4}
\providecommand\BIBentryALTinterwordspacing{\spaceskip=\fontdimen2\font plus
\BIBentryALTinterwordstretchfactor\fontdimen3\font minus
  \fontdimen4\font\relax}
\providecommand\BIBforeignlanguage[2]{{%
\expandafter\ifx\csname l@#1\endcsname\relax
\typeout{** WARNING: IEEEtran.bst: No hyphenation pattern has been}%
\typeout{** loaded for the language `#1'. Using the pattern for}%
\typeout{** the default language instead.}%
\else
\language=\csname l@#1\endcsname
\fi
#2}}

\bibitem{Liang_same_marginal}
Y.~Liang and H.~V. Poor, ``Multiple access channels with confidential
  messages,'' \emph{{IEEE} Trans. Inform. Theory}, vol.~54, pp. 976--1002, Mar.
  2008.

\bibitem{Secureconnect}
X.~Zhou, R.~K. Ganti, and J.~G. Andews, ``Secure wireless network connectivity
  with multi-antenna transmission,'' vol.~10, no.~2, pp. 425--430, Feb. 2011.

\bibitem{Wyner_wiretap}
A.~D. Wyner, ``The wiretap channel,'' \emph{Bell Syst. Tech. J.}, vol.~54, pp.
  1355--1387, 1975.

\bibitem{csiszar1978broadcast}
I.~Csisz{\'a}r and J.~Korner, ``{Broadcast channels with confidential
  messages},'' \emph{{IEEE} Trans. Inform. Theory}, vol.~24, no.~3, pp.
  339--348, 1978.

\bibitem{Shafiee_secrecy_2_2_1_J}
S.~Shafiee and S.~Ulukus, ``Towards the secrecy capacity of the {G}aussian
  {MIMO} wire-tap channel: the 2-2-1 channel,'' \emph{{IEEE} Trans. Inform.
  Theory}, vol.~55, no.~9, pp. 4033--4039, Sept. 2009.

\bibitem{Khisti_MIMOME}
A.~Khisti and G.~W. Wornell, ``Secure transmission with multiple antennas-{II}:
  The {MIMOME} wiretap channel,'' \emph{{IEEE} Trans. Inform. Theory}, vol.~56,
  no.~11, pp. 5515--5532, Nov 2010.

\bibitem{Oggier_MIMOME}
F.~Oggier and B.~Hassibi, ``The secrecy capacity of the {MIMO} wiretap
  channel,'' \emph{{IEEE} Trans. Inform. Theory}, vol.~57, no.~8, Aug. 2011.

\bibitem{Liu_MIMO_wiretap}
T.~Liu and S.~S. (Shitz), ``A note on the secrecy capacity of the
  multiple-antenna wiretap channel,'' \emph{{IEEE} Trans. Inform. Theory},
  vol.~55, no.~6, pp. 2547--2553, Jun. 2009.

\bibitem{Liang_fading_secrecy}
Y.~Liang, V.~Poor, and S.~S. (Shitz), ``Secure communication over fading
  channels,'' \emph{{IEEE} Trans. Inform. Theory}, vol.~54, no.~6, pp.
  2470--2492, Jun. 2008.

\bibitem{Khisti_MISOME}
A.~Khisti and G.~W. Wornell, ``Secure transmission with multiple antennas-{I}:
  The {MISOME} wiretap channel,'' \emph{{IEEE} Trans. Inform. Theory}, vol.~56,
  no.~7, pp. 3088--3104, July 2010.

\bibitem{Goel_AN}
S.~Goel and R.~Negi, ``Guaranteeing secrecy using artificial noise,''
  \emph{{IEEE} Trans. Wireless Commun.}, vol.~7, no.~6, pp. 2180--2189, June
  2008.

\bibitem{Pulu_ergodic}
J.~Li and A.~Petropulu, ``On ergodic secrecy rate for {G}aussian {MISO} wiretap
  channels,'' \emph{{IEEE} Trans. Wireless Commun.}, vol.~10, no.~4, pp.
  1176--1187, Apr. 2011.

\bibitem{Li_fading_secrecy_j}
Z.~Li, R.~Yates, and W.~Trappe, ``Achieving secret communication for fast
  {R}ayleigh fading channels,'' \emph{{IEEE} Trans. Wireless Commun.}, vol.~9,
  no.~9, pp. 2792 -- 2799, Sep. 2010.

\bibitem{gopala2008secrecy}
P.~Gopala, L.~Lai, and H.~El~Gamal, ``{On the secrecy capacity of fading
  channels},'' \emph{{IEEE} Trans. Inform. Theory}, vol.~54, no.~10, pp.
  4687--4698, Oct. 2008.

\bibitem{Liu_MISO}
R.~Liu and V.~Poor, ``Secrecy capacity region of a multiple-antenna {G}aussian
  broadcast channel with confidential messages,'' vol.~55, no.~3, pp.
  1235--1248, Mar. 2009.

\bibitem{Gelfand_Pinsker}
S.~I. Gelfand and M.~S. Pinsker, ``Coding for channel with random parameters,''
  \emph{Problems of control and information theory}, vol.~9, no.~1, pp. 19--31,
  1980.

\bibitem{caire_channel_with_SI}
G.~Caire and S.~Shamai, ``On the capacity of some channels with channel state
  information,'' vol.~45, no.~6, pp. 2007--2019, Sept. 1999.

\bibitem{SC_TIFS}
S.-C. Lin and P.-H. Lin, ``On ergodic secrecy capacity of multiple input
  wiretap channel with statistical {CSIT},'' \emph{IEEE Transactions on
  Information Forensics and Security}, vol.~8, no.~2, pp. 414--419, Feb. 2013.

\bibitem{Liu_SISO}
R.~Liu, I.~Maric, P.~Spasojevic, and R.~D. Yates, ``Discrete memoryless
  interference and broadcast channels with channels with confidential messages:
  Secrecy rate regions,'' \emph{{IEEE} Trans. Inform. Theory}, vol.~54, no.~6,
  pp. 2493--2507, June 2008.

\bibitem{Kim_lecture}
A.~E. Gamal and Y.~H. Kim, \emph{Lecture Notes on Network Information
  Theory}.\hskip 1em plus 0.5em minus 0.4em\relax
  http://arxiv.org/abs/1001.3404.

\bibitem{CostaDPC}
M.~H.~M. Costa, ``Writing on dirty paper,'' \emph{{IEEE} Trans. Inform.
  Theory}, vol.~29, pp. 439--441, May 1983.

\bibitem{pslin_CR}
P.-H. Lin, S.-C. Lin, C.-P. Lee, and H.-J. Su, ``Cognitive radio with partial
  channel state information at the transmitter,'' \emph{{IEEE} Trans. Wireless
  Commun.}, vol.~9, no.~11, pp. 3402--3413, Nov. 2010.

\bibitem{Petropulu_MIMOME}
J.~Li and A.~Petropulu, ``Transmitter optimization for achieving secrecy
  capacity in {G}aussian {MIMO} wiretap channels,''
  \emph{http://arxiv.org/abs/0909.2622}.

\bibitem{Gursoy_security_low_SNR}
M.~C. Gursoy, ``Secure communication in the low-{SNR} regime,'' \emph{{IEEE}
  Trans. Commun.}, vol.~60, no.~4, pp. 1114--1123, Apr. 2012.

\bibitem{Verdu_low_SNR}
S.~Verdu, ``Spectral effciency in the wideband regime,'' \emph{{IEEE} Trans.
  Inform. Theory}, vol.~48, no.~6, pp. 1319--1343, Jun. 2002.

\bibitem{Boyd_convex_optimization}
S.~Boyd and L.~Vandenberghe, \emph{Convex Optimization}.\hskip 1em plus 0.5em
  minus 0.4em\relax Cambridge University Press, 2004.

\bibitem{sclin_AN_security}
S.~C. Lin, T.~H. Chang, Y.~L. Liang, Y.~W.~P. Hong, and C.~Y. Chi, ``On the
  impact of quantized channel feedback in guaranteeing secrecy with artificial
  noise: The noise leakage problem,'' vol.~10, no.~3, pp. 901--915, Mar. 2011.

\bibitem{Secrecy_Finite}
S.~Bashar, Z.~Ding, and C.~Xiao, ``On secrecy rate analysis of {MIMO} wiretap
  channels driven by finite-alphabet input,'' \emph{{IEEE} Trans. Commun.},
  vol.~60, no.~12, pp. 3816--3825, Dec. 2012.

\bibitem{Ulukus_fixed_point}
S.~Ulukus and R.~D. Yates, ``Adaptive power control and {MMSE} interference
  suppression,'' \emph{ACM Wireless Networks}, vol.~4, no.~6, pp. 489--496,
  Nov. 1998.

\bibitem{Tse_fadingBC}
D.~Tse and R.~Yates, ``Fading broadcast channels with state information at the
  receivers,'' \emph{http://arxiv.org/abs/0904.3165}, Apr. 2009.

\bibitem{Fakoorian_MIMO_wiretap}
S.~A.~A. Fakoorian and A.~L. Swindlehurst, ``Full rank solutions for the {MIMO}
  {G}aussian wiretap channel with an average power constraint,''
  \emph{http://arxiv.org/abs/1210.4795}, Oct. 2012.

\bibitem{Vaze_09}
C.~S. Vaze and M.~K. Varanasi, ``On the achievable rate of the fading dirty
  paper channel with imperfect {CSIT},'' in \emph{Proc. IEEE 43rd Annual
  Conference on Information Sciences and Systems, John Hopkins University},
  Mar. 2009.

\bibitem{Vaze_cite}
J.~Magnus and H.~Neudecker, \emph{Matrix Differential Calculus with
  Applications in Statistics and Econometrics}.\hskip 1em plus 0.5em minus
  0.4em\relax John Wiley and Sons, Inc., 1988.

\bibitem{Horn_matrix_analysis}
R.~A. Horn and C.~R. Johnson, \emph{Matrix analysis}.\hskip 1em plus 0.5em
  minus 0.4em\relax Cambridger University Press, 1985.

\end{thebibliography}

\vspace{1cm}
\begin{figure}[hd]
\centering \epsfig{file=./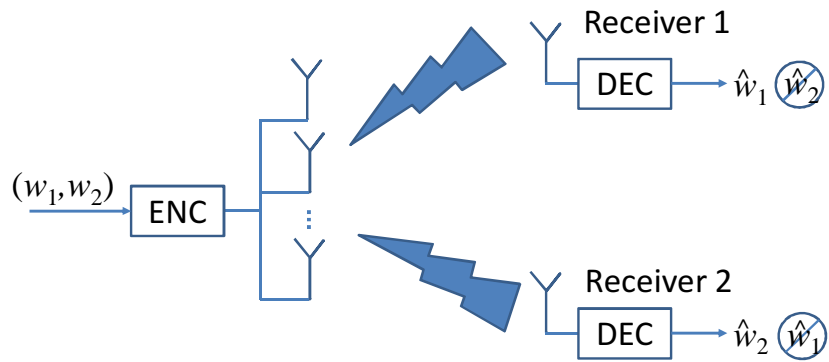, width=0.6\textwidth}
\caption{The system model of FMGBC-CM.} \label{Fig_sys}
\end{figure}

\begin{figure}[hb]
\centering \epsfig{file=./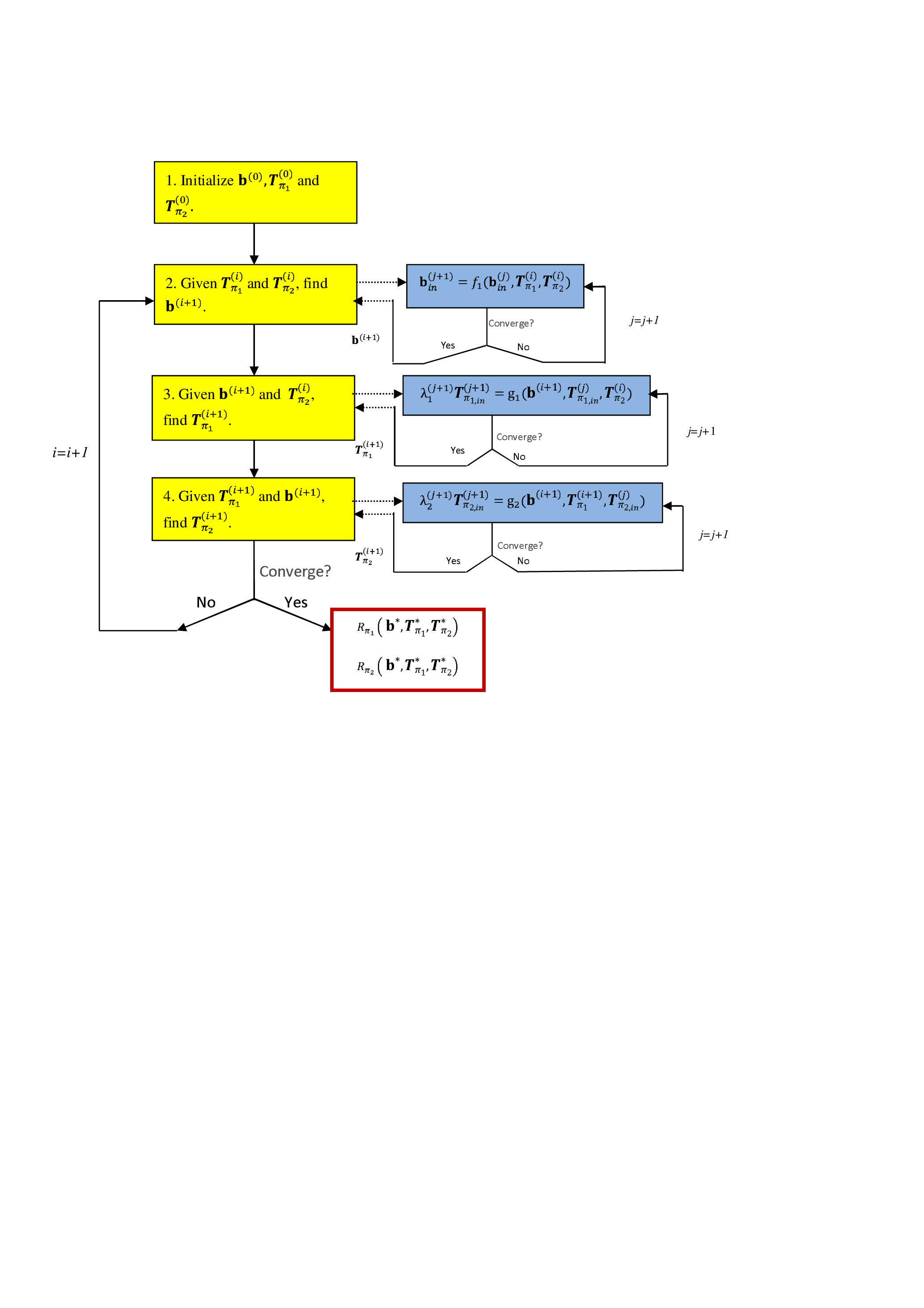 , width=0.7\textwidth}
\caption{The flow chart of the proposed algorithm.} \label{Fig_flow_chart_alg}
\end{figure}

\begin{figure}
\centering \epsfig{file=./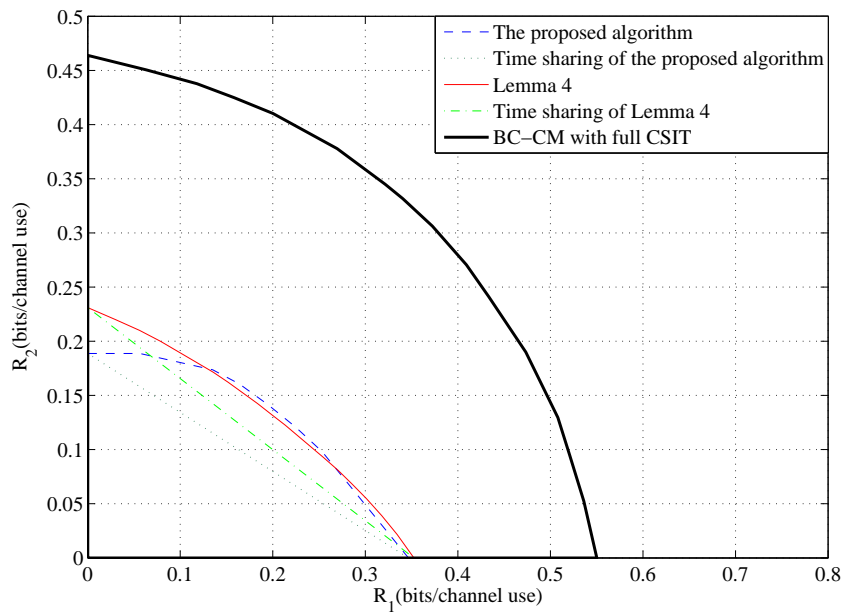 , width=0.7\textwidth}
\caption{The comparison of rate regions under fast Rayleigh fading channel with full and statistical CSIT.} \label{Fig_Rayleigh}
\end{figure}

\begin{figure}
\centering \epsfig{file=./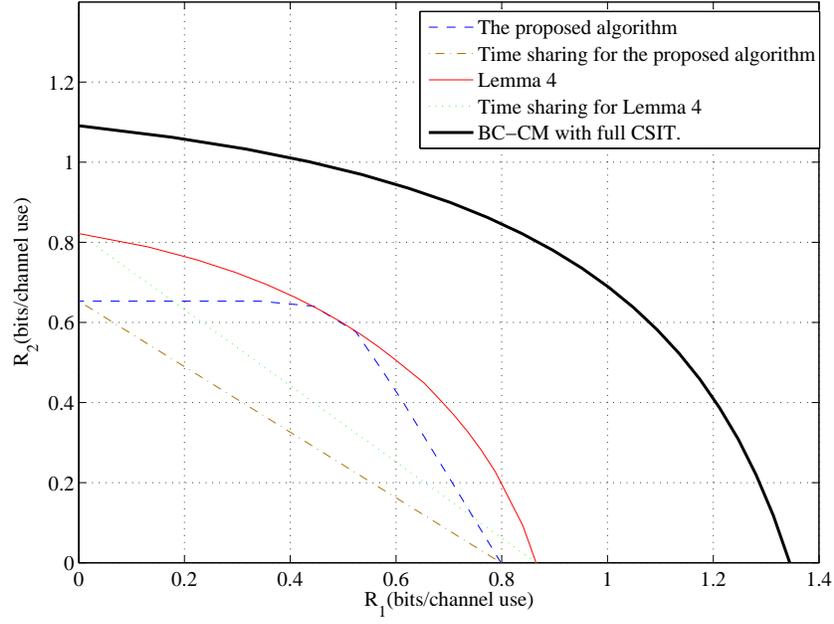 , width=0.7\textwidth}
\caption{The comparison of rate regions under fast Rician fading channel with full and statistical CSIT.} \label{Fig_Rician}
\end{figure}

\begin{figure}
\centering \epsfig{file=./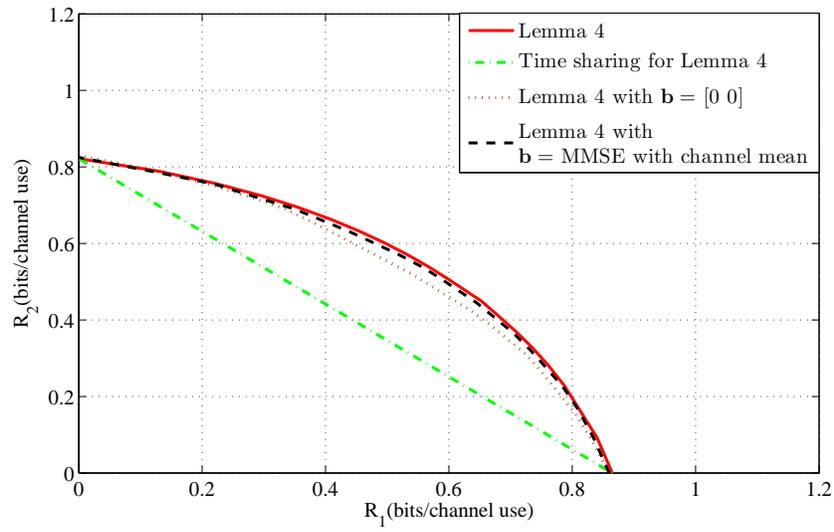 , width=0.7\textwidth}
\caption{The effect of  different choices of $\mathbf{b}$ for the proposed transmission scheme under fast Rician fading channel and statistical CSIT.} \label{Fig_Rician_b}
\end{figure}

\begin{figure}
\centering \epsfig{file=./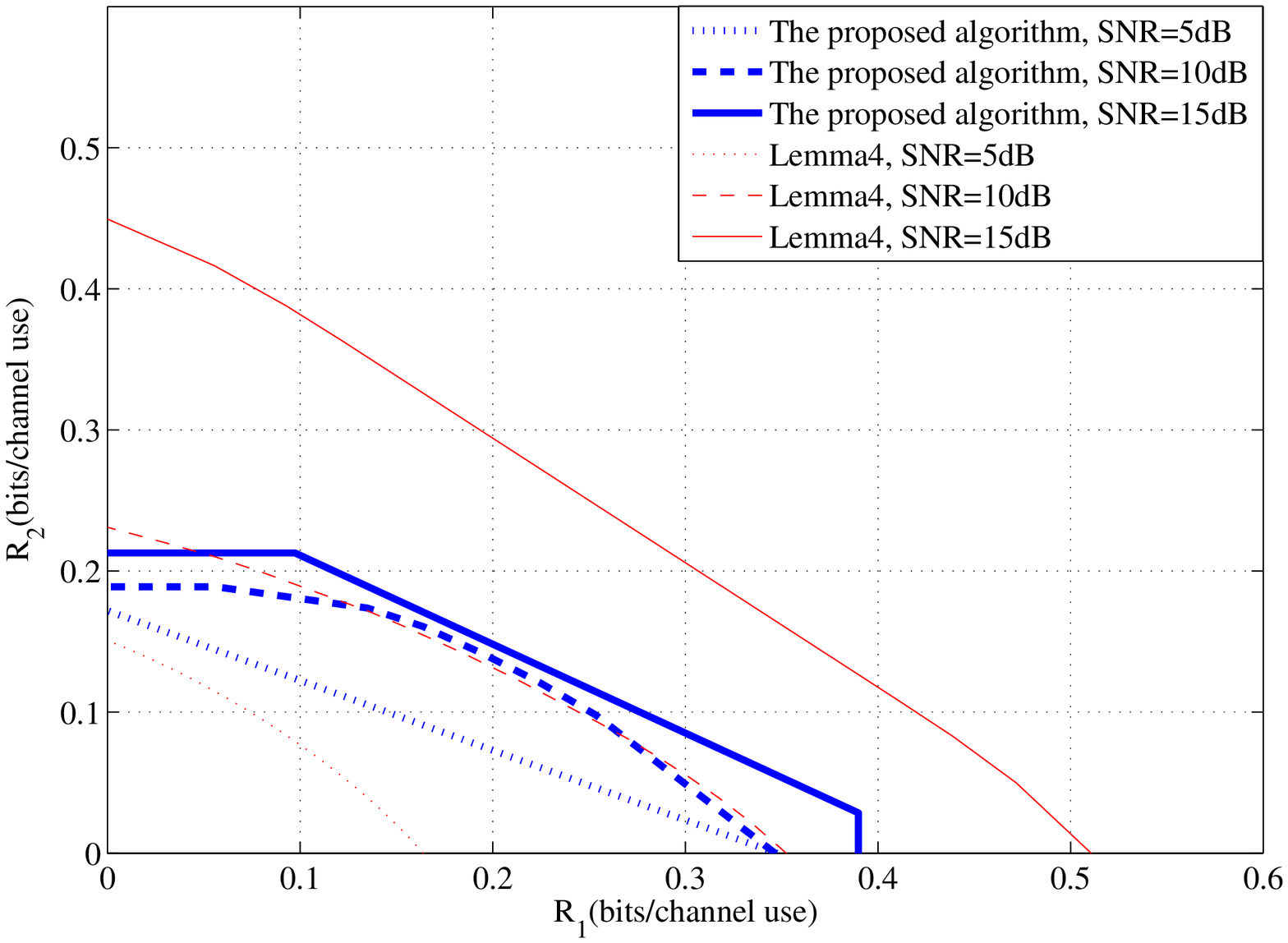 , width=0.7\textwidth}
\caption{The comparison of rate regions under fast Rayleigh fading channel with statistical CSIT and different transmit SNRs.} \label{Fig_SNR_Rayleigh}
\end{figure}

\begin{figure}
\centering \epsfig{file=./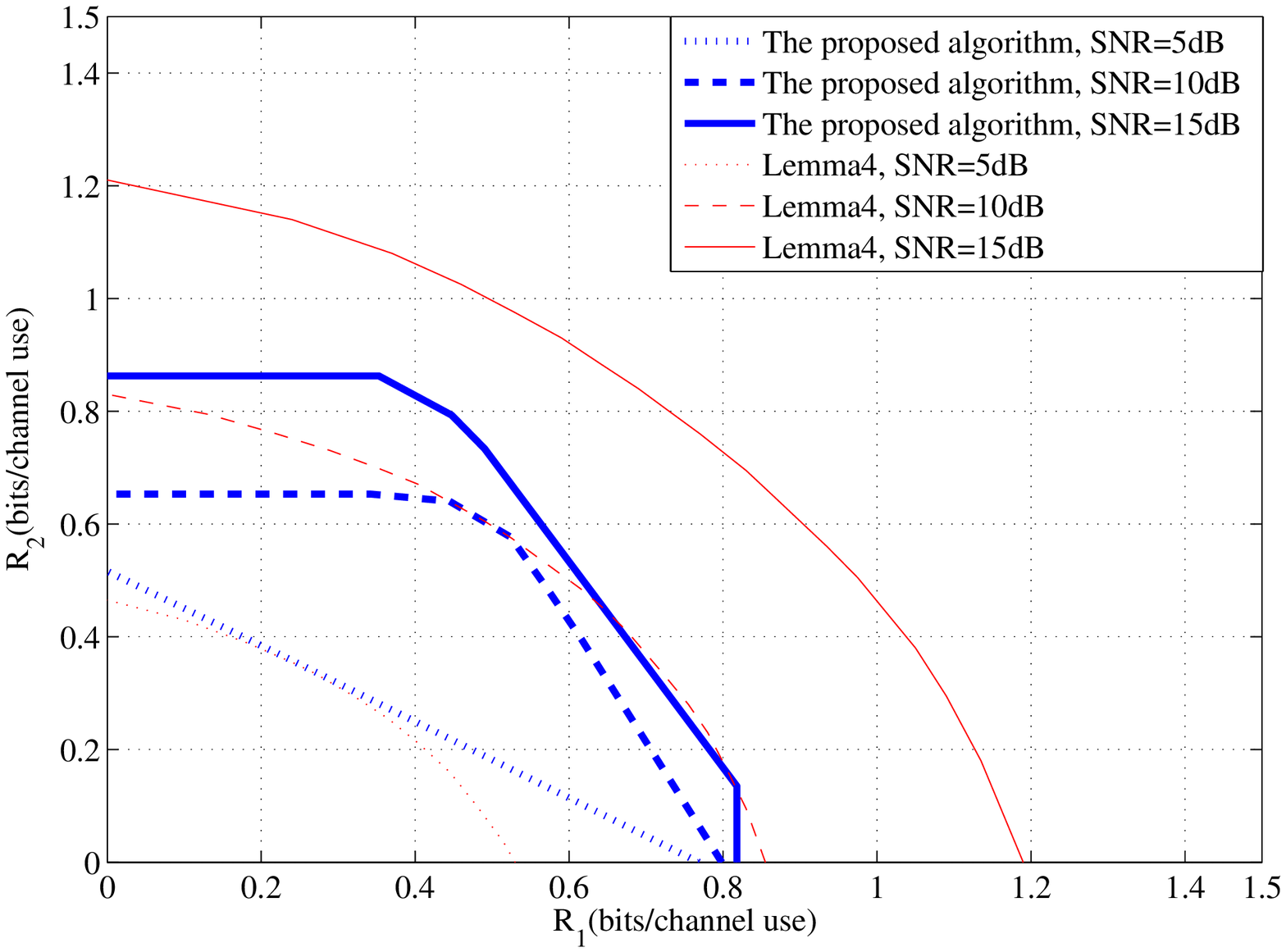 , width=0.7\textwidth}
\caption{The comparison of rate regions under fast Rician fading channel with statistical CSIT and different transmit SNRs.} \label{Fig_SNR_Rician}
\end{figure}

\begin{figure}
\centering \epsfig{file=./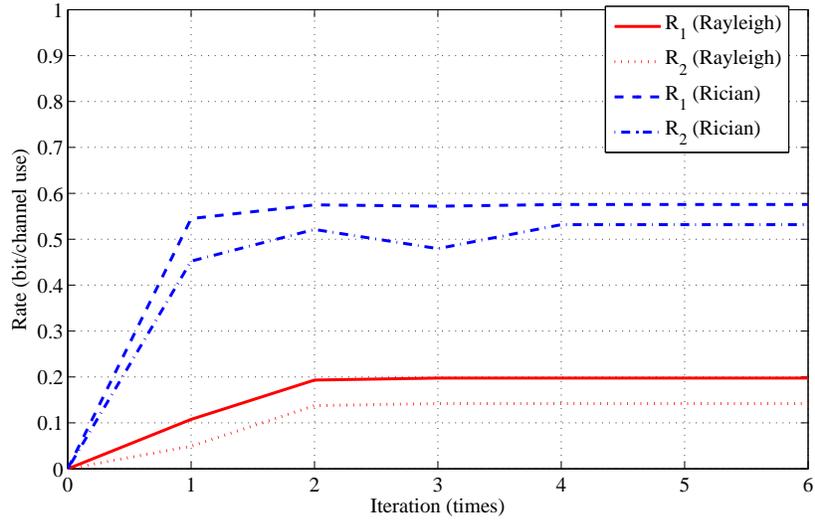 , width=0.7\textwidth}
\caption{The convergence of the proposed algorithm.}
\label{Fig_iteration}
\end{figure}

\begin{figure}
\centering \epsfig{file=./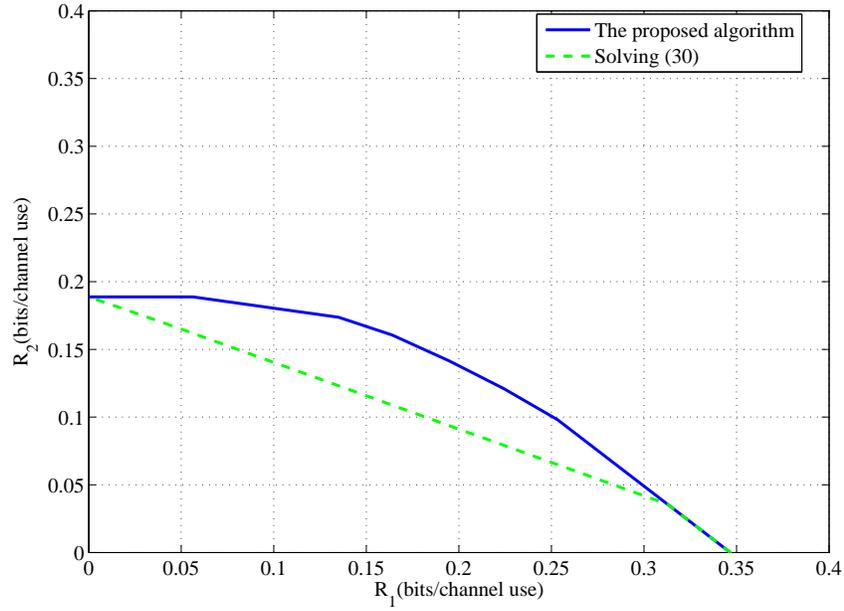 ,
width=0.7\textwidth} \caption{The comparison of the proposed
algorithm and solving (30).} \label{Rayliegh_compare}
\end{figure}

\end{document}